\newcommand{\acli}[1]{\textit{\acl{#1}}}	
\newcommand{\acdef}[1]{\define{\acl{#1}} \textup{(\acs{#1})}\acused{#1}}	
\colorlet{MyRed}{Crimson!60!DarkRed}
\colorlet{MyBlue}{DodgerBlue!75!black}
\colorlet{MyGreen}{DarkGreen}
\colorlet{MyViolet}{DarkMagenta}
\colorlet{MyLightBlue}{DodgerBlue!20}
\colorlet{MyLightGreen}{MyGreen!20}
\colorlet{PrimalColor}{MyBlue}
\colorlet{PrimalFill}{MyLightBlue}
\colorlet{DualColor}{MyRed}
\colorlet{AlertColor}{MyRed}	
\colorlet{BadColor}{MyRed}	
\colorlet{GoodColor}{MyGreen}	
\colorlet{LinkColor}{MediumBlue}	
\colorlet{MacroColor}{MyViolet}
\colorlet{RevColor}{MediumBlue}	
\newcommand{\afterhead}{.\;}	
\newcommand{\para}[1]{\smallskip\paragraph{\textbf{#1\afterhead}}}
\setlist[1]{topsep=\smallskipamount,itemsep=\smallskipamount,left=\parindent}
\setlist[2]{left=0pt}
\crefname{algo}{Algorithm}{Algorithms}
\crefname{assumption}{Assumption}{Assumptions}
\theoremstyle{plain}
\newtheorem{theorem}{Theorem}	
\newtheorem{corollary}{Corollary}	
\newtheorem{lemma}{Lemma}	
\newtheorem{proposition}{Proposition}	
\newtheorem*{corollary*}{Corollary}	
\theoremstyle{definition}
\newtheorem{definition}{Definition}	
\newtheorem*{definition*}{Definition}	
\newtheorem*{assumption*}{Assumptions}	
\newtheorem{example}{Example}	
\newtheorem*{example*}{Example}	
\theoremstyle{remark}
\newtheorem{remark}{Remark}	
\newtheorem*{remark*}{Remark}	
\def\endenv{\hfill\raisebox{1pt}{\S}}
\newcommand{\draft}[1]{#1}	
\newcommand{\define}[1]{\emph{\draft{#1}}}	
\newcommand{\good}[1]{{\color{GoodColor}#1}}	
\newcommand{\explain}[1]{\tag*{\#\:#1}}
\newcommand{\newmacro}[2]{\newcommand{#1}{\draft{#2}}}	
\newcommand{\newop}[2]{\DeclareMathOperator{#1}{\draft{#2}}}	
\DeclarePairedDelimiter{\braces}{\{}{\}}	
\DeclarePairedDelimiter{\bracks}{[}{]}	
\DeclarePairedDelimiter{\parens}{(}{)}	
\DeclarePairedDelimiter{\abs}{\lvert}{\rvert}	
\DeclarePairedDelimiter{\pospart}{[}{]_{+}}	
\DeclarePairedDelimiterX{\setdef}[2]{\{}{\}}{#1:#2}	
\DeclarePairedDelimiterXPP{\exclude}[1]{\mathopen{}\setminus}{\{}{\}}{}{#1}
\newcommand{\R}{\mathbb{R}}	
\DeclareMathOperator*{\argmax}{arg\,max}	
\DeclareMathOperator*{\intersect}{\bigcap}	
\DeclareMathOperator*{\union}{\bigcup}	
\DeclareMathOperator{\one}{\mathds{1}}	
\DeclareMathOperator{\relint}{ri}	
\DeclareMathOperator{\supp}{supp}	
\newcommand{\cf}{cf.\xspace}	
\newcommand{\eg}{e.g.,\xspace}	
\newcommand{\ie}{i.e.,\xspace}	
\newcommand{\vs}{vs.\xspace}	
\newcommand{\viz}{viz.\xspace}	
\newcommand{\textpar}[1]{\textup(#1\textup)}	
\newcommand{\dis}{\displaystyle}	
\newcommand{\txs}{\textstyle}	
\newcommand{\alt}[1]{#1'}	
\newcommand{\altalt}[1]{#1''}	
\newmacro{\dd}{\:d}	
\newcommand{\eps}{\varepsilon}	
\newcommand{\pd}{\partial}	
\newcommand{\insum}{\sum\nolimits}	
\newmacro{\const}{c}	
\newmacro{\coef}{\lambda}	
\newmacro{\param}{\theta}	
\newmacro{\params}{\Theta}	
\newmacro{\step}{\gamma}	
\newmacro{\pexp}{p}	
\newmacro{\qexp}{q}	
\newmacro{\rexp}{r}	
\newmacro{\beforestart}{0}	
\newmacro{\start}{1}	
\newmacro{\afterstart}{2}	
\newmacro{\running}{\start,\afterstart,\dotsc}	
\newmacro{\run}{n}	
\newmacro{\runalt}{s}	
\newmacro{\runaltalt}{k}	
\newmacro{\nRuns}{T}	
\newmacro{\runs}{\mathcal{\nRuns}}	
\newmacro{\state}{X}	
\newmacro{\statealt}{Y}	
\newmacro{\statealtalt}{Z}	
\newop{\Nash}{NE}	
\newop{\CE}{CE}	
\newop{\CCE}{CCE}	
\newop{\NI}{NI}	
\newop{\brep}{br}	
\newop{\reg}{Reg}	
\newop{\preg}{\overline{Reg}}	
\newop{\val}{val}	
\newmacro{\play}{i}	
\newmacro{\playalt}{j}	
\newmacro{\playaltalt}{k}	
\newmacro{\nPlayers}{N}	
\newmacro{\players}{\mathcal{I}}	
\newmacro{\pure}{\alpha}	
\newmacro{\purealt}{\beta}	
\newmacro{\purealtalt}{\gamma}	
\newmacro{\nPures}{m}	
\newmacro{\pures}{\mathcal{A}}	
\newmacro{\strat}{x}	
\newmacro{\stratalt}{\alt\strat}	
\newmacro{\strataltalt}{\altalt\strat}	
\newmacro{\strats}{\mathcal{X}}	
\newmacro{\intstrats}{\strats^{\circle}}	
\newcommand{\eq}{\sol}	
\newmacro{\loss}{\ell}	
\newmacro{\py}{\pi}	
\newmacro{\pay}{u}	
\newmacro{\payv}{v}	
\newmacro{\classpay}{u}	
\newmacro{\pot}{F}	
\newmacro{\paydiff}{c}	
\newmacro{\game}{\mathcal{G}}	
\newmacro{\gamefull}{\game(\pures,\payv)}	
\newmacro{\fingame}{\Gamma}	
\newmacro{\fingameall}{\Gamma(\players,\pures,\loss)}	
\newmacro{\vertex}{p}	
\newmacro{\vertexalt}{q}	
\newmacro{\vertexaltalt}{\altalt\vertex}	
\newmacro{\nVertices}{V}	
\newmacro{\vertices}{\mathcal{\nVertices}}	
\newmacro{\edge}{e}	
\newmacro{\edgealt}{\alt\edge}	
\newmacro{\edgealtalt}{\altalt\edge}	
\newmacro{\nEdges}{E}	
\newmacro{\edges}{\mathcal{\nEdges}}	
\newmacro{\graph}{\mathcal{G}}	
\newmacro{\graphall}{\graph(\vertices,\edges)}	
\newmacro{\vecspace}{\mathcal{V}}	
\newmacro{\subspace}{\mathcal{W}}	
\newmacro{\bvec}{e}	
\newmacro{\bvecs}{\mathcal{E}}	
\newmacro{\coord}{i}	
\newmacro{\coordalt}{j}	
\newmacro{\coordaltalt}{k}	
\newmacro{\nCoords}{d}	
\newmacro{\dims}{\nCoords}	
\newmacro{\vdim}{\nCoords}	
\newmacro{\pspace}{\mathcal{X}}	
\newmacro{\dspace}{\mathcal{Y}}	
\newmacro{\ppoint}{x}	
\newmacro{\ppointalt}{\alt\ppoint}	
\newmacro{\ppointaltalt}{\altalt\ppoint}	
\newmacro{\ppoints}{\mathcal{X}}	
\newmacro{\pstate}{X}	
\newmacro{\dpoint}{y}	
\newmacro{\dpointalt}{\alt\dpoint}	
\newmacro{\dpointaltalt}{\altalt\dpoint}	
\newmacro{\dpoints}{\mathcal{Y}}	
\newmacro{\dstate}{Y}	
\newmacro{\pvec}{z}	
\newmacro{\dvec}{w}	
\newmacro{\mat}{M}	
\newmacro{\hmat}{H}	
\newmacro{\ones}{\mathbf{1}}	
\newmacro{\eye}{I}	
\newmacro{\zer}{\mathbf{0}}	
\newmacro{\ttop}{{\!\top\!}}	
\DeclarePairedDelimiterXPP{\dnorm}[1]{}{\lVert}{\rVert}{_{\ast}}{#1}	
\DeclarePairedDelimiterXPP{\onenorm}[1]{}{\lVert}{\rVert}{_{1}}{#1}	
\DeclarePairedDelimiterXPP{\twonorm}[1]{}{\lVert}{\rVert}{_{2}}{#1}	
\DeclarePairedDelimiterXPP{\supnorm}[1]{}{\lVert}{\rVert}{_{\infty}}{#1}	
\DeclarePairedDelimiterX{\braket}[2]{\langle}{\rangle}{#1,#2}	
\DeclarePairedDelimiterX{\inner}[2]{\langle}{\rangle}{#1,#2}	
\newcommand{\defeq}{\coloneqq}	
\newcommand{\eqdef}{\eqqcolon}	
\newcommand{\from}{\colon}	
\newop{\Opt}{Opt}	
\newop{\Sol}{Sol}	
\newop{\gap}{Gap}	
\newop{\orcl}{Or}	
\newmacro{\obj}{f}	
\newmacro{\objalt}{g}	
\newmacro{\sobj}{F}	
\newmacro{\gvec}{g}	
\newmacro{\oper}{A}	
\newmacro{\vecfield}{v}	
\newcommand{\sol}[1][\point]{#1^{\ast}}	
\newmacro{\vbound}{G}	
\newmacro{\lips}{L}	
\newmacro{\strong}{\alpha}	
\newmacro{\smooth}{\beta}	
\newop{\tspace}{T}	
\newop{\tcone}{TC}	
\newop{\dcone}{\tcone^{\ast}}	
\newop{\ncone}{NC}	
\newop{\pcone}{PC}	
\newop{\hull}{\Delta}	
\newmacro{\cvx}{\mathcal{C}}	
\newmacro{\subd}{\partial}	
\newop{\Eucl}{\Pi}	
\newop{\logit}{\Lambda}	
\newop{\dkl}{KL}	
\newmacro{\hreg}{h}	
\newmacro{\hconj}{\hreg^{\ast}}	
\newmacro{\breg}{D}	
\newmacro{\mprox}{P}	
\newmacro{\mirror}{Q}	
\newmacro{\fench}{F}	
\newmacro{\hstr}{K}	
\newmacro{\hrange}{H}	
\newmacro{\proxdom}{\points^{\hreg}}	
\DeclarePairedDelimiterXPP{\proxof}[2]{\mprox_{#1}}{(}{)}{}{#2}	
\newmacro{\point}{x}	
\newmacro{\pointalt}{\alt\point}	
\newmacro{\pointaltalt}{\altalt\point}	
\newmacro{\points}{\mathcal{K}}	
\newmacro{\intpoints}{\relint\points}	
\newmacro{\base}{p}	
\newmacro{\basealt}{q}	
\newmacro{\open}{\mathcal{U}}	
\newmacro{\closed}{\mathcal{C}}	
\newmacro{\cpt}{\mathcal{K}}	
\newmacro{\nhd}{\mathcal{U}}	
\newop{\ex}{\mathbb{E}}	
\newop{\prob}{\mathbb{P}}	
\newop{\Var}{Var}	
\newop{\simplex}{\hull}	
\providecommand{\given}{\mathbin{}{\nonscript\mkern-\medmuskip}\vert{\nonscript\mkern-\medmuskip} \mathbin{}}
\DeclarePairedDelimiterXPP{\exof}[1]{\ex}{[}{]}{}{
\renewcommand\given{\nonscript\,\delimsize\vert\nonscript\,\mathopen{}} #1}
\DeclarePairedDelimiterXPP{\probof}[1]{\prob}{(}{)}{}{
\renewcommand\given{\nonscript\:\delimsize\vert\nonscript\:\mathopen{}} #1}
\DeclarePairedDelimiterXPP{\oneof}[1]{\one}{\{}{\}}{}{#1}
\newmacro{\sample}{\omega}	
\newmacro{\samples}{\Omega}	
\newmacro{\filter}{\mathcal{F}}	
\newmacro{\probspace}{(\samples,\filter,\prob)}	
\newmacro{\event}{E}       
\newmacro{\eventalt}{H}       
\newmacro{\mean}{\mu}	
\newmacro{\sdev}{\sigma}	
\newmacro{\variance}{\sdev^{2}}	
\newmacro{\error}{Z}	
\newmacro{\noise}{Z}	
\newmacro{\offset}{b}	
\newmacro{\serror}{\theta}	
\newmacro{\snoise}{\xi}	
\newmacro{\sbias}{\psi}	
\newmacro{\sbound}{M}	
\newmacro{\bbound}{B}	
\newmacro{\noisepar}{\sdev}	
\newmacro{\noisevar}{\variance}	
\newmacro{\elem}{e}	
\newmacro{\elemalt}{ee}	
\newmacro{\elems}{\mathcal{E}}	
\newmacro{\iElem}{i}	
\newmacro{\nElems}{\nPures}	
\newmacro{\partition}{\mathcal{S}}	
\newcommand{\refines}{\mathrel{}\preccurlyeq\mathrel{}}	
\newcommand{\refined}{\mathrel{}\succcurlyeq\mathrel{}}	
\newmacro{\set}{\mathcal{S}}	
\newmacro{\setalt}{\alt\set}	
\newmacro{\class}{\mathcal{S}}	
\newmacro{\classalt}{\alt\class}	
\newmacro{\classes}{\mathcal{P}}	
\newmacro{\iClass}{i}	
\newmacro{\jClass}{j}	
\newmacro{\nClasses}{m}	
\newmacro{\allClasses}{\classes\atlvl{\!\scriptscriptstyle\bullet}}	
\newmacro{\struct}{\allClasses}	
\newcommand{\classof}[2][\hspace{-.5pt}]{[#2]\atlvl{#1}}
\newmacro{\basepay}{r}
\newmacro{\range}{R}
\newmacro{\lvl}{\ell}	
\newmacro{\lvlalt}{k}	
\newmacro{\lolvl}{\lvlalt}	
\newmacro{\lolvls}{\braces{0,\dotsc,\nLvls-1}}	
\newmacro{\lvls}{\braces{0,\dotsc,\nLvls}}	
\newmacro{\nLvls}{L}	
\newcommand{\atlvl}[1]{_{#1}}
\newmacro{\source}{\pures}	
\DeclarePairedDelimiterXPP{\attof}[1]{\draft{\mathrm{lvl}}}{(}{)}{}{#1}
\newmacro{\parent}{\class}	
\newmacro{\parentset}{\set}	
\newcommand{\parentof}{\mathrel{}\vartriangleright\mathrel{}}	
\newmacro{\child}{\classalt}	
\newmacro{\childset}{\setalt}	
\newcommand{\childof}{\mathrel{}\vartriangleleft\mathrel{}}	
\newcommand{\sibling}[1][]{\overset{\cramped{#1}}{\sim}}	
\newop{\children}{ch}
\newmacro{\nChildren}{m}	
\newcommand{\desc}[1][]{\mathrel{}\prec_{#1}\mathrel{}}	
\newcommand{\anc}[1][]{\mathrel{}\succ_{#1}\mathrel{}}	
\newcommand{\desceq}[1][]{\mathrel{}\preccurlyeq_{#1}\mathrel{}}	
\newmacro{\temp}{\mu}	
\newmacro{\diff}{\delta}	
\newmacro{\learn}{\eta}	
\newmacro{\score}{y}	
\newmacro{\allScores}{\score_{\bullet}}	
\newmacro{\pf}{Z}	
\newcommand{\choice}[1][]{%
\renewcommand\given{\mathrel{}{\nonscript\mkern-\medmuskip}|{\nonscript\mkern-\medmuskip} \mathrel{}}%
\draft P_{#1}}	
\newmacro{\energy}{E}	
\newmacro{\test}{p}	
\newmacro{\dynfield}{F}	
\newcommand{\olim}[1][\strat]{\hat#1}	
\newmacro{\lyap}{W}	
\renewcommand{\time}{\draft t}	
\newmacro{\timealt}{\tau}	
\newmacro{\tstart}{0}	
\newmacro{\horizon}{T}	
\newcommand{\orbit}[2][]{\strat_{#1}(#2)}	
\newmacro{\flowmap}{\Theta}	
\newcommand{\all}{\draft{\textsf{all}}\xspace}
\newcommand{\public}{\draft{\textsf{public}}\xspace}
\newcommand{\private}{\draft{\textsf{private}}\xspace}
\newcommand{\bus}{\draft{\textsf{bus}}\xspace}
\newcommand{\metro}{\draft{\textsf{metro}}\xspace}
\newcommand{\tram}{\draft{\textsf{tram}}\xspace}
\newcommand{\car}{\draft{\textsf{car}}\xspace}
\newcommand{\bike}{\draft{\textsf{bike}}\xspace}
\newcommand{\busOne}{\draft{\textsf{bus line 1}}\xspace}
\newcommand{\bikeOne}{\draft{\textsf{bike path 1}}\xspace}
\newmacro{\redbus}{\textsf{red bus}\xspace}
\newmacro{\bluebus}{\textsf{blue bus}\xspace}
\newop{\cand}{cand}	
\newmacro{\real}{x}	
\newmacro{\switch}{\varrho}	
\newmacro{\pdist}{P}	
\newmacro{\imit}{r}	
\newmacro{\Extr}{\Theta}	
\newmacro{\extr}{\theta}	
\newmacro{\growth}{g}	
\newmacro{\rate}{\lambda}	
\newmacro{\sumrate}{\Lambda}
\newmacro{\sumdiff}{\Delta}
\newcommand{\pcand}{\pdist^{\textup{cand}}}
\newcommand{\pmeet}{\pdist^{\textup{meet}}}
\begin{document}


\title
[Nested Replicator Dynamics and Similarity-Based Learning]
{Nested Replicator Dynamics, Nested Logit Choice,\\
and Similarity-Based Learning}

\author[P.~Mertikopoulos]{Panayotis Mertikopoulos$^{\ast}$}
\address
{$^{\ast}$%
Univ. Grenoble Alpes, CNRS, Inria, Grenoble INP, LIG, 38000 Grenoble, France.}
\email{\href{mailto:panayotis.mertikopoulos@imag.fr}{panayotis.mertikopoulos@imag.fr}}

\author[W.~H.~Sandholm]{William H. Sandholm$^{\S}$}
\address
{$^{\S}$\hspace{.5pt}Department of Economics, University of Wisconsin, 1180 Observatory Drive, Madison WI 53706, USA.}
\email{\href{mailto:whs@ssc.wisc.edu}{whs@ssc.wisc.edu}}

\thanks{The original idea for this paper was due to the second author, and the main results were obtained during a visit of the first author to UW Madison in 2017.
Any mistakes or awkward turns of phrase are the sole responsbility of the first author.}


\subjclass[2020]{%
Primary;
91A22, 91A26;
secondary
37N40, 68Q32.}

\keywords{%
Nested replicator dynamics;
nested logit choice;
nested pairwise proporational imitation;
similarity-based learning;
regularized learning.}

\newacro{LHS}{left-hand side}
\newacro{RHS}{right-hand side}
\newacro{iid}[i.i.d.]{independent and identically distributed}
\newacro{lsc}[l.s.c.]{lower semi-continuous}

\newacro{DA}{dual averaging}
\newacro{IIA}{independence from irrelevant alternatives}
\newacro{EW}{exponential weights}
\newacro{NEW}{nested exponential weights}
\newacro{RL}{regularized learning}
\newacro{FTRL}{``follow the regularized leader''}
\newacro{ARUM}{additive random utility model}
\newacro{RUM}{random utility model}
\newacro{PPI}{pairwise proportional imitation}
\newacro{SPPI}{similarity-based pairwise proportional imitation}
\newacro{NPPI}{nested pairwise proportional imitation}
\newacro{RD}{replicator dynamics}
\newacro{2RD}{two-level replicator dynamics}
\newacro{NRD}{nested replicator dynamics}
\newacro{ES}{evolutionarily stable}
\newacro{ESS}{evolutionarily stable state}
\newacro{GES}{globally evolutionarily stable}
\newacro{GESS}{globally evolutionarily stable state}
\newacro{LC}{logit choice}
\newacro{NLC}{nested logit choice}
\newacro{XL}{exponential learning}
\newacro{NXL}{nested exponential learning}
\newacro{NRL}{nested regularized learning}
\newacro{HR}{Hess\-i\-an Rie\-man\-ni\-an}
\newacro{RPS}{Rock-Paper-Scissors}
\newacro{MP}{Matching Pennies}
\newacro{KL}{Kull\-back\textendash Le\-ib\-ler}
\newacro{LHS}{left-hand side}
\newacro{RHS}{right-hand side}
\newacro{NE}{Nash equilibrium}
\newacroplural{NE}[NE]{Nash equilibria}
\newacro{RE}{restricted equilibrium}
\newacroplural{RE}[RE]{restricted equilibria}

\begin{abstract}

We consider a model of learning and evolution in games whose action sets are endowed with a partition-based similarity structure intended to capture exogenous similarities between strategies.
In this model,
revising agents have a higher probability of comparing their current strategy with other strategies that they deem similar,
and they switch to the observed strategy with probability proportional to its payoff excess.
Because of this implicit bias toward similar strategies, the resulting dynamics \textendash\ which we call the \acli{NRD} \textendash\ do not satisfy any of the standard monotonicity postulates for imitative game dynamics;
nonetheless, we show that they retain the main long-run rationality properties of the replicator dynamics, albeit at quantitatively different rates.
We also show that the induced dynamics can be viewed as a stimulus-response model in the spirit of \citet{ER98}, with choice probabilities given by the \acl{NLC} rule of \citet{BA73} and \citet{McF78}.
This result generalizes an existing relation between the replicator dynamics and the \acl{EW} algorithm in online learning, and provides an additional layer of interpretation to our analysis and results.
\end{abstract}

\maketitle
\allowdisplaybreaks	
\acresetall	

\section{Introduction}
\label{sec:introduction}

In the mass-action interpretation of evolutionary game theory, a nonatomic population of strategically-minded agents interact repeatedly via a set of simple myopic rules, and the governing dynamics arise from the aggregation of these individual interactions.
Typically, these rules are described by a \emph{revision protocol} \textendash\ that is, a rule which determines how frequently agents playing a given strategy consider switching to a different one, and with what criteria.
Thus, taken together, a population game and a revision protocol collectively define an aggregate dynamical system which uniquely prescribes the course of play starting from any initial population state.

A common feature of many revision protocols is the notion of \emph{imitation:}
the revising agent observes the behavior of a randomly selected individual, and then switches to the strategy of the observed agent with a probability that may depend on the revising agent's incumbent payoff, the payoff of the observed strategy, or both.
The archetypal example of this process is the \acdef{PPI} protocol of \citet{Hel92}, where the switching probability is proportional to the payoff excess of the observed strategy over the incumbent one, and the distribution of actions in the population evolves according to the \acl{RD} of \citet{TJ78}.
Beyond this simple revision rule, different models of imitation have been considered by \citet{BW96}, \citet{BinSam97}, \citet{Sch98}, \citet{FI06,FI08}, \citet{MV22}, and many others;
for a summary, see \citet{Wei95}, \citet{San10}, \citet{HLMS22}, and references therein.

Invariably, in all these revision rules, the revising agent inspects the strategy of ``the first person they meet in the street'' \citep{Wei95}.
This premise is certainly reasonable when there are no prior beliefs or selection biases in the population;
however, when the players' strategies share a set of exogenous similarities,
an agent may be biased toward sampling similar strategies rather than dissimilar ones.
[For example, when commuting from home to work, an agent may prefer to compare routes or departure times for the same mode of transport instead of a different, less familiar one.]
We are thus led to the following question:
\begin{quote}
\centering
\itshape
How are the dynamics affected if players are more likely to observe\\
and compare strategies that are more similar to their current ones?
\end{quote}

\para{Our contributions}

To address this question, we examine a simple model of evolution and learning in games whose action sets are endowed with a partition-based \emph{similarity structure}, \ie a hierarchy of nested partitions \textendash\ or \emph{similarity classes} \textendash\ that capture the degree of exogenous similarity between distinct strategies.
In our model, the revising agent has a higher probability of comparing payoffs with agents playing similar strategies, and the greater the degree of similarity between strategies, the greater the probability of such a comparison being made.
Then, after observing their opponent's payoff,
the agent considers imitating the observed strategy only if its payoff is higher than their current one,
and they consummate the switch with probability proportional to the difference in payoffs.

When there are no exogenous similarities between strategies, this model reduces to the \acl{PPI} protocol of \citet{Hel92} mentioned above, and the population follows the \acl{RD} of \citet{TJ78}.
In our approach, the additional affinity toward similar, more familiar strategies introduces an implicit selection bias, which in turn leads to quite different aggregate dynamics.
We call the resulting law of evolution the \acdef{NRD}, in reference to the nested similarity structure that drives the players' payoff comparisons.%
\footnote{\citet{MerSan18} derived a version of these dynamics as a special case of a ``Hessian Riemannian system'', and deferred the dynamics' microfoundations and analysis to the current paper.
This link with Riemannian game dynamics can be seen as an additional layer of interpretation.}

To have a baseline in mind, the standard \acl{RD} exhibit a wide range of rationality properties,
including the elimination of strictly dominated strategies,
the local stability of \aclp{ESS},
global convergence in potential and strictly contractive games,
etc.%
\footnote{For a detailed discussion, see \cite{HS88}, \cite{Wei95} and \cite{San10,San15}.}
A subset of these properties extend to a more general class of dynamics exhibiting monotone percentage growth rates, the canonical monotonicity condition for dynamics based on imitation.%
\footnote{For early appearances of this, see \cite{Nac90,Fri91} and \cite{SZ92}.}
Nevertheless, because of the increased affinity toward similar strategies, the \acl{NRD} fail even the most basic versions of monotonicity:
unless the implicit bias toward similar strategies is very small, a strategy that performs well relative to the full collection of strategies but poorly compared to similar strategies will tend to become less common in the population.
It is therefore unclear whether the \acl{NRD} possess \emph{any} appealing rationality properties whatsoever.

Despite this handicap, we show that the \acl{NRD} provide remarkably strong support for traditional solution concepts.
More precisely, our first result (\cref{thm:NRD}) shows that they retain all the mainstay rationality properties of the replicator dynamics, including in particular that
\begin{enumerate*}
[(\itshape i\hspace*{1pt}\upshape)]
\item
strictly dominated strategies become extinct in the long run;
\item
if the dynamics converge, their limit is a \acl{NE};
\item
dynamically stable states are \aclp{NE};
\item
strict \aclp{NE} and, more generally, \acp{ESS} are dynamically stable and attracting (with global \acp{ESS} being globally attracting);
and
\item
the dynamics converge to \acl{NE} in potential and strictly contractive games.
\end{enumerate*}
However, from a quantitative viewpoint, there are visible differences in the rate at which these properties emerge, \eg as a function of whether a strategy is dominated by a similar or a dissimilar strategy.

In addition to the above, we also establish a fundamental connection between the \acl{NRD} and stimulus-response learning in the spirit of \citet{ER98}.
The precise setting of this connection is a continuous-time model of learning through reinforcement in which agents track the cumulative payoffs of their strategies and select strategies by feeding these cumulative scores into a probabilistic choice rule.
It is known from the work of \cite{Rus99} and \cite{HSV09} that if players choose mixed strategies using the logit choice rule, the evolution of these strategies follows the replicator dynamics.
In \cref{sec:learning}, we derive a considerable generalization of this result:
if players employ the \acli{NLC} rule \citep{BA73,McF78} their mixed strategies follow the \acli{NRD} (\cref{thm:NEW}).

This relation is particularly surprising in light of the fundamental differences between the two nesting constructions.
This is partially explained by our results in \cref{sec:FTRL}, where we describe a variant model of similarity-based learning, and we provide a nonrecursive representation of the \acl{NLC} rule as a ``regularized best-response'' model.
This last representation shows that the \acl{NRD} can also be seen as a special case of the widely used \ac{FTRL} family of online learning methods \citep{SS11}, providing in this way a triple equivalence between the revisionist, stimulus-response, and regularized learning viewpoints.

\para{Related work on similarity-based reasoning}

The idea that similarities play an important role for decision-making goes back at least to \citet{Luc56}, with \citet{Deb60} pointing out that realistic models of probabilistic choice should allow for violations of the principle of ``irrelevance of added alternatives'' \citep{Luc59}.
This issue is illustrated clearly by a canonical example from transportation due to \citet{McF74a}, where two similar options (a blue bus and a red bus) are each chosen less frequently than a dissimilar one (a car), even when all options are otherwise equally desirable in terms of utilities.
In this regard, the need to account for similarities when taking decisions has had a lasting impact in lasting impact in theoretical and econometric models of discrete choice \citep{McF81,McF01,BAL85,AdPT92}.

In a learning context, \citet{GS95} considered a model where agents reason by drawing analogies to similar situations encountered in the past, and they provided a concrete axiomatization of decision rules that output the ``best'' action based on its past performance in similar cases.
More recently, \citet{SteSte08} examined an analogous setting where players continually encounter new strategic, normal-form games and extrapolate beliefs from similar past situations, ultimately generating contagion of actions across different games through a ``transfer learning'' process.
This idea was taken further by \citet{Men12} who introduced a partition of all games encountered into categories, and considered a process of simultaneous learning of actions and partitions.
A crucial point where our conclusions differ with these works is that the dynamics of \citet{Men12} \emph{destabilize} strict equilibria, whereas strict equilibria \emph{are always stable} under the \acl{NRD}.

One reason for this discrepancy is that the similarities mentioned above concern events and situations encountered in the past, not the players' actions per se \textendash\ in this sense, similarities are \emph{endogenous}, not \emph{exogenous}.
In the context of extensive-form games, \citet{Jeh05} \textendash\ and, more recently, \citet{JS05,JS07} \textendash\ examined a dynamic behavioral model where agents bundle together decision nodes into distinct, static similarity classes, and agents only try to learn based on the average behavior in each class.
While the model of \citet{JS05,JS07} is related in spirit to our approach (in both cases, the notion of similarity is determined exogenously and it does not fluctuate over time), their focus is on partitions of sequences of moves in extensive-form games, which have no analogue in the partition of one-shot actions in a population game.

Finally, in a recent unpublished manuscript, \citet{BorMai20} introduced a model of learning in finite games with exogenous similarities in the signals received by the players.%
\footnote{We thank Tilman Börgers for bringing this work to our attention;
our paper's title is partially due to this work.}
These similarities need not be transitive or symmetric (so they do not necessarily comprise a partition-based structure), leading to the novel solution concept of a \emph{similarity equilibrium}.%
\footnote{When the similarities are partition-based, similarity equilibria boil down to correlated equilibria.}
\citet{BorMai20} subsequently showed that similarity-based learning only converges to similarity equilibria \textendash\ \ie if the players' learning process converges, the limit is a similarity equilibrium.
This property is reminiscent of the internal consistency of the \acl{NRD} \textendash\ \ie if the \acl{NRD} converge, they converge to a \acl{NE} \textendash\ but the setting of \citet{BorMai20} is othwerwise quite different to our own, so no further analogies can be drawn.

\section{Preliminaries}
\label{sec:prelims}

\subsection{Population games}
\label{sec:game}

Throughout this paper, we focus on single-population games, \ie the set of \emph{players} is modeled by the unit interval $\players = [0, 1]$
and
each agent selects (in a measurable way) an \emph{action} \textendash\ or \emph{pure strategy} \textendash\ from some finite set $\pures = \setdef{\pure_{\iElem}}{\iElem = 1,\dotsc,\nElems}$.%
\footnote{Our analysis extends to multi-population settings without much effort, but we do not treat this case to avoid overburdening the notation and the presentation.}
The players' payoffs are then determined by their choice of action and the population shares of each action (so any measure-zero set of players has no impact on the game's payoffs).

In more detail, if the mass of agents playing $\pure\in\pures$ is $\strat_{\pure} \in [0,1]$, the \emph{state of the population} is defined as the overall distribution of actions $\strat = (\strat_{\pure})_{\pure\in\pures}$, viewed here as an element of the $\nElems$-simplex
\begin{equation}
\label{eq:simplex}
\txs
\strats
	\equiv \simplex(\pures)
	= \setdef*{\strat\in\R_{+}^{\nElems}}{\sum_{\pure\in\pures} \strat_\pure =1}.
\end{equation}
For posterity, given a subset $\class$ of $\pures$, we will also write
\begin{align}
\label{eq:strat-class}
\strat_{\class}
	&= \sum_{\pure\in\class} \strat_{\pure}
\end{align}
for the mass of agents playing actions drawn from $\class$, and, analogously, we will write
\begin{align}
\label{eq:strat-cond}
\strat_{\classalt \vert \class}
	&= \frac{\strat_{\class\cap\classalt}}{\strat_{\class}}
	= \frac{\sum_{\pure\in\class\cap\classalt} \strat_{\pure}}{\sum_{\pure\in\class} \strat_{\pure}}
\end{align}
for the relative share of agents drawing an action from $\classalt\subseteq\pures$ within the sub-population of agents playing an action in $\class$ (obviously, $\strat_{\classalt \vert \class} = 0$ if $\class\cap\classalt = \varnothing$).%
\footnote{For simplicity, we will make no distinction between $\strat_{\{\pure\}}$ and $\strat_{\pure}$ for any individual action $\pure\in\pures$.
We also assume above that $\strat$ has full support, \ie $\strat_{\pure}>0$ for all $\pure\in\pures$.}

The nonatomic nature of the game is captured by the fact that the players' rewards only depend on the state of the population and not the individual choices of each player.
More precisely, the players' rewards are determined by an ensemble of \emph{payoff functions} $\payv_{\pure}\from\strats\to\R$, $\pure\in\pures$,  with $\payv_{\pure}(\strat)$ representing the payoff to $\pure$-strategists when the population's state is $\strat$.
Collectively,
we will write $\payv(\strat) = (\payv_{\pure}(\strat))_{\pure\in\pures}$ for the population's \emph{payoff vector} at state $\strat\in\strats$,
so the associated population mean payoff is given by
\begin{equation}
\label{eq:pay}
\pay(\strat)
	= \braket{\payv(\strat)}{\strat}
	= \sum_{\pure\in\pures} \strat_{\pure} \payv_{\pure}(\strat),
\end{equation}
with $\braket{\py}{\strat} \equiv \sum_{\pure\in\pures} \strat_{\pure} \py_{\pure}$ denoting the canonical pairing between $\py$ and $\strat$ (both viewed here as elements of $\R^{\nElems}$).
Putting all this together, a \emph{population game} is a tuple $\game \equiv \gamefull$ with $\pures$ and $\payv$ defined as above.

\begin{example}
[Symmetric random matching]
\label{ex:matching}
One of the most widely studied classes of population games is given by \emph{random matching} in finite games \citep{Wei95,San10}.
In this setting, two players are drawn uniformly at random from the population and, subsequently, they are matched to play a symmetric two-player game with payoff matrix $\mat\in\R^{\nElems \times \nElems}$.
Thus, averaging over the population, the mean payoff to $\pure$-strategists at state $\strat\in\strats$ is $\payv_{\pure}(\strat) = \sum_{\purealt\in\pures} \mat_{\pure\purealt} \strat_{\purealt}$
and the game's payoff field is $\payv(\strat) = \mat\strat$.
\endenv
\end{example}

\subsection{Solution concepts}
\label{sec:solutions}

In our analysis, we will consider an array of different solution concepts and notions for nonatomic games.
For completeness, we recall some basic definitions below:

\begin{definition}
\label{def:solutions}
Consider a population game $\game\equiv\gamefull$.
Then:
\begin{enumerate}

\item
A pure strategy $\pure\in\pures$ is \emph{strictly dominated} by $\purealt\in\pures$ if
\begin{equation}
\label{eq:dominated}
\payv_{\pure}(\strat)
	< \payv_{\purealt}(\strat)
	\qquad
	\text{for all $x\in\strats$}.
\end{equation}

\item
A population state $\eq\in\strats$ is a \acdef{NE} of $\game$ if
\begin{equation}
\label{eq:Nash}
\payv_{\pure}(\eq)
	\geq \payv_{\purealt}(\eq)
	\qquad
	\text{for all $\pure,\purealt\in\pures$ such that $\eq_{\pure}>0$}
\end{equation}
or, equivalently, if
\begin{equation}
\label{eq:Nash-var}
\braket{\payv(\eq)}{\strat - \eq}
	\leq 0
	\qquad
	\text{for all $x\in\strats$}.
\end{equation}
Finally, $\eq$ is a \acli{RE} of $\game$ if $\payv_{\pure}(\eq) = \payv_{\purealt}(\eq)$ for all $\pure,\purealt\in\supp(\eq)$.

\item
A population state $\eq\in\strats$ is an \acdef{ESS} of $\game$ if
\begin{equation}
\label{eq:ESS}
\braket{\payv(\strat)}{\strat - \eq}
	< 0
	\qquad
	\text{for all $\strat\neq\eq$ close to $\eq$},
\end{equation}
In particular, $\eq$ is a \acdef{GESS} if \eqref{eq:ESS} holds for all $\strat\in\strats\setminus\{\eq\}$.

\item
$\game$ is a \emph{potential game} if $\payv(\strat) = \nabla\pot(\strat)$ for some \emph{potential function} $\pot\from\strats\to\R$.

\item
$\game$ is \emph{strictly monotone} \textendash\ or \emph{strictly contractive} \textendash\ if
\begin{equation}
\label{eq:monotone}
\braket{\payv(\strat') - \payv(\strat)}{\stratalt - \strat}
	\leq 0
	\qquad
	\text{for all $\strat,\stratalt\in\strats$},
\end{equation}
with equality if and only if $\strat = \stratalt$.%
\footnote{%
The terminology ``contractive'' follows \citet{San15};
by contrast, \citet{HS09} and \citet{SW16} respectively use the term \emph{stable} and \emph{dissipative} for essentially the same concept.
Our choice of terminology reflects the long history of \eqref{eq:monotone} in convex analysis and optimization, where it is known as ``\emph{monotonicity}'', \cf \citet{HLMS22}, \citet{DMSV23}, and references therein.
The first author hopes that the second author would have found this compromise acceptable.
}
\end{enumerate}
\end{definition}

Some well-known facts that will come in handy later are as follows:
\begin{itemize}
\item
If $\eq$ is \acl{ES}, it is \emph{a fortiori} a \acl{NE} \citep{HSS79}.
\item
If $\game$ is potential, any local maximizer of $\pot$ is a \acl{NE} \citep{San01}.
\item
If a game admits a (strictly) concave potential, it is (strictly) monotone;
also, a strictly monotone game admits a unique \acl{NE} which is a \ac{GESS} \citep{San10}.
\end{itemize}
For an in-depth discussion, see \citet{Wei95}, \citet{San10} and references therein.

\subsection{Revision protocols and evolutionary dynamics}
\label{sec:dynamics}

Evolutionary game dynamics are commonly derived from explicit microfoundations describing the agents' efforts to increase their individual payoffs \citep{Wei95,BW03,San10,San10b}.
In this context, each agent occasionally receives opportunities to switch actions \textendash\ say, based on the rings of a Poisson alarm clock \textendash\ and, at such moments, they choose a new action by applying a \emph{revision protocol}.

Typically, a revision protocol is defined by specifying the \emph{conditional switch rates} $\switch_{\pure\purealt}(\py,\strat)$ at which an $\pure$-strategist might consider switching to strategy $\purealt$ when the population is at state $\strat\in\strats$ and the payoff vector is $\py\in\R^{\nElems}$.
In this way, for any given population game $\game\equiv\gamefull$, a revision protocol $\switch$ induces the \emph{mean dynamics}
\begin{equation}
\label{eq:MD}
\tag{MD}
\dot \strat_{\pure}
	= \sum_{\purealt\in\pures} \strat_\purealt \switch_{\purealt\pure}(\payv(\strat),\strat)
	- \strat_{\pure} \sum_{\purealt\in\pures} \switch_{\pure\purealt}(\payv(\strat),\strat).
\end{equation}
The dynamics \eqref{eq:MD} specify the absolute rate of change in the use of each action $\pure\in\pures$ as the difference between inflows \emph{to} $\pure$ from other actions and outflows \emph{from} $\pure$ to other actions.
To motivate the analysis to come, we show how this approach provides microfoundations for the \acli{RD}, the most widely studied dynamics in evolutionary game theory.%

\begin{example*}
[\Acl{PPI}]
\label{ex:PPI}
Following \citet{Hel92}, consider the \acli{PPI} protocol
\acused{PPI}
\begin{equation}
\label{eq:PPI}
\tag{PPI}
\switch_{\pure\purealt}(\py,\strat)
	= \strat_{\purealt} \pospart{\py_\purealt - \py_{\pure}}
\end{equation}
where $\pospart{\real} \defeq \max\{0,\real\}$ denotes the positive part of $\real\in\R$.
Under \eqref{eq:PPI}, a revising $\pure$-strategist meets another individual \textendash\ a \emph{mentor} or \emph{comparator} \textendash\ drawn from the population uniformly at random;
thus, if the population is at state $\strat\in\strats$, a $\purealt$-strategist is encountered with probability $\strat_{\purealt}$.
If the payoff of the revising agent is lower than the comparator's, the agent switches to the comparator strategy with probability proportional to the payoff excess $\pospart{\py_{\purealt} - \py_{\pure}}$;
otherwise, the revising agent sticks to their current choice.
Accordingly, the \emph{conditional imitation rate} of $\purealt$-strategists by $\pure$-strategists under \eqref{eq:PPI} is
\begin{equation}
\label{eq:imit-PPI}
\imit_{\pure\purealt}(\py,\strat)
	= \pospart{\py_{\purealt} - \py_{\pure}}
\end{equation}
so $\switch_{\pure\purealt} = \strat_{\purealt} \imit_{\pure\purealt}$.
Then, substituting protocol \eqref{eq:PPI} into \eqref{eq:MD} and rearranging yields the \acli{RD} of \citet{TJ78}, \viz
\begin{equation}
\label{eq:RD}
\tag{RD}
\dot \strat_{\pure}
	= \strat_{\pure} \bracks*{\payv_{\pure}(\strat) - \pay(\strat)}
\end{equation}
with $\pay(\strat) = \sum_{\pure\in\pures} \strat_{\pure} \payv_{\pure}(\strat)$ denoting the mean population payoff under $\strat$.
This derivation plays a central role in our analysis, so we will return to it several times in the sequel.%
\endenv
\end{example*}

\section{The model: Imitation driven by similarities}
\label{sec:model}

We now turn to our paper's main objective, that is, the study of how similarities between strategies could impact the players' aggregate behavior over time.
Our model for this has two main components:
\begin{enumerate*}
[(\itshape i\hspace*{1pt}\upshape)]
\item
a \emph{similarity structure} intended to capture varying degrees of similarity between strategies;
and
\item
the \acdef{NPPI} protocol, which provides a similarity-driven variant of the \ac{PPI} protocol of \citet{Hel92}.
\end{enumerate*}
Taken together, these two components define the \acdef{NRD},
which we introduce later in this section and analyze in detail in \cref{sec:analysis}.

\subsection{Similarity structures}
\label{sec:struct}

To set the stage for the sequel \textendash\ and because the notation involved can become somewhat cumbersome at times \textendash\ it will help to keep in mind as a running example a population of commuters who consider different alternatives to go from home to work each morning.
At the highest level, these alternatives could be grouped into public and private transport modes;
subsequently, different alternatives could be divvied up into smaller subgroups based on the individual transport mode chosen (bus, car, metro,\dots), the choice of route and/or departure time, etc.
The result is a tree-like grouping of alternatives into nested clusters and subclusters, which we depict in \cref{fig:transport} above.


\begin{figure}[tbp]

\begin{tikzpicture}
[scale=1.2,
empty/.style={inner sep=2pt,minimum height=3ex},
class/.style={rounded corners=2pt,draw,inner sep=2pt,minimum height=3ex},
desc/.style={text width=15em,text height=1ex,minimum height=3ex,align=left},
edgestyle/.style={-},
>=stealth]

\def\side{1}
\def\costhirty{0.8660256}
\def\cosfortyfive{0.7071068}

\small
\sf

\node [desc] (lvl0) at (5.5,0) {$\classes\atlvl{0}\colon \braces{\all}$\vphantom{pl}};
\node [desc] (lvl1) [below=2em of lvl0.south] {$\classes\atlvl{1}\colon \braces{\public,\private}$\vphantom{pl}};
\node [desc] (lvl2) [below=2em of lvl1.south] {$\classes\atlvl{2}\colon \braces{\bus,\metro,\dotsc,\bike}$\vphantom{pl}};
\node [desc] (lvl3) [below=2em of lvl2.south] {$\classes\atlvl{3}\colon \braces{\busOne,\dotsc,\bikeOne,\dotsc}$\vphantom{pl}};

\node [class] (root) at (0,0) {transport alternatives};

\node [class] (public) [below left=2em and 3em of root.south] {\public\vphantom{pl}};
\node [class] (private) [below right=2em and 3em of root.south] {\private\vphantom{pl}};

\node [class] (bus) [below left=2em and 1em of public] {\bus\vphantom{pl}};
\node [class] (metro) [below=2em of public] {\metro\vphantom{pl}};
\node [class] (tram) [below right=2em and 1em of public] {\tram\vphantom{pl}};

\node [class] (car) [below left=2em and 0em of private] {\car\vphantom{pl}};
\node [class] (bike) [below right=2em and 1em of private] {\bike\vphantom{pl}};

\node [class] (bus1) [below left=2em and 1em of bus] {line 1\vphantom{pl}};
\node [class] (bus2) [below=2em and 0em of bus] {line 2\vphantom{pl}};
\node [class] (busX) [below right=2em and 1em of bus] {$\dotsm$\vphantom{pl}};

\node [class] (bike1) [below left=2em and 1em of bike] {path 1\vphantom{pl}};
\node [class] (bike2) [below=2em and 1em of bike] {path 2\vphantom{pl}};
\node [class] (bikeX) [below right=2em and 1em of bike] {$\dotsm$\vphantom{pl}};

\node [empty] (tramX) [below=2em of tram] {\large$\dots$};

\draw [edgestyle] (root) to (public);
\draw [edgestyle] (root) to (private);

\draw [edgestyle] (public) to (bus);
\draw [edgestyle] (public) to (tram);
\draw [edgestyle] (public) to (metro);

\draw [edgestyle] (private) to (car);
\draw [edgestyle] (private) to (bike);

\draw [edgestyle] (bus) to (bus1);
\draw [edgestyle] (bus) to (bus2);
\draw [edgestyle] (bus) to (busX);

\draw [edgestyle] (bike) to (bike1);
\draw [edgestyle] (bike) to (bike2);
\draw [edgestyle] (bike) to (bikeX);

\draw [edgestyle,dashed] (metro) to (tramX.north west);
\draw [edgestyle,dashed] (tram) to (tramX.north);
\draw [edgestyle,dashed] (car) to (tramX.north east);


\end{tikzpicture}
\caption{Grouping of alternative modes of transport by similarity.}
\label{fig:transport}
\end{figure}
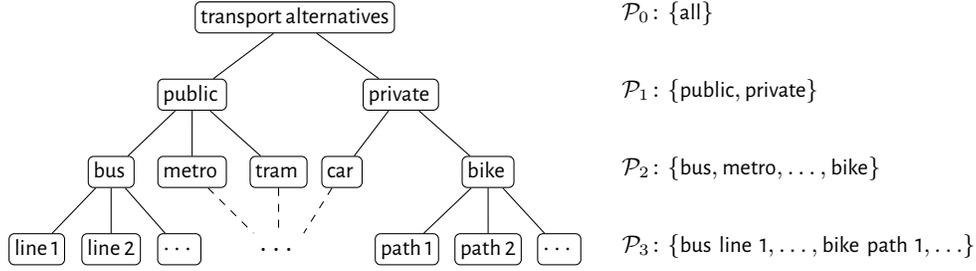


To put all this on a formal footing, we will assume that the players' set of strategies comes equipped with a hierarchy of increasingly finer partitions as follows:

\begin{definition}
\label{def:struct}
A \define{similarity structure} \textendash\ or \define{hierarchy} \textendash\ on $\pures$ is a tower of nested \emph{similarity partitions} $\classes\atlvl{\lvl}$, $\lvl=0,\dotsc,\nLvls$, of $\pures$ such that
\begin{equation}
\{\pures\}
	\eqdef \classes\atlvl{0}
	\refined \classes\atlvl{1}
	\refined \dotsb
	\refined \classes\atlvl{\nLvls}
	\defeq \setdef{\{\pure\}}{\pure \in \pures}
\end{equation}
where the notation ``$\mathcal{P}\refined \mathcal{Q}$'' means that the partition $\mathcal{Q}$ is \emph{finer} than $\mathcal{P}$, \ie every element of $\mathcal{Q}$ is a subset of some element of $\mathcal{P}$.%
\footnote{The trivial partition $\classes\atlvl{0} = \{\pures\}$ does not carry any information in itself, but it has been included for completeness and notational convenience later on.}
\end{definition}

In this definition, each partition $\classes\atlvl{\lvl}$, $\lvl=1,\dotsc,\nLvls$, captures succcessively finer traits of the elements of $\pures$:
for instance, in the commuting example of \cref{fig:transport},
the coarsest non-trivial partition $\classes\atlvl{1}$ would correspond to the distinction between public and private transport modes;
the second partition $\classes\atlvl{2} \refines \classes\atlvl{1}$ to different individual modes (bus, metro, car, \dots);
the third partition $\classes\atlvl{3} \refines \classes\atlvl{2}$ to different routes for each mode;
etc.
In this interpretation, the index $\lvl=1,\dotsc,\nLvls$ indicates the degree of similarity between two strategies:
two strategies that cannot be resolved by $\classes\atlvl{\lvl}$ are said to be \define{similar at level $\lvl$},%
\footnote{By ``resolve'' we mean here that the two strategies in question belong to different elements of $\classes\atlvl{\lvl}$.
By construction, any two distinct strategies can be resolved by $\classes\atlvl{\nLvls}$, but no strategies can be resolved by $\classes\atlvl{0}$.}
so the strategies that are similar at level $\nLvls-1$ are the \emph{most similar} (because they share all attributes of $\classes\atlvl{\nLvls-1}$), whereas the strategies that are not similar even at level $1$ are the \emph{most dissimilar} ones (since they do not share any common attributes whatsoever).
Thus, in our running example, two different bus lines have $2$ degrees of similarity, but a bike path and a bus line are not similar at any level.

For concreteness, each constituent set $\class$ of a partition $\classes\atlvl{\lvl}$ will be referred to as a \emph{similarity class of level $\lvl$}.
A similarity structure on $\pures$ may thus be represented as a disjoint union of all class/level pair $(\set,\lvl)$ for $\set\in\classes\atlvl{\lvl}$, \viz
\begin{equation}
\label{eq:struct}
\struct
	\defeq \coprod\nolimits_{\lvl=0}^{\nLvls} \classes\atlvl{\lvl}
	\equiv \union\nolimits_{\lvl=0}^{\nLvls} \setdef{(\set,\lvl)}{\set\in\classes\atlvl{\lvl}}.
\end{equation}
To simplify notation, if there is no danger of confusion, we will not differentiate between the class $\class$ and the class/level pair $(\class,\lvl)$;
by contrast, when we need to unambiguously keep track of the level index $\lvl$ of $\class$, we will write $\lvl = \attof{\class}$.

Moving forward, if a class $\parent\in\classes\atlvl{\lvl}$ contains the class $\child\in\classes\atlvl{\lvlalt}$ for some $\lvlalt>\lvl$, we will say that $\child$ is a \emph{descendant} of $\parent$ and we will write ``$\child \desc \parent$'' (resp.``$\parent \anc \child$'' when $\parent$ is an \emph{ancestor} of $\child$).
As a special case of this relation, if $\child \desc \parent$ and $\lvlalt = \lvl+1$, we will say that $\child$ is a \emph{child} of $\parent$ and we will write ``$\child \childof \parent$'' (resp.~$\parent$ is a \emph{parent} of $\child$ when $\parent \parentof \child$).
More generally, when we wish to focus on descendants sharing a certain attribute, we will write ``$\child \desceq[\lvl] \parent$'' as shorthand for the predicate ``$\child \desceq \parent$ and $\attof{\child} = \lvl$''.
In a similar vein, given an arbitrary class $\class\in\struct$, we will write $\classof[\lvl]{\class}$ for the $\lvl$-th type ancestor of $\class$, \ie the (necessarily unique) similarity class $\class\atlvl{\lvl}\in\classes\atlvl{\lvl}$ containing $\class$ (obviously, we assume here that $\attof{\class} \geq \lvl$).
Finally, we will say that two classes $\class,\classalt\in\pures$ are \emph{similar with respect to $\classes\atlvl{\lvl}$} if and only if $\classof[\lvl]{\class} = \classof[\lvl]{\classalt}$, in which case we will write $\class \sibling[\lvl] \classalt$.%
\footnote{The notations $\classof{\pure}$ and $\pure\sibling\purealt$ will be reserved for elements of $\pures$ (\ie when $\lvl = \nLvls-1$).}

Building on the above, a similarity structure on $\pures$ can be represented graphically as a rooted directed tree \textendash\ or \emph{arborescence} \textendash\ by connecting two classes $\parent,\child\in\struct$ with a directed edge $\parent\to\child$ whenever $\parent \parentof \child$ (\cf \cref{fig:tree} above).
By construction, the root of this tree is $\source$ itself,
and
the unique directed path $\source \equiv \class\atlvl{0} \parentof \class\atlvl{1} \parentof \dotsb \parentof \class\atlvl{\lvl} \equiv \class$ from $\source$ to any class $\class\in\struct$ will be referred to as the \emph{lineage} of $\class$.
For notational simplicity, we will not distinguish between $\struct$ and its graphical representation, and we will use the two interchangeably.


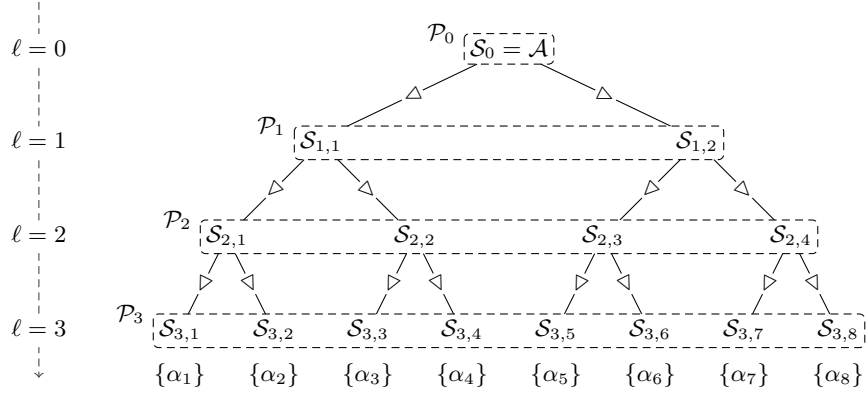
\begin{figure}[tbp]

\begin{tikzpicture}
[scale=1.25,
class/.style={inner sep=2pt},
desc/.style={rounded corners=2pt,fill=white,inner sep=2pt,align=left},
edgestyle/.style={->},
label/.style={circle,fill=white,inner sep=0pt},
nest/.style={densely dashed,rounded corners=2pt,draw=black},
connect/.style={draw=black,-},
allow upside down]

\small

\def\side{1}
\def\legendpos{-5}
\def\costhirty{0.8660256}
\def\cosfortyfive{0.7071068}

\draw [densely dashed,draw=darkgray,->] (\legendpos,.5) to (\legendpos,-3.5);
\node [desc] (lvl0) at (\legendpos,0) {$\lvl = 0 \vphantom{S_{0}}$};
\node [desc] (lvl1) at (\legendpos,-1) {$\lvl = 1 \vphantom{S_{1,1}}$};
\node [desc] (lvl2) at (\legendpos,-2) {$\lvl = 2\vphantom{S_{2,1}}$};
\node [desc] (lvl3) at (\legendpos,-3) {$\lvl = 3 \vphantom{S_{3,1}}$};

\node [class] (root) at (0,0) {$\class\atlvl{0} = \pures$};

\node [class] (11) at (-2,-1) {$\class\atlvl{1,1}$};
\node [class] (12) at (2,-1) {$\class\atlvl{1,2}$};

\node [class] (21) at (-3,-2) {$\class\atlvl{2,1}$};
\node [class] (22) at (-1,-2) {$\class\atlvl{2,2}$};
\node [class] (23) at (1,-2) {$\class\atlvl{2,3}$};
\node [class] (24) at (3,-2) {$\class\atlvl{2,4}$};

\node [class] (31) at (-3.5,-3) {$\class\atlvl{3,1}$};
\node [class] (32) at (-2.5,-3) {$\class\atlvl{3,2}$};
\node [class] (33) at (-1.5,-3) {$\class\atlvl{3,3}$};
\node [class] (34) at (-0.5,-3) {$\class\atlvl{3,4}$};
\node [class] (35) at (0.5,-3) {$\class\atlvl{3,5}$};
\node [class] (36) at (1.5,-3) {$\class\atlvl{3,6}$};
\node [class] (37) at (2.5,-3) {$\class\atlvl{3,7}$};
\node [class] (38) at (3.5,-3) {$\class\atlvl{3,8}$};

\node [class] (1) [below=1 ex of 31] {$\braces{\pure_{1}}$};
\node [class] (2) [below=1 ex of 32] {$\braces{\pure_{2}}$};
\node [class] (3) [below=1 ex of 33] {$\braces{\pure_{3}}$};
\node [class] (4) [below=1 ex of 34] {$\braces{\pure_{4}}$};
\node [class] (5) [below=1 ex of 35] {$\braces{\pure_{5}}$};
\node [class] (6) [below=1 ex of 36] {$\braces{\pure_{6}}$};
\node [class] (7) [below=1 ex of 37] {$\braces{\pure_{7}}$};
\node [class] (8) [below=1 ex of 38] {$\braces{\pure_{8}}$};

\draw [connect] (root) to node [label,midway,sloped] {$\parentof$} (11.north east);
\draw [connect] (root) to node [label,midway,sloped] {$\parentof$} (12.north west);

\draw [connect] (11) to node [label,midway,sloped] {$\parentof$} (21);
\draw [connect] (11) to node [label,midway,sloped] {$\parentof$} (22);
\draw [connect] (12) to node [label,midway,sloped] {$\parentof$} (23);
\draw [connect] (12) to node [label,midway,sloped] {$\parentof$} (24);

\draw [connect] (21) to node [label,midway,sloped] {$\parentof$} (31);
\draw [connect] (21) to node [label,midway,sloped] {$\parentof$} (32);
\draw [connect] (22) to node [label,midway,sloped] {$\parentof$} (33);
\draw [connect] (22) to node [label,midway,sloped] {$\parentof$} (34);
\draw [connect] (23) to node [label,midway,sloped] {$\parentof$} (35);
\draw [connect] (23) to node [label,midway,sloped] {$\parentof$} (36);
\draw [connect] (24) to node [label,midway,sloped] {$\parentof$} (37);
\draw [connect] (24) to node [label,midway,sloped] {$\parentof$} (38);

\draw [nest] (root.north west) node [left] {$\classes\atlvl{0}$} rectangle (root.south east);
\draw [nest] (11.north west) node [left] {$\classes\atlvl{1}$} rectangle (12.south east);
\draw [nest] (21.north west) node [left] {$\classes\atlvl{2}$} rectangle (24.south east);
\draw [nest] (31.north west) node [left] {$\classes\atlvl{3}$} rectangle (38.south east);

\end{tikzpicture}
\caption{Graphical representation of a $3$-tier similarity structure as a rooted tree.}
\label{fig:tree}
\end{figure}


\begin{example*}
Going back to the commuting example of \cref{fig:transport}, the class ``\bus'' is a child of the parent class ``\public'' (formally, $\bus\childof\public$), and it is a sibling\,/\,similar to the classes ``\metro'' and ``\tram'', both of which are also children of the class ``\public'' (so $\bus \sibling[2] \metro \sibling[2] \tram$ in the hierarchy depicted in \cref{fig:transport}).
Accordingly, the lineage of the class ``\bus'' is the two-level parent/child chain $\all \parentof \public \parentof \bus$, where the class ``\all'' bundles together all transportation modes in a single (trivial) supercluster.
\endenv
\end{example*}

\subsection{\Acl{NPPI} and the \acl{NRD}}
\label{sec:NRD}

Building on the \ac{PPI} protocol described in the previous section, we introduce below a more general family of revision protocols under which agents give additional consideration to switches to ``similar'', more familiar actions.
For example,
a commuter may be more likely to contemplate switching to a different instance of their current transport mode than a different mode altogether \citep[\eg taking a different combination of metro lines rather than switching to a bus, \cf][]{BA73,BAL85,McF81,McF01}.
Likewise,
when deciding to replace a computer or other technology with network externalities, a consumer may be more likely to consider models manufactured by the maker of their current machine than those of other manufacturers, particularly in contexts with switching costs \citep{KS94,FarKle07};
etc.

Formally, given a similarity structure $\classes\atlvl{0} \refined \classes\atlvl{1} \refined \dotsb \refined \classes\atlvl{\nLvls}$ of $\pures$ as above, we will consider a multi-stage revision protocol that unfolds as follows:

\begin{enumerate}
\item
When given an opportunity to switch strategies, a revising agent playing $\pure\in\pures$ first decides which strategies to consider as candidates for imitation.
To account for the increased familiarity of more similar strategies, the pool of candidates from $\pure$ is
\begin{equation}
\label{eq:revise}
\cand(\pure)
	= \begin{cases*}
	\classof[0]{\pure}
		&with probability $\rate_{0}$,
		\\
	\classof[1]{\pure}
		&with probability $\rate_{1}$,
		\\
	\dotso
		\\
	\classof[\nLvls-1]{\pure}
		&with probability $\rate_{\nLvls-1}$,
	\end{cases*}
\end{equation}
where
the ``intra-level'' sampling probabilities $\rate_{\lolvl}\geq0$ satisfy
\begin{equation}
\label{eq:normal}
\rate\atlvl{0} + \rate\atlvl{1} + \dotsb + \rate\atlvl{\nLvls-1}
	= 1
\end{equation}
and, as before,
\begin{equation}
\label{eq:classof}
\classof[\lolvl]{\pure}
	= \setdef{\class\atlvl{\lolvl}\in\classes\atlvl{\lolvl}}{\class\atlvl{\lolvl}\ni\pure}
\end{equation}
is the set of strategies of $\pures$ that are similar to $\pure$ at level $\lolvl$.

In words, conditioning on the degree of similarity between strategies, the revising agent considers
all possible switches \textendash\ \ie to \emph{any} strategy, irrespective of similarity with the agent's incumbent strategy \textendash\ with conditional probability $\rate_{0}$,
all switches to a strategy with at least one degree of similarity with conditional probability $\rate_{1}$,
and more generally,
all strategies with at least $\lolvl$ degrees of similarity with conditional probability $\rate_{\lolvl}$.
Specifically, in our running example, an agent taking the bus to work would first contemplate whether they want to only consider switching to another bus line ($2$ degrees of similarity),
or to some other means of public transportation (at least $1$ degree of similarity),
or to consider all possible means of commuting from home to work (\ie with no regard for similarities).

\item
Conditioned on the above choice of $\lolvl\in\lolvls$, the revising $\pure$-strategist meets another individual \textendash\ the comparator \textendash\ drawn uniformly at random from the sub-population of agents playing a strategy from the chosen similarity class $\classof[\lolvl]{\pure}$.
Hence, if the population is at state $\strat\in\strats$, the probability that a revising $\pure$-strategist encounters a $\purealt$-strategist is
\begin{align}
\label{eq:pmeet}
\pmeet_{\pure\purealt}(\strat)
	&= \rate\atlvl{0} \, \strat_{\purealt}
		+ \rate\atlvl{1} \strat_{\purealt} \frac{\oneof{\purealt\sibling[1]\pure}}{\sum_{\alt\pure\sibling[1]\pure} \strat_{\alt\pure}}
		+ \dotsb
		+ \rate\atlvl{\nLvls-1} \strat_{\purealt} \frac{\oneof{\purealt\sibling[\nLvls-1]\pure}}{\sum_{\alt\pure\sibling[\nLvls-1]\pure} \strat_{\alt\pure}}
	\notag\\
	&= \sum_{\lolvl=0}^{\nLvls-1} \rate\atlvl{\lolvl} \strat_{\purealt \given \classof[\lolvl]{\pure}}
\end{align}
where, as per \eqref{eq:strat-cond}, the notation
\begin{equation}
\label{eq:strat-cond}
\strat_{\purealt \given \classof[\lolvl]{\pure}}
	= \oneof{\purealt \sibling[\lolvl] \pure} \frac{\strat_{\purealt}}{\strat_{\classof[\lolvl]{\pure}}}
\end{equation}
stands for the relative frequency of $\purealt$-strategists within the sub-population of $\classof[\lolvl]{\pure}$-type strategists (\ie the part of the population playing a strategy with at least $\lolvl$ degrees of similarity to $\pure$).

\item
Finally, as in the case of \eqref{eq:PPI}, the revising agent observes the comparator's payoff:
if it is higher than their own, the revising agent switches to the comparator's strategy with probability proportional to the observed payoff excess;
otherwise, the revising agent skips the opportunity to switch.
\end{enumerate}

Hence, putting together all of the above yields the \acli{NPPI} protocol
\begin{equation}
\label{eq:NPPI}
\tag{NPPI}
\switch_{\pure\purealt}(\py,\strat)
	= \pmeet_{\pure\purealt}(\strat) \cdot \pospart{\py_{\purealt} - \py_{\pure}}
	= \sum_{\lolvl=0}^{\nLvls-1}
		\rate\atlvl{\lolvl}\,
		\strat_{\purealt \given \classof[\lolvl]{\pure}}
		\pospart{\py_{\purealt} - \py_{\pure}},
\end{equation}
where, to avoid trivialities, we assume that $\rate\atlvl{0} > 0$ (so all strategies are sampled with positive probability).
Then, to derive the induced dynamics, a direct substitution into \eqref{eq:MD} gives
\begin{align}
\dot\strat_{\pure}
	&= \sum_{\lolvl=0}^{\nLvls-1}
		\rate\atlvl{\lolvl} \strat_{\pure}
		\bracks*{
			\sum_{\purealt\in\pures}
				\strat_{\purealt\given\classof[\lolvl]{\pure}}
				\pospart{\payv_{\pure}(\strat) - \payv_{\purealt}(\strat)}
			- \sum_{\purealt\in\pures}
				\strat_{\purealt\given\classof[\lolvl]{\pure}}
				\pospart{\payv_{\purealt}(\strat) - \payv_{\pure}(\strat)}
		} 
\end{align}
so,
after rearranging,
we obtain the \acli{NRD}
\begin{equation}
\label{eq:NRD}
\tag{NRD}
\dot \strat_{\pure}
	= \strat_{\pure}
		\sum_{\lolvl=0}^{\nLvls-1}
		\rate\atlvl{\lolvl}\,
		\bracks{\payv_{\pure}(\strat) - \classpay_{\classof[\lolvl]{\pure}}(\strat)}
\end{equation}
where
\begin{equation}
\label{eq:pay-class}
\classpay_{\class}(\strat)
	= \sum_{\pure\in\class} \strat_{\pure \given \class} \payv_{\pure}(\strat)
	= \frac{1}{\strat_{\class}} \sum_{\pure\in\class} \strat_{\pure} \payv_{\pure}(\strat)
\end{equation}
denotes the mean payoff of the class $\class \subseteq \pures$.
The protocol \eqref{eq:NPPI} and the dynamics \eqref{eq:NRD} will be our main object of study, so
we close this section with some remarks intended to highlight their most salient features.

\begin{remark}
We begin by noting that \eqref{eq:PPI} and \eqref{eq:RD} are immediately recovered from \eqref{eq:NPPI} and \eqref{eq:NRD} respectively by taking $\rate\atlvl{0}=1$ and $\rate\atlvl{1} = \dotsb = \rate\atlvl{\nLvls-1} = 0$.
In this case, revising agents do not take into account similarities between strategies when selecting a comparator, so it stands to reason that the induced dynamics boil down to \eqref{eq:RD}.
\endenv
\end{remark}

\begin{remark}
Moving forward, it is important to note that comparison probabilities under \eqref{eq:NPPI} are always higher between similar strategies compared to dissimilar ones, and the closer the similarity, the higher the probability of carrying out a comparison.
To make this precise, define the \define{degree of similarity} between $\pure$ and $\purealt$ as
\begin{equation}
\label{eq:degree}
\deg(\pure,\purealt)
	\defeq \max\setdef{\lolvl=0,\dotsc,\nLvls-1}{\pure \sibling[\lolvl] \purealt}
\end{equation}
\ie $\pure$ and $\purealt$ are similar at level $\deg(\pure,\purealt)$ but not at level $\deg(\pure,\purealt)+1$.
Then the probability that $\purealt$ is considered by an $\pure$-strategist as a candidate for imitation is
\begin{equation}
\label{eq:pcand}
\pcand_{\pure\purealt}
	= \probof{\purealt \in \cand(\pure)}
	= \rate_{0} + \rate_{1} + \dotsb + \rate_{\deg(\pure,\purealt)}
\end{equation}
so $\pcand_{\pure\purealt}$ is increasing in $\deg(\pure,\purealt)$, just like the encounter probability $\pmeet_{\pure\purealt}$ of \eqref{eq:pmeet}.

On the other hand, the agents' imitation rates under \eqref{eq:NPPI} also depend on the payoff excess $\pospart{\py_{\purealt} - \py_{\pure}}$ of $\purealt$ over $\pure$;
more precisely, by \eqref{eq:strat-cond} we have
\begin{equation}
\label{eq:imit-NPPI}
\imit_{\pure\purealt}(\py,\strat)
	= \frac{\switch_{\pure\purealt}(\py,\strat)}{\strat_{\purealt}}
	= \frac{\pmeet_{\pure\purealt}(\strat)}{\strat_{\purealt}}
		\cdot \pospart{\py_{\purealt} - \py_{\pure}}
	= \pospart{\py_{\purealt} - \py_{\pure}}
		\cdot \!\!\sum_{\lolvl=0}^{\deg(\pure,\purealt)}\!\!
			\frac{\rate\atlvl{\lolvl}}{\strat_{\classof[\lolvl]{\pure}}}
\end{equation}
so the total imitation rate is not always higher for more similar strategies (after all, a less familiar strategy could still be a favorable candidate for imitation, even if it is encountered with lower probability).
However, for any \emph{given} value of $\pospart{\py_{\purealt} - \py_{\pure}}$, the imitation rate of $\purealt$-strategists by $\pure$-strategists is highest when $\pure$ and $\purealt$ are \emph{most} similar (\eg different metro lines), and lowest when $\pure$ and $\purealt$ are \emph{least} similar (\eg different transport modes altogether).
This increased affinity toward similar strategies is an \emph{intrinsic} feature of the agents' revision mechanism and it is due to the fact that, ceteris paribus, an agent is more likely to
compare strategies with another agent if the other agent's strategy is more similar to their own. 
\endenv
\end{remark}

\begin{remark}
Building on the previous remark, it is also natural to ask what happens if a revising agent meets another agent from the population uniformly at random (that is, $\pmeet_{\pure\purealt}(\strat) = \strat_{\purealt}$) but is \emph{extrinsically} more likely to imitate strategies that are more similar to their own.
In this case, the agents' imitation rates can be written as
\begin{equation}
\label{eq:imit-extr}
\imit_{\pure\purealt}(\py,\strat)
	= \Extr_{\pure\purealt}
		\cdot \frac{\pmeet_{\pure\purealt}(\strat)}{\strat_{\purealt}}
		\cdot \pospart{\py_{\purealt} - \py_{\pure}}
	= \Extr_{\pure\purealt} \, \pospart{\py_{\purealt} - \py_{\pure}}
\end{equation}
where the imitation coefficients $\Extr_{\pure\purealt}$ are of the form
\begin{equation}
\label{eq:imit-coeff}
\Extr_{\pure\purealt}
	= \extr_{0} + \extr_{1} + \dotsb + \extr_{\deg(\pure,\purealt)}
\end{equation}
for some $\extr_{0},\dotsc,\extr_{\nLvls-1} \geq 0$ (so $\Extr_{\pure\purealt}$ increases with the degree of similarity between $\pure$ and $\purealt$).

The expression \eqref{eq:imit-extr} for $\imit_{\pure\purealt}$ inflates the probability of imitating a similar strategy by a factor of $\Extr_{\pure\purealt}$, so the revising agent is more likely to imitate the strategy of another agent if the other agent's strategy is more similar to their own.
Mutatis mutandis, this observation also applies to the agents' imitation rates \eqref{eq:imit-NPPI} under \eqref{eq:NPPI}, but where the two models differ is that, in \eqref{eq:imit-NPPI}, the reinforcement of similar strategies arises \emph{intrinsically}, as a consequence of the fact that an agent is more likely to compare strategies with another agent if the other agent's strategy is more similar to their own.
By contrast, the reinforcement of similar strategies in \eqref{eq:imit-extr} is \emph{extrinsic}, and it occurs \emph{despite} the fact that players meet other players uniformly at random.

The difference between these two models \textendash\ intrinsic \vs extrinsic \textendash\ can also be seen in the induced dynamics.
Indeed, substituting $\switch_{\pure\purealt} = \imit_{\pure\purealt} \strat_{\purealt}$ in \eqref{eq:MD} and rearranging yields the dynamics
\begin{equation}
\label{eq:NRD-extr}
\tag{\ref*{eq:NRD}$_{\textup{extr}}$}
\dot \strat_{\pure}
	= \strat_{\pure}
		\sum_{\lolvl=0}^{\nLvls-1}
		\extr\atlvl{\lolvl} \, \strat_{\classof[\lolvl]{\pure}}
		\bracks{\payv_{\pure}(\strat) - \classpay_{\classof[\lolvl]{\pure}}(\strat)}
\end{equation}
where the subscript ``$\textrm{extr}$'' alludes to the extrinsic nature of the model \eqref{eq:imit-extr}.
Compared to \eqref{eq:NRD}, the main factor setting \eqref{eq:NRD-extr} apart is that the state-\emph{independent} coefficients $\rate\atlvl{\lolvl}$ in the former are replaced by the state-\emph{dependent} factors $\extr\atlvl{\lolvl} \strat_{\classof[\lolvl]{\pure}}$ in the latter.
Since $\strat_{\classof[\lolvl]{\pure}} \leq 1$, this adjustment means that the impact of the intra-class revision terms $\payv_{\pure}(\strat) - \classpay_{\classof[\lolvl]{\pure}}(\strat)$ is toned down in \eqref{eq:NRD-extr} relative to \eqref{eq:NRD}.
In turn, this reflects the fact that comparisons to similar strategies occur less frequently in the extrinsic model \eqref{eq:imit-extr} rather than the intrinsic model \eqref{eq:imit-NPPI}, so the relative ranking of strategies within a similarity class is likewise less impactful in \eqref{eq:NRD-extr} than it is in \eqref{eq:NRD}.

The extrinsic dynamics \eqref{eq:NRD-extr} are clearly interesting in their own right, but a full analysis would take us too far afield, so we do not undertake it here.
\endenv
\end{remark}

\begin{remark}
It is also worth noting that summing \eqref{eq:NRD} over all alternatives in a given similarity class $\class\in\classes\atlvl{\lvl}$ yields the similarity class dynamics
\begin{equation}
\label{eq:NRD-class}
\dot\strat_{\class}
	= \sum_{\pure\in\class} \dot\strat_{\pure}
	= \strat_{\class}
		\sum_{\lolvl=0}^{\lvl-1}
		\rate\atlvl{\lolvl}\,
		\bracks{\classpay_{\class}(\strat) - \classpay_{\classof[\lolvl]{\class}}(\strat)}
\end{equation}
where, as before, $\classof[\lolvl]{\class}$ denotes the $\lolvl$-th level ancestor of $\class$.
These dynamics are formally analogous to \eqref{eq:NRD} with $\lvl$ levels of nesting instead of $\nLvls$, and with a coarser similarity hierarchy that does not differentiate actions beyond $\lvl$ degrees of similarity.
However, the average payoffs $\classpay_{\class}(\strat)$ and $\classpay_{\classof[\lolvl]{\pure}}(\strat)$ appearing in \eqref{eq:NRD-class} depend on the \emph{complete} population state $\strat \in \strats$, not only on the aggregate population shares $\strat_{\class}$.
Because of this, the evolution of the population cannot be described by \eqref{eq:NRD-class} alone if $\lvl < \nLvls$, \ie \eqref{eq:NRD-class} does not stand on its own as an autonomous dynamical system.
\endenv
\end{remark}

\begin{remark}
As a final point, we note that the per capita growth rate $\growth_{\pure}(\strat) = \dot\strat_{\pure}/\strat_{\pure}$ of $\pure\in\pures$ under \eqref{eq:NRD} is
\begin{equation}
\label{eq:grow}
\growth_{\pure}(\strat)
	= \sum_{\lolvl=0}^{\nLvls-1}
		\rate\atlvl{\lolvl}\,
		\bracks{\payv_{\pure}(\strat) - \classpay_{\classof[\lolvl]{\pure}}(\strat)}
	= \payv_{\pure}(\strat)
		- \sum_{\lolvl=0}^{\nLvls-1} \rate\atlvl{\lolvl}
			\,\classpay_{\classof[\lolvl]{\pure}}(\strat)
\end{equation}
\ie $\growth_{\pure}(\strat)$ is a convex combination of payoff differences between $\payv_{\pure}(\strat)$ and the sub-population means $\classpay_{\classof[\lolvl]{\pure}}(\strat)$ evaluated at all levels of resemblance to $\pure$, from the maximum possible degree (when $\lolvl=\nLvls-1$) to no similarities whatsoever (when $\lolvl=0$).
An important consequence of this is that, for any pair of strategies $\pure,\purealt\in\pures$, we have
\begin{equation}
\label{eq:growdiff}
\growth_{\pure}(\strat) - \growth_{\purealt}(\strat)
	= \payv_{\pure}(\strat) - \payv_{\purealt}(\strat)
		- \sum_{\lolvl=0}^{\nLvls-1} \rate\atlvl{\lolvl}
			\,\bracks{\classpay_{\classof[\lolvl]{\pure}}(\strat) - \classpay_{\classof[\lolvl]{\purealt}}(\strat)}
	\,.
\end{equation}
Clearly, if $\pure$ and $\purealt$ are \emph{siblings} \textendash\ that is, children of the same parent, so $\classof[\lolvl]{\pure} = \classof[\lolvl]{\purealt}$ for all $\lolvl= 0,\dotsc,\nLvls-1$ \textendash\ \cref{eq:growdiff} gives $\growth_{\pure}(\strat) - \growth_{\purealt}(\strat) = \payv_{\pure}(\strat) - \payv_{\purealt}(\strat)$, so the dynamics \eqref{eq:NRD} exhibit payoff-monotone growth rates within the \emph{finest} non-trivial similarity class.%
\footnote{Following \citet{San10,San15} and \citet{Wei95}, payoff monotonicity means here that $\growth_{\pure}(\strat) \geq \growth_{\purealt}(\strat)$ if and only if $\payv_{\pure}(\strat) \geq \payv_{\purealt}(\strat)$.
In view of \eqref{eq:imit-NPPI}, \citet{Wei95} would call \eqref{eq:NPPI} an ``imitative'' protocol, but this clashes with \citet[Chap.~5.4]{San10}, where a payoff monotonicity condition is appended to the definition of ``imitative'';
for a detailed discussion of
this issue,
see \citet{MV22}.}
On the other hand, if $\pure$ and $\purealt$ are not siblings, the convex combination of class payoff differences in \eqref{eq:growdiff} could be arbitrarily large relative to $\payv_{\pure}(\strat) - \payv_{\purealt}(\strat)$, and with a different sign.
Thus, in stark contrast to \eqref{eq:RD}, the nested dynamics \eqref{eq:NRD} \emph{are not payoff-monotonic}, a fact which poses significant obstacles to their analysis.
We will revisit this point in the next section.
\endenv
\end{remark}

\section{Rationality analysis}
\label{sec:analysis}

\subsection{Statement of results and discussion}
\label{sec:results}

We are now in a position to state and prove our main results concerning the long-run rationality properties of \eqref{eq:NRD}.
These are as follows:%
\footnote{We recall here very briefly that $\eq\in\strats$ is said to be
\begin{enumerate*}
[\upshape(\itshape i\hspace*{.5pt}\upshape)]
\item
\define{Lyapunov stable} (or simply \define{stable}) if every solution trajectory $\orbit{\time}$ that starts close enough to $\eq$ remains close enough to $\eq$ for all $\time\geq0$;
\item
\define{attracting} if $\lim_{\time\to\infty} \orbit{\time} = \eq$ for every solution trajectory $\orbit{\time}$ that starts close enough to $\eq$;
and
\item
\emph{asymptotically stable} if it is both stable and attracting.
\end{enumerate*}
For an introduction to the theory of dynamical systems, we refer the reader to \citet{Shu87} and \citet{HSD04}.}

\begin{theorem}
\label{thm:NRD}
Let $\game\equiv\gamefull$ be a population game, let $\struct$ be a similarity structure on $\pures$, and let $\orbit{\time}$ be an interior solution of the \acl{NRD} \eqref{eq:NRD} for $\game$.
Then:
\begin{enumerate}
[label={\arabic*.},ref=\arabic*]

\item
\label[part]{part:dom}
If $\pure \in \pures$ is strictly dominated, then $\lim_{\time\to\infty} \orbit[\pure]{\time} = 0$.

\item
\label[part]{part:stat}
A state $\eq\in\strats$ is stationary if and only if it is a \acl{RE} of $\game$.

\item
\label[part]{part:limit}
If $\orbit{\time}$ converges to $\eq$ as $\time\to\infty$, then $\eq$ is a \acl{NE} of $\game$.

\item
\label[part]{part:Lyap}
If $\eq\in\strats$ is Lyapunov stable, then it is a \acl{NE} of $\game$.

\item
\label[part]{part:ESS}
If $\eq$ is an \ac{ESS} of $\game$, it is asymptotically stable.
In particular, if $\eq$ is a \ac{GESS} of $\game$, it attracts all initial conditions with $\supp(\strat(\tstart)) \supseteq \supp(\eq)$.

\item
\label[part]{part:pot}
If $\game$ is a potential game with potential function $\pot$, then $\pot(\orbit{\time})$ is nondecreasing and $\orbit{\time}$ converges to a connected set of restricted equilibria of $\game$.

\item
\label[part]{part:mon}
If $\game$ is strictly monotone, its \textpar{necessarily unique} \acl{NE} attracts almost all initial conditions \textendash\ and, in particular, all initial conditions with full support.
\end{enumerate}
\end{theorem}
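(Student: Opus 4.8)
The plan is to sidestep the loss of payoff monotonicity flagged after \eqref{eq:growdiff} by exploiting two structural identities that are special to the nested geometry, and then to run the classical Lyapunov arguments through them. The first is a \emph{level-wise decomposition}: although the raw growth rates $\growth_{\pure}$ are not payoff-monotone, the \emph{conditional} shares obey honest replicator equations one level at a time. Differentiating $\strat_{\child} = \strat_{\child\given\parent}\,\strat_{\parent}$ and using \eqref{eq:NRD-class}, the cross terms telescope and all the shared ancestor means cancel, leaving
\[
\ddt \log \strat_{\child\given\parent}
  = \sumrate_{\lvl-1}\,\bracks{\classpay_{\child}(\strat) - \classpay_{\parent}(\strat)},
\qquad
\sumrate_{\lvl-1} \defeq \sum_{\lolvl=0}^{\lvl-1}\rate_{\lolvl},
\]
for every child $\child\childof\parent$ with $\lvl=\attof{\child}$; thus at each node the conditional shares of the children evolve by replicator dynamics driven by the \emph{class} payoffs, and within the finest classes this is ordinary payoff-monotone replicator behaviour (consistent with \eqref{eq:growdiff}). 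The second is a \emph{nested relative entropy}: anchored at a target $\eq$, set
\[
\lyap(\strat)
  = \sum_{\lvl=1}^{\nLvls}\frac{1}{\sumrate_{\lvl-1}}
    \sum_{\parent\in\classes\atlvl{\lvl-1}} \eq_{\parent}
    \sum_{\child\childof\parent} \eq_{\child\given\parent}
      \log\frac{\eq_{\child\given\parent}}{\strat_{\child\given\parent}}.
\]
Feeding the level-wise identity into $\ddt\lyap$, the factors $\sumrate_{\lvl-1}$ cancel and the sum telescopes across levels to the single clean relation
\[
\ddt \lyap(\orbit{\time}) = \braket{\payv(\strat)}{\strat - \eq},
\]
which is exactly the identity driving \eqref{eq:RD}, so here $\lyap$ plays the role that \acs{KL} divergence plays for the replicator dynamics.

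With these in hand most parts fall out as in the classical theory. For part~\ref{part:stat}, stationarity forces each $\ddt\log\strat_{\child\given\parent}=0$ on the support, so $\classpay_{\child}=\classpay_{\parent}$ along every supported edge; telescoping down each lineage to the singletons equalises all payoffs on $\supp(\eq)$, which is precisely a restricted equilibrium, and the converse is immediate from \eqref{eq:grow}. For parts~\ref{part:ESS} and~\ref{part:mon} I would use $\lyap$ directly: the \acs{ESS} inequality \eqref{eq:ESS} makes $\ddt\lyap<0$ on a punctured neighbourhood of $\eq$, yielding asymptotic stability, while for strictly monotone games the splitting $\braket{\payv(\strat)}{\strat-\eq}=\braket{\payv(\strat)-\payv(\eq)}{\strat-\eq}+\braket{\payv(\eq)}{\strat-\eq}<0$ for $\strat\neq\eq$ (by \eqref{eq:monotone} and the variational form of \acs{NE}) makes $\lyap$ a strict global Lyapunov function, hence global attraction on the face $\{\supp(\strat)\supseteq\supp(\eq)\}$; the complementary faces are invariant and null, giving the ``almost all'' clause. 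For part~\ref{part:pot} I would invoke the Hessian--Riemannian gradient structure of \eqref{eq:NRD} from \citet{MerSan18}: $\pot$ is a strict Lyapunov function, since $\ddt\pot=\braket{\payv(\strat)}{\dot\strat}$ equals the squared Riemannian norm of $\dot\strat$, and LaSalle's principle then forces convergence to a connected set of rest points, \ie of restricted equilibria by part~\ref{part:stat}.

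Parts~\ref{part:limit} and~\ref{part:Lyap} follow by combining part~\ref{part:stat} with the standard unused-better-reply argument: a convergent limit, resp. a Lyapunov stable state, is stationary hence a restricted equilibrium; if it is not Nash, some unused strategy $\pure$ is a strictly profitable deviation, and \eqref{eq:grow} together with continuity gives $\growth_{\pure}(\strat)>0$ on a neighbourhood of $\eq$ within the relevant invariant face, so $\strat_{\pure}$ increases there, contradicting either convergence to $\eq$ or stability.

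The main obstacle is part~\ref{part:dom}, where payoff monotonicity is genuinely absent. If $\pure$ is strictly dominated by $\purealt$ but the two are not siblings, \eqref{eq:growdiff} shows that $\growth_{\pure}-\growth_{\purealt}$ carries the sub-split correction $-\sum_{\lolvl>\deg(\pure,\purealt)}\rate_{\lolvl}\bracks{\classpay_{\classof[\lolvl]{\pure}}(\strat)-\classpay_{\classof[\lolvl]{\purealt}}(\strat)}$, which may be large and of either sign, so the usual cross-entropy comparison $\ddt\log(\strat_{\pure}/\strat_{\purealt})<0$ collapses. The route I would take is to pass to the dual score representation underlying \eqref{eq:NRD}: writing $\strat$ as the nested logit image of a cumulative-payoff vector with $\dot{y}_{\pure}=\payv_{\pure}(\strat)$, strict domination gives $y_{\purealt}-y_{\pure}\to+\infty$ at a linear rate, and I would then show that the nested logit map is \emph{coercive}, in that a branch whose scores are eventually outpaced by a sibling branch receives vanishing conditional probability; propagating this along the lineage where $\pure$ and $\purealt$ first separate forces $\orbit[\pure]{\time}\to0$. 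Establishing this score representation and the requisite coercivity of the nested choice map---the elimination-of-dominated-strategies property for the \acs{FTRL}/regularized-learning reading of \eqref{eq:NRD}---is the technical heart of the argument; the sibling case, by contrast, is immediate from the level-wise identity above.
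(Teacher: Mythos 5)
Your structural groundwork is sound and in fact reproduces the paper's key tool in disguise. The level-wise identity $\ddt\log\strat_{\child\given\parent}=\sumrate\atlvl{\lvl-1}\bracks{\classpay_{\child}(\strat)-\classpay_{\parent}(\strat)}$ is correct (the shared-ancestor terms in \eqref{eq:NRD-class} do cancel), and an Abel summation over levels shows that your conditional-entropy function $\lyap$ is \emph{exactly} the nested divergence $\breg_{\nLvls}(\eq,\cdot)$ of \eqref{eq:NKL} with the weights \eqref{eq:rate2diff} (use $\tfrac{1}{\sumrate\atlvl{\lvl-1}}-\tfrac{1}{\sumrate\atlvl{\lvl}}=\diff\atlvl{\lvl}$ and $\sumrate\atlvl{\nLvls-1}=1$); your identity $\ddt\lyap=\braket{\payv(\strat)}{\strat-\eq}$ is therefore \cref{prop:KL}, derived by an arguably cleaner route. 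Given this, your treatment of \cref{part:stat,part:ESS,part:mon} matches the paper's, and invoking the Hessian--Riemannian structure for \cref{part:pot} is a legitimate substitute for the paper's direct computation, which exhibits $\dot\pot$ as a sum of conditional payoff variances via Jensen's inequality.

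The first genuine gap is in \cref{part:limit,part:Lyap}. The step ``\eqref{eq:grow} together with continuity gives $\growth_{\pure}(\strat)>0$ on a neighbourhood of $\eq$'' fails twice over. First, $\growth_{\pure}$ is \emph{not} continuous at $\eq$ when some class $\classof[\lolvl]{\pure}$ is unsupported there: $\classpay_{\class}(\strat)$ has no continuous extension to $\{\strat_{\class}=0\}$, since its value near $\eq$ depends on the conditional distribution inside the dying class, which is unconstrained by proximity to $\eq$. Second, for an \emph{arbitrary} profitable deviation the conclusion is false \textendash\ this is precisely where non-monotonicity bites: if $\pure$ has a close sibling with higher payoff, the terms $\rate\atlvl{\lolvl}\bracks{\payv_{\pure}-\classpay_{\classof[\lolvl]{\pure}}}$ can be negative and dominate, so the share of a profitable deviation can \emph{shrink} near a non-Nash rest point. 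The step is repairable by taking $\pure\in\argmax_{\purealt}\payv_{\purealt}(\eq)$ and bounding $\classpay_{\class}(\strat)\leq\max_{\purealt\in\class}\payv_{\purealt}(\strat)$, but as written it is a gap; the paper avoids it entirely by applying \cref{prop:KL} to $\lyap_{\pure\purealt}=\breg_{\nLvls}(\bvec_{\pure},\cdot)-\breg_{\nLvls}(\bvec_{\purealt},\cdot)$, whose derivative is exactly $\payv_{\purealt}-\payv_{\pure}$ (\cref{lem:paydiff}), and which needs no rest-point or continuity claims at the boundary.

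The second gap is \cref{part:dom}. Your branch-propagation argument stalls at the very first step: if the class $\class^{*}$ where the lineages of $\pure$ and $\purealt$ separate contains a third strategy whose cumulative score keeps pace with $\score_{\purealt}$, then \textendash\ because the log-sum-exp aggregation \eqref{eq:score-class} makes a class score track its best member \textendash\ the branch containing $\pure$ is \emph{not} outpaced by the branch containing $\purealt$, and that third strategy need not dominate $\pure$, so one cannot restart the argument inside $\class^{*}$. A correct ``coercivity'' statement does exist via the variational representation \eqref{eq:NLC-reg}: comparing $\strat=\choice(\score)$ with the competitor obtained by moving all mass from $\pure$ to $\purealt$ gives $\strat_{\pure}\,(\score_{\purealt}-\score_{\pure})\leq2\sup_{\strats}\abs{\hreg}$, hence $\strat_{\pure}=\bigoh(1/\time)$ \textendash\ but this presupposes \cref{thm:NEW,thm:NLC} and yields only a $1/\time$ rate, far short of the exponential rate in \cref{prop:domrate}. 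The paper's argument is one line from the tool you already built: $\dot\lyap_{\pure\purealt}=\payv_{\purealt}-\payv_{\pure}\geq\paydiff$ forces $\lyap_{\pure\purealt}\to\infty$, while $\lyap_{\pure\purealt}\leq-\sum_{\lvl}\diff\atlvl{\lvl}\log\strat_{\pure}$, giving extinction at rate $\exp(-\sumrate\atlvl{\lvl}\paydiff\time)$ directly. You had the right Lyapunov function in hand and did not need to leave the primal picture.
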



Before discussing the proof of \cref{thm:NRD}, it is worth noting that, as we explained in the previous section, the dynamics \eqref{eq:NRD} are \emph{not} payoff-monotonic.
On that account, it is fairly surprising to see that they satisfy a range of properties that have long been associated with payoff monotonicity (such as the elimination of strictly dominated strategies and the stability properties of \aclp{NE}).

To illustrate the intricacies that this entails, consider the simple case where $\pure$ is strictly dominated by $\purealt$ with $\payv_{\purealt}(\strat) - \payv_{\pure}(\strat) = \paydiff > 0$ for all $\strat\in\strats$.
Then, by \eqref{eq:growdiff}, we readily get
\begin{align}
\label{eq:domdiff}
\frac{d}{dt} \log\frac{\strat_{\pure}}{\strat_{\purealt}}
	= \frac{\dot\strat_{\pure}}{\strat_{\pure}} - \frac{\dot\strat_{\purealt}}{\strat_{\purealt}}
	&= -\paydiff
		- \sum_{\lolvl=0}^{\nLvls-1} \rate\atlvl{\lolvl}
			\,\bracks{\classpay_{\classof[\lolvl]{\pure}}(\strat) - \classpay_{\classof[\lolvl]{\purealt}}(\strat)}
	\,.
\end{align}
Now, if $\pure$ and $\purealt$ are similarity siblings (that is, $\pure \sibling[\nLvls-1] \purealt$), we will have $\classof[\lolvl]{\pure} = \classof[\lolvl]{\purealt}$ for all $\lolvl=0,\dotsc,\nLvls-1$, so \eqref{eq:domdiff} gives
\begin{equation}
\label{eq:domdiff-1}
\orbit[\pure]{\time}
	\leq \exp(\offset-\paydiff\time)
\end{equation}
for some constant $\offset\in\R$ depending on the initialization of \eqref{eq:NRD}.
Thus, since \eqref{eq:domdiff-1} can be tight in the limit $\time\to\infty$ if $\orbit[\purealt]{\time}\to1$, we conclude that $\pure$ becomes extinct in the long run and the worst-case rate of extinction is $\Theta(\exp(-\paydiff\time))$.

On the other hand, if $\pure$ and $\purealt$ only share, say, $\nLvls-2$ degrees of similarity (\eg they are different means of transportation but both public, like a bus line and a metro line in our running example), \cref{eq:domdiff} instead yields
\begin{equation}
\label{eq:domdiff-2}
\frac{d}{dt} \log\frac{\strat_{\pure}}{\strat_{\purealt}}
	= -\paydiff
		- \rate_{\nLvls-1} \bracks{\classpay_{\classof[\nLvls-1]{\pure}}(\strat) - \classpay_{\classof[\nLvls-1]{\purealt}}(\strat)}
	\,.
\end{equation}
In this case, there is no a priori reason to assume that, as a group, the parent class $\classof[\nLvls-1]{\purealt}$ of $\purealt$ is doing better in terms of payoffs than the parent class $\classof[\nLvls-1]{\pure}$ of $\pure$ (for instance, it may well happen that a certain metro line is dominated by a specific bus line, while all other metro lines dominate all bus lines).
Because of this, the class payoff difference $\classpay_{\classof[\nLvls-1]{\pure}}(\strat) - \classpay_{\classof[\nLvls-1]{\purealt}}(\strat)$ in \eqref{eq:domdiff-2} could cancel out the domination margin $\paydiff$ between $\purealt$ and $\pure$, in which case
the per capita growth rate of $\pure$ could even \emph{exceed} the per capita growth rate of $\purealt$, although $\pure$ is dominated by $\purealt$.

In view of the above, the fact that dominated strategies become extinct under \eqref{eq:NRD} irrespective of similarities may appear somewhat puzzling \textendash\ and, perhaps, unexpected.
Nonetheless, the degree of similarity between strategies plays an important role in the dynamics;
as we show below, it controls \emph{the rate of extinction} of dominated strategies.

\begin{proposition}
\label{prop:domrate}
Let $\orbit{\time}$ be an interior solution of \eqref{eq:NRD}, and suppose that $\pure\in\pures$ is dominated by $\purealt\in\pures$ with a margin of $\paydiff$, that is,
\begin{equation}
\label{eq:paydiff}
\txs
\paydiff
	= \min_{\strat\in\strats} \bracks{\payv_{\purealt}(\strat) - \payv_{\pure}(\strat)}
	> 0
	\,.
\end{equation}
Then $\pure$ becomes extinct along $\orbit{\time}$ at a rate of
\begin{equation}
\label{eq:domrate}
\txs
\orbit[\pure]{\time}
	\leq \exp\parens*{\offset - \sum_{\lolvl=0}^{\lvl} \rate\atlvl{\lolvl} \cdot \paydiff\time}
\end{equation}
where
$\lvl = \deg(\pure,\purealt) \equiv \max\setdef{\lolvl=0,\dotsc,\nLvls-1}{\pure \sibling[\lolvl] \purealt}$ is the degree of similarity between $\pure$ and $\purealt$,
and
$\offset \in \R$ is a constant depending on $\orbit{\tstart}$.
\end{proposition}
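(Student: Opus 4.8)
The plan is to sidestep the failure of payoff monotonicity by constructing a one-dimensional comparison functional whose time-derivative collapses, after a deliberate telescoping, to the single quantity $\payv_\pure(\strat) - \payv_\purealt(\strat)$ that the domination hypothesis \eqref{eq:paydiff} controls. The naive candidate $\log(\strat_\pure/\strat_\purealt)$ does not suffice: by \eqref{eq:growdiff} its derivative carries the class-payoff differences $\classpay_{\classof[\lolvl]{\pure}} - \classpay_{\classof[\lolvl]{\purealt}}$ at every level $\lolvl$ beyond which the lineages of $\pure$ and $\purealt$ have already split, and these are neither sign-definite nor bounded by $\paydiff$. The remedy is to weight the levels so that the uncontrolled terms cancel in pairs.

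Concretely, write $\lvl = \deg(\pure,\purealt)$, so $\classof[\lolvl]{\pure} = \classof[\lolvl]{\purealt}$ for $\lolvl \leq \lvl$ and the lineages diverge at level $\lvl+1$, and set $\sumrate_\lolvl = \sum_{j=0}^{\lolvl-1}\rate_j$; by \eqref{eq:normal} and $\rate_0 > 0$ the $\sumrate_\lolvl$ are nondecreasing with $\sumrate_\nLvls = 1$ and $\sumrate_\lolvl \geq \rate_0 > 0$. The first step is to record the dynamics of a single conditional share: subtracting the similarity-class dynamics \eqref{eq:NRD-class} for $\classof[\lolvl]{\pure}$ and $\classof[\lolvl-1]{\pure}$ gives
\[
\frac{d}{dt}\log\strat_{\classof[\lolvl]{\pure}\given\classof[\lolvl-1]{\pure}}
	= \sumrate_\lolvl \bracks{\classpay_{\classof[\lolvl]{\pure}}(\strat) - \classpay_{\classof[\lolvl-1]{\pure}}(\strat)},
\]
and likewise for $\purealt$. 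I would then introduce the functional
\[
\lyap(\strat)
	= \sum_{\lolvl=\lvl+1}^{\nLvls} \frac{1}{\sumrate_\lolvl}
		\bracks*{\log\strat_{\classof[\lolvl]{\pure}\given\classof[\lolvl-1]{\pure}} - \log\strat_{\classof[\lolvl]{\purealt}\given\classof[\lolvl-1]{\purealt}}},
\]
whose weights $1/\sumrate_\lolvl$ are tuned precisely to cancel the factors $\sumrate_\lolvl$ above, so that each level contributes $\bracks{\classpay_{\classof[\lolvl]{\pure}} - \classpay_{\classof[\lolvl-1]{\pure}}} - \bracks{\classpay_{\classof[\lolvl]{\purealt}} - \classpay_{\classof[\lolvl-1]{\purealt}}}$. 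This sum telescopes (the intermediate class payoffs cancel, the shared level-$\lvl$ ancestor gives $\classpay_{\classof[\lvl]{\pure}} = \classpay_{\classof[\lvl]{\purealt}}$, and the finest class $\classof[\nLvls]{\pure} = \{\pure\}$ contributes $\payv_\pure$), leaving
\[
\frac{d}{dt}\lyap
	= \payv_\pure(\strat) - \payv_\purealt(\strat)
	\leq -\paydiff,
\]
so that $\lyap(\orbit{\time}) \leq \lyap(\orbit{\tstart}) - \paydiff\,\time$.

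It then remains to convert this linear decay into a bound on $\strat_\pure$. Re-expanding the conditional shares as ratios of class masses rewrites $\lyap$ as a combination $\sum_{\lolvl=\lvl+1}^{\nLvls} \mu_\lolvl\bracks{\log\strat_{\classof[\lolvl]{\pure}} - \log\strat_{\classof[\lolvl]{\purealt}}}$, where an Abel-summation bookkeeping yields $\mu_\lolvl = 1/\sumrate_\lolvl - 1/\sumrate_{\lolvl+1} \geq 0$ (with $1/\sumrate_{\nLvls+1} \defeq 0$), hence $\mu_\nLvls = 1$ and $\sum_{\lolvl=\lvl+1}^{\nLvls}\mu_\lolvl = 1/\sumrate_{\lvl+1}$. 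Since every class mass lies in $(0,1]$, I would bound the $\purealt$-terms by $\log\strat_{\classof[\lolvl]{\purealt}} \leq 0$ and — crucially — use the lineage monotonicity $\strat_{\classof[\lolvl]{\pure}} \geq \strat_\pure$ (a larger class carries at least as much mass as $\pure$) to get $\log\strat_{\classof[\lolvl]{\pure}} \geq \log\strat_\pure$. Together these give $\lyap \geq \parens*{\sum_{\lolvl}\mu_\lolvl}\log\strat_\pure = \tfrac{1}{\sumrate_{\lvl+1}}\log\strat_\pure$, whence $\log\strat_\pure \leq \sumrate_{\lvl+1}\,\lyap$. Reading this along the orbit and combining with the decay of $\lyap$ gives $\log\orbit[\pure]{\time} \leq \sumrate_{\lvl+1}\parens{\lyap(\orbit{\tstart}) - \paydiff\,\time}$; since $\sumrate_{\lvl+1} = \sum_{\lolvl=0}^{\lvl}\rate_\lolvl$, this is exactly \eqref{eq:domrate} with $\offset = \sumrate_{\lvl+1}\,\lyap(\orbit{\tstart})$.

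The hard part is the choice of the weights $1/\sumrate_\lolvl$: the whole argument hinges on recognizing that the non-monotone, level-by-level class-payoff differences can be forced to telescope, and that the same weights simultaneously produce the clean cumulative factor $\sumrate_{\lvl+1}$ in the extraction step. The remaining ingredients — the conditional-share identity, the signs of the auxiliary logarithms, and the monotonicity of masses along a lineage — are routine once $\lyap$ is in hand; the only assumptions actually used are $\rate_0 > 0$ (so each $\sumrate_\lolvl$ is strictly positive) and interiority of $\orbit{\time}$ (so $\lyap$ stays finite along it).
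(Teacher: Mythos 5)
Your proof is correct and is essentially the paper's own argument: after the Abel summation your weights $\mu_{\lolvl}=1/\sumrate_{\lolvl}-1/\sumrate_{\lolvl+1}$ are exactly the coefficients $\diff\atlvl{\lolvl}$ of \eqref{eq:rate2diff}, so your functional $\lyap$ is (up to sign) the difference $\breg_{\nLvls}(\bvec_{\pure},\cdot)-\breg_{\nLvls}(\bvec_{\purealt},\cdot)$ of nested \acs{KL} divergences used in \cref{lem:paydiff}, and the extraction step ($\strat_{\classof[\lolvl]{\purealt}}\leq 1$, $\strat_{\classof[\lolvl]{\pure}}\geq\strat_{\pure}$, total weight $1/\sum_{\lolvl=0}^{\lvl}\rate\atlvl{\lolvl}$) is identical. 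The only cosmetic difference is that you obtain the derivative identity $\dot\lyap=\payv_{\pure}-\payv_{\purealt}$ by telescoping the conditional-share dynamics directly from \eqref{eq:NRD-class}, rather than by invoking \cref{prop:KL}.
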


To streamline our presentation, we defer the proof of \cref{prop:domrate} to \cref{app:NRD}.
Instead, for our purposes, it is more important to note that the higher the degree of similarity between a dominated strategy and the strategy that dominates it, the faster the rate of extinction of said strategy.
This is due to the revision mechanism that is driving the dynamics \eqref{eq:NRD}:
because an agent is less likely to initiate a comparison with less similar strategies, the probability of dropping a dominated strategy is likewise reduced if the strategy that dominates it has a small degree of similarity.
This decrease is captured by the coefficient $\sumrate\atlvl{\lvl} \equiv \rate\atlvl{0} + \dotsb + \rate\atlvl{\lvl}$ that modulates the domination margin $\paydiff$ in \eqref{eq:domrate}, and which is strictly less than $1$ if $\pure$ and $\purealt$ are similar to the highest possible degree.


\begin{figure*}[t]
\centering
\footnotesize
\begin{subfigure}{\textwidth}
\includegraphics[width=.48\textwidth]{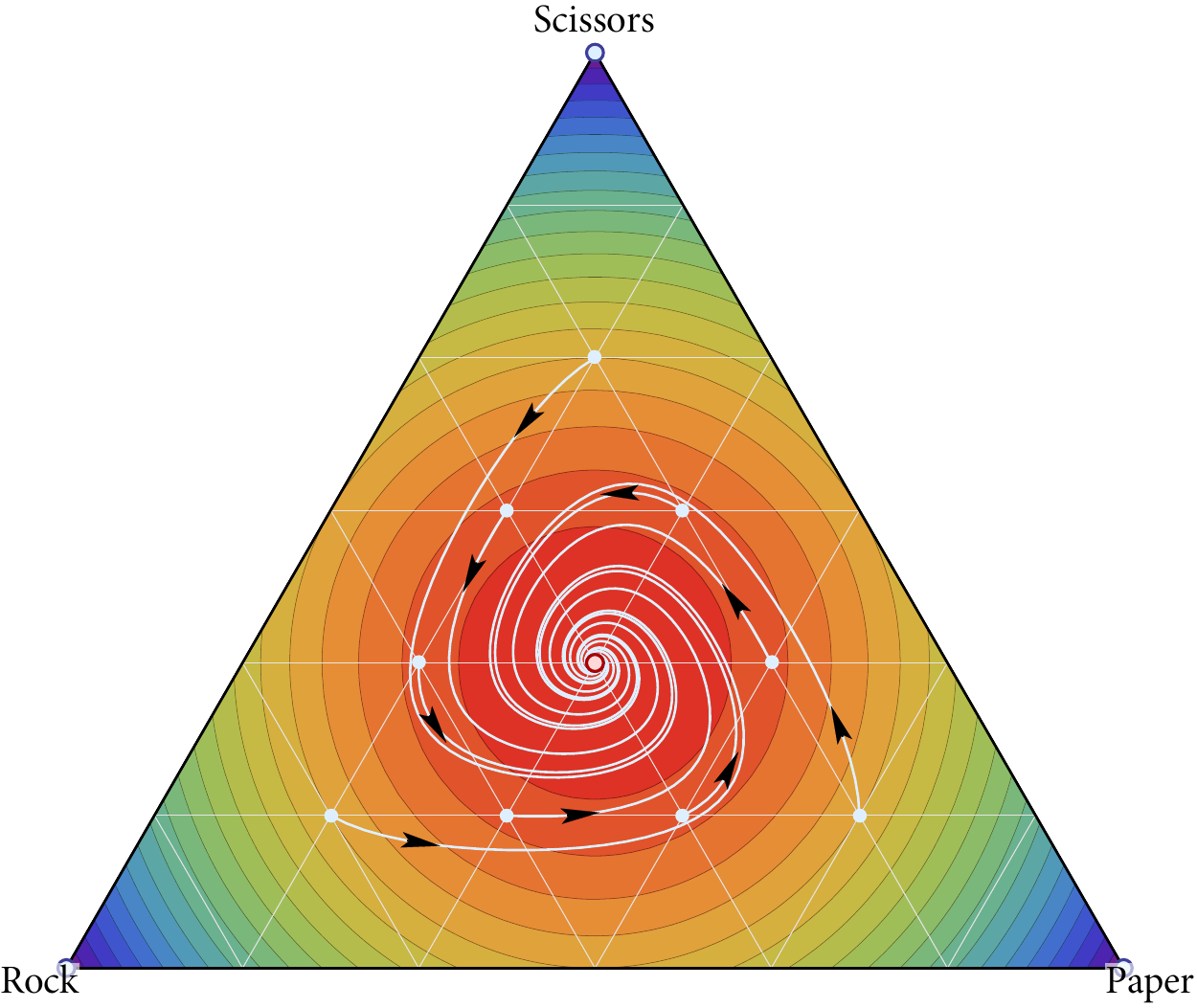}
\hfill
\includegraphics[width=.48\textwidth]{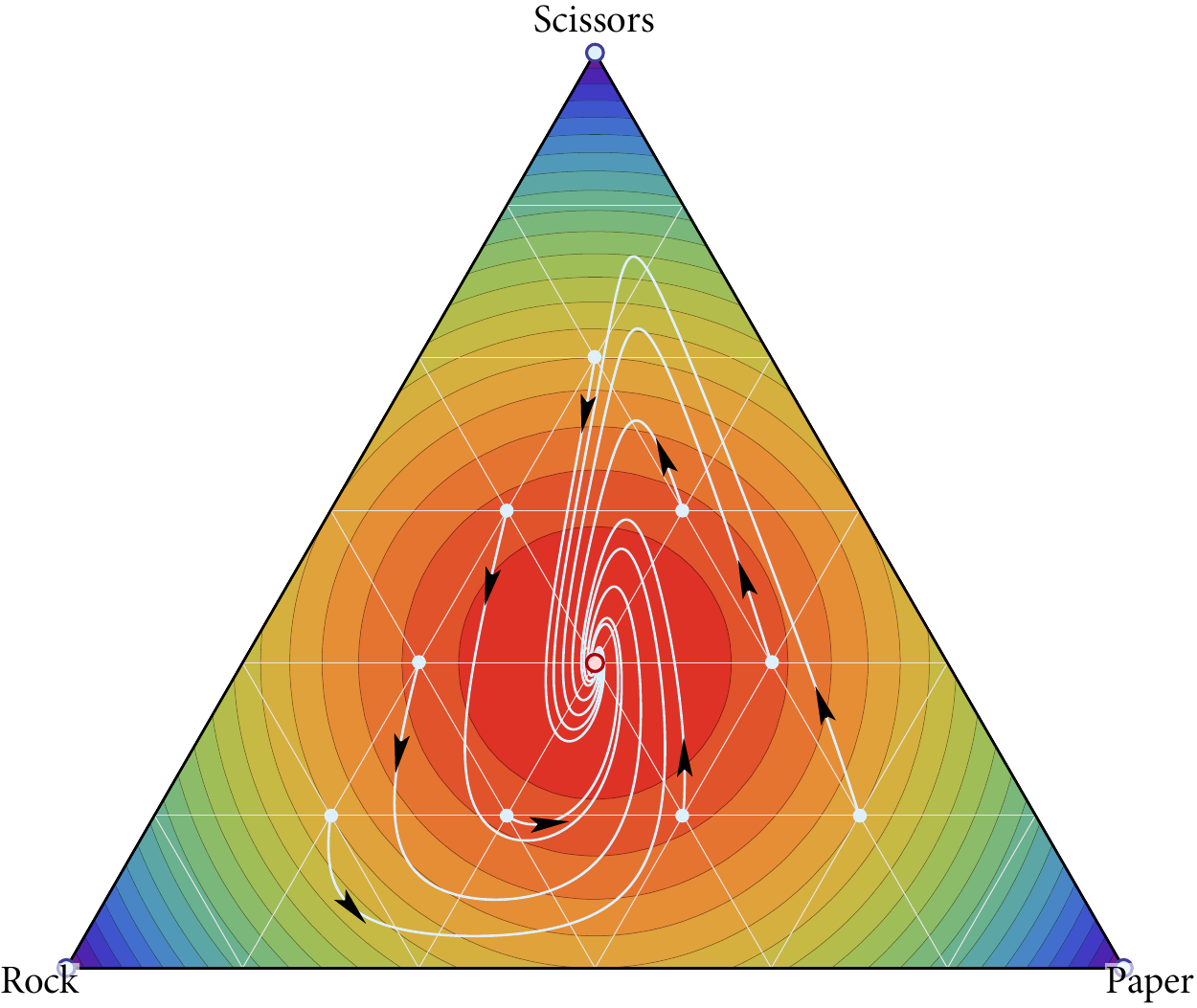}
\caption{Solution orbits of \eqref{eq:RD} and \eqref{eq:NRD} in a ``good Rock-Paper-Scissors'' game.}
\label{fig:portraits-RPS}
\end{subfigure}
\begin{subfigure}{\textwidth}
\includegraphics[width=.48\textwidth]{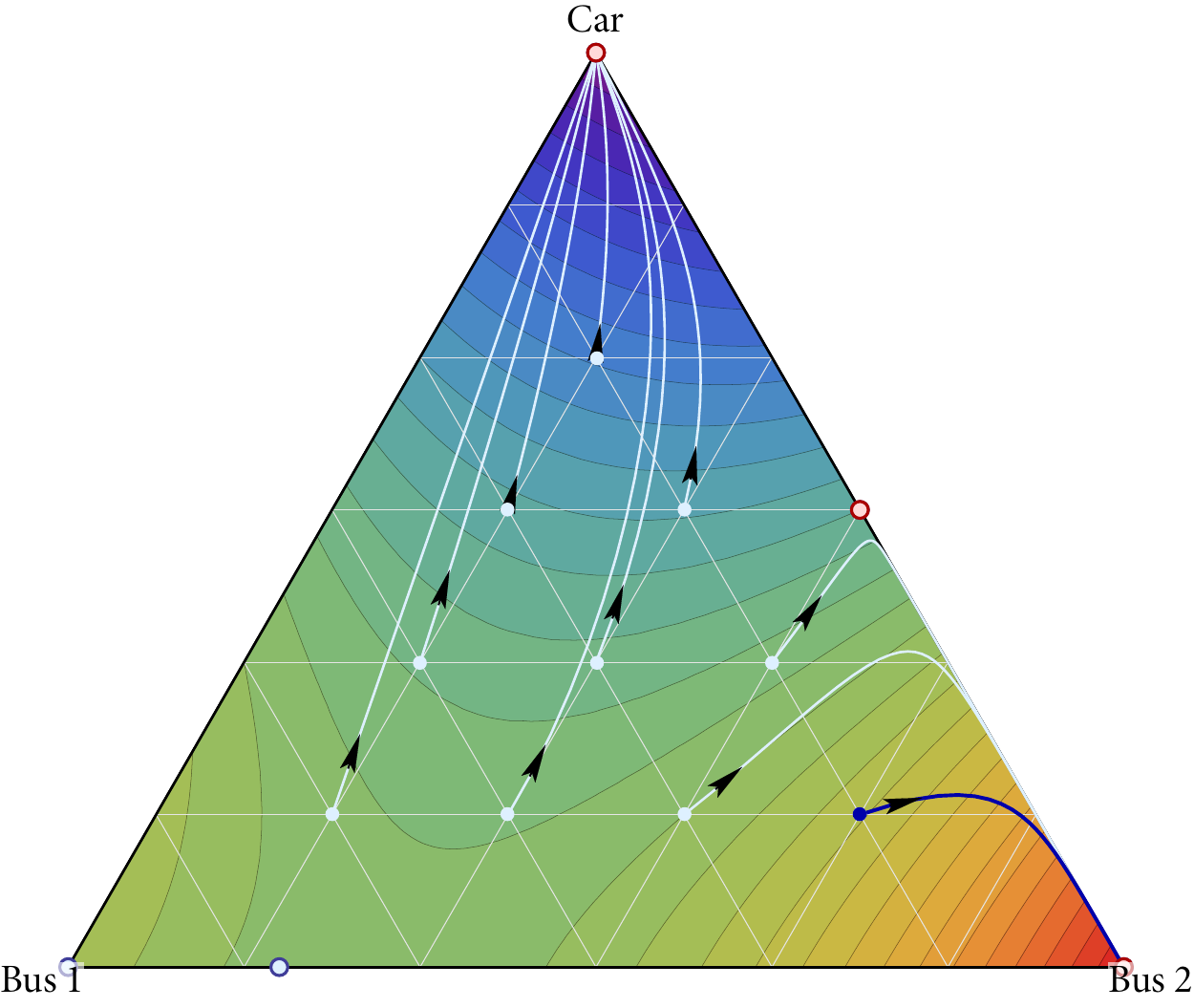}
\hfill
\includegraphics[width=.48\textwidth]{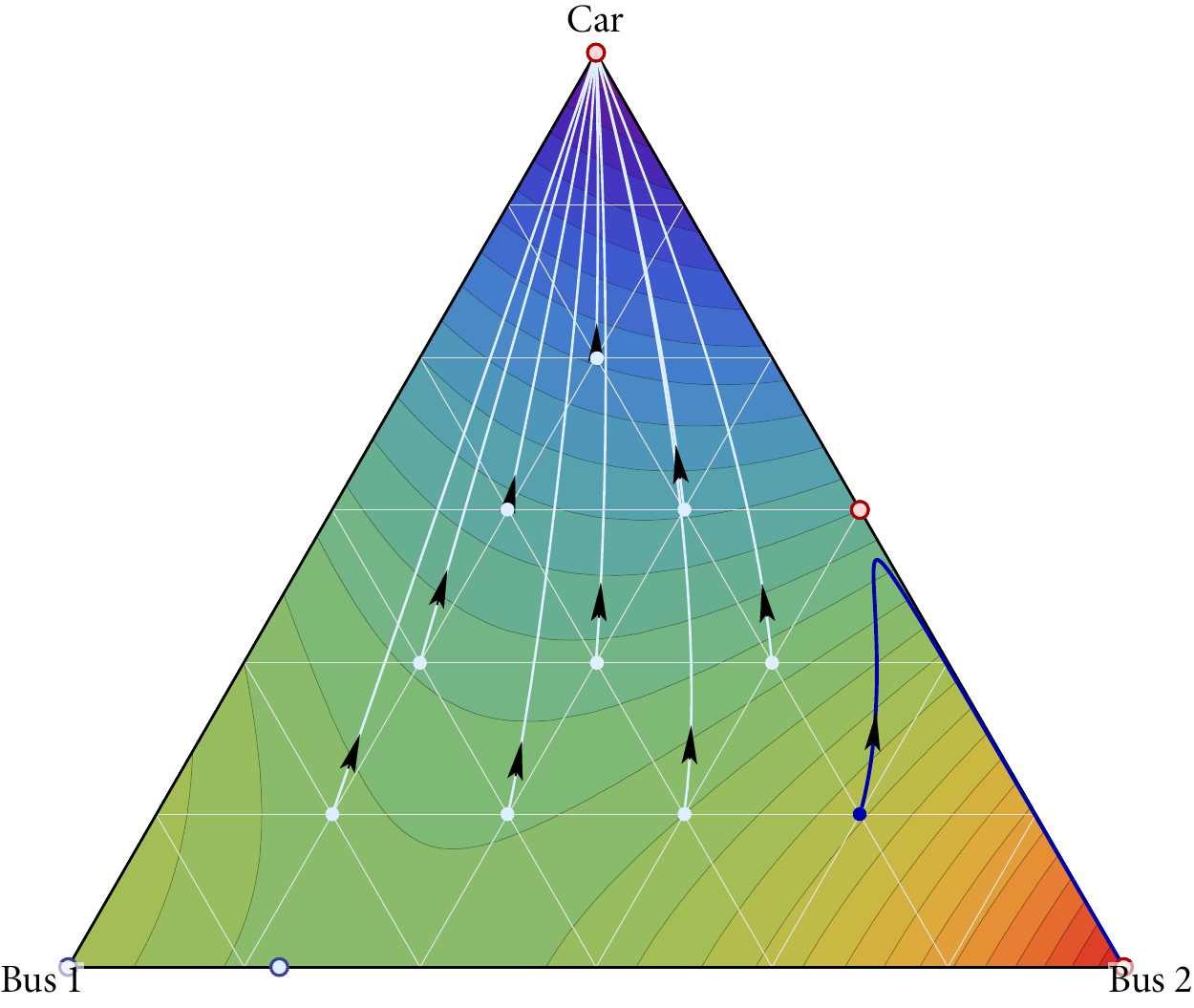}
\caption{Solution orbits of \eqref{eq:RD} and \eqref{eq:NRD} in a commuting game.}
\label{fig:portraits-BBC}
\end{subfigure}
\caption{Solution orbits of the \acl{RD} (left) and the \acl{NRD} (right) in a game of ``good Rock-Paper-Scissors'' (top) and the commuting game described in the text (bottom).
\Aclp{NE} are depicted in red, stationary points in blue;
the contours represent the population mean payoff (higher values shifted to red).
In the RPS game, ``P'' has been arbitrarily grouped with ``S'' to illustrate the distortion incurred by the implicit similarity bias of \eqref{eq:NRD}.
In all cases, \eqref{eq:NRD} has been run with intra-level sampling probabilities $\rate_{0}=1/4$ and $\rate_{1} = 3/4$.
The evolution of the population shares of the highlighted orbit in the commuting game is shown in \cref{fig:domrate}.}
\label{fig:portraits}
\end{figure*}



\begin{figure*}[t]
\centering
\footnotesize
\begin{subfigure}{.48\textwidth}
\includegraphics[width=\textwidth]{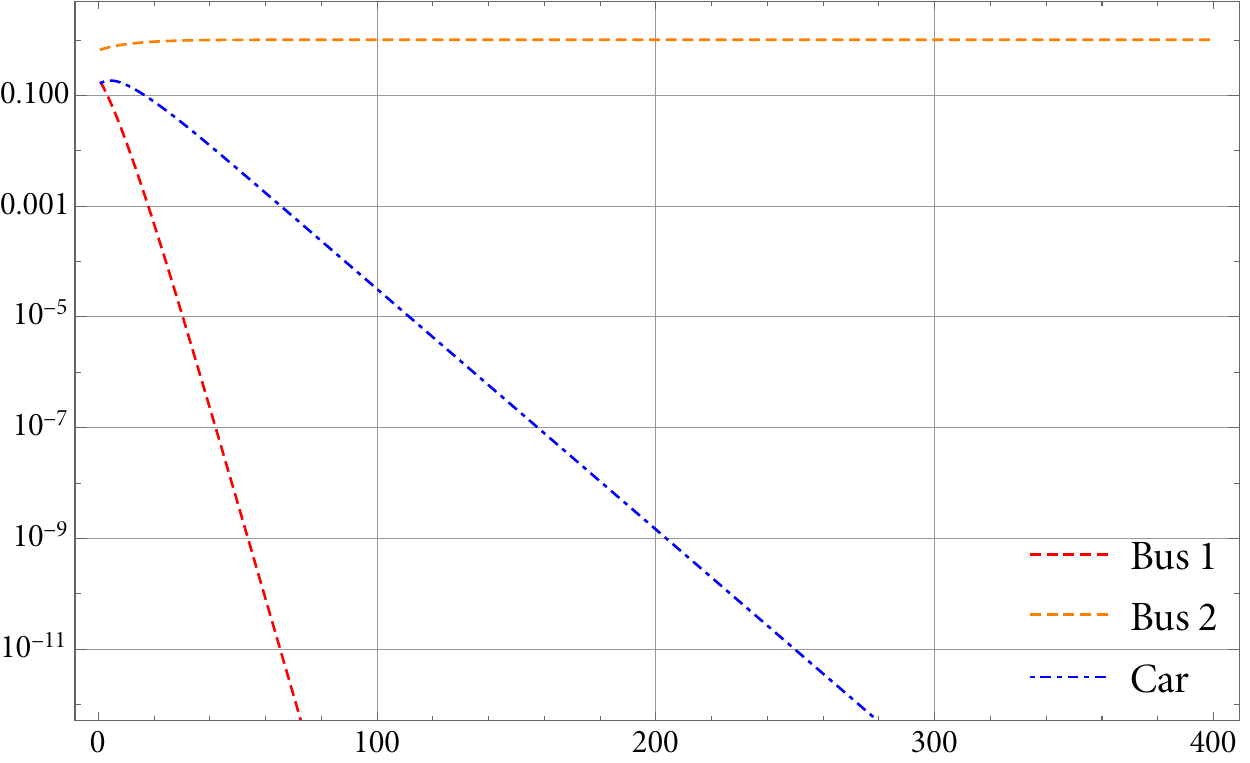}
\caption{Population shares under \eqref{eq:RD}.}
\end{subfigure}
\hfill
\begin{subfigure}{.48\textwidth}
\includegraphics[width=\textwidth]{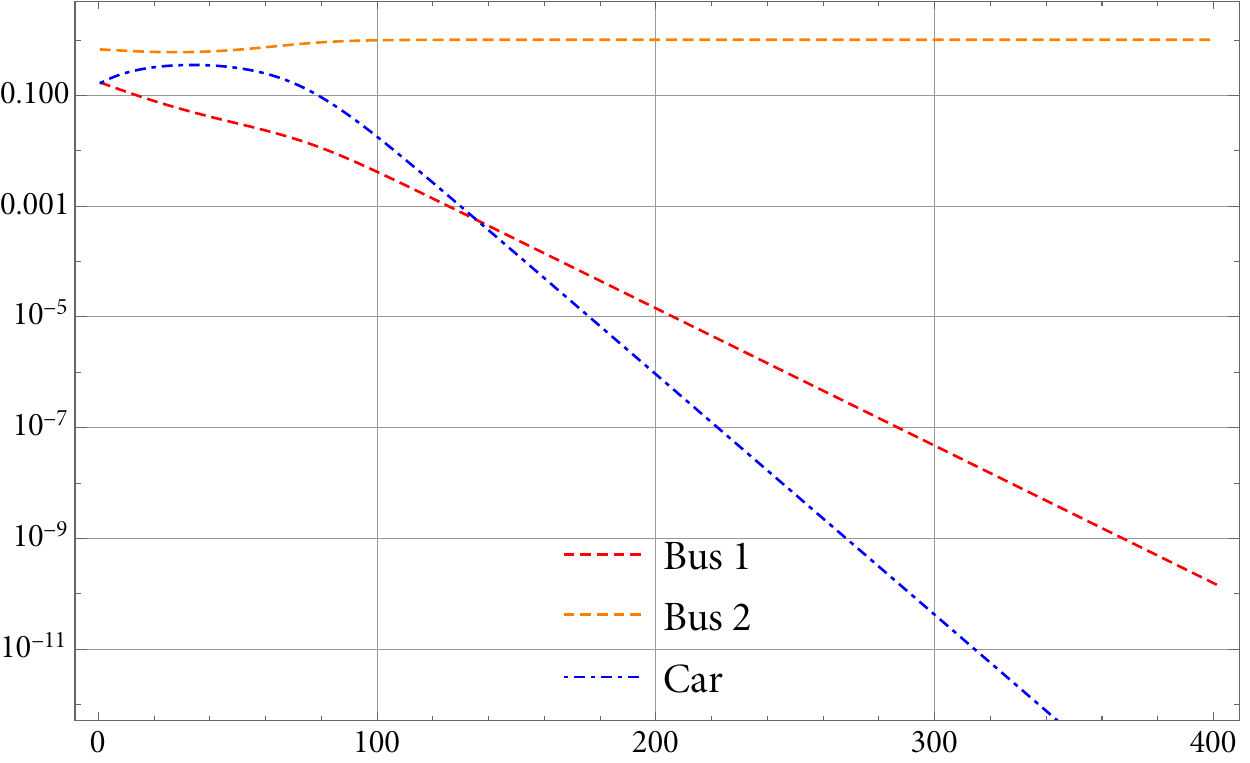}
\caption{Population shares under \eqref{eq:NRD}.}
\end{subfigure}
\caption{The rate of extinction of dominated strategies over time under the \acl{RD} (left) and the \acl{NRD} (right).
Population shares are computed for the highlighted orbits of \cref{fig:portraits};
axes are log-linear, indicating an exponential rate of convergence to equilibrium, and an exponential rate of extinction of dominated strategies (with the slope of each line capturing the extinction exponent).
In tune with \cref{prop:domrate}, we observe that the first bus line becomes extinct at a significantly faster rate when similarities are not taken into account.}
\label{fig:domrate}
\end{figure*}


\begin{remark*}
A similar result can be obtained for the \emph{rate of convergence} of \eqref{eq:NRD} to strict equilibria.
In particular, if $\eq$ is a strict \acl{NE} of $\game$ and $\paydiff = \min_{\purealt\notin\supp(\eq)} \bracks{\pay(\eq) - \payv_{\purealt}(\eq)}$ denotes the minimum payoff difference at equilibrium, it can be shown that, for all $\eps>0$, there exists a neighborhood $\nhd$ of $\eq$ in $\strats$ such that
\begin{equation}
\label{eq:strictrate}
\onenorm{\orbit{\time} - \eq}
	\leq \exp\parens*{\offset - (1-\eps)\sum_{\lolvl=0}^{\lvl} \rate\atlvl{\lolvl} \cdot \paydiff\time}
	\quad
	\text{whenever $\orbit{\tstart} \in \nhd$}
\end{equation}
where $\lvl$ denotes the degree of similarity between $\eq$ and the strategy $\purealt\notin\supp(\eq)$ that realizes the value of $\paydiff$.
Up to the factor $1-\eps$ (which is due to technical reasons, and which can be taken arbitrarily close to $1$), we observe that the degree of similarity has the same effect here as it does on the rate of extinction of dominated strategies.
\endenv
\end{remark*}

As a concrete illustration of the above, consider the following commuting game (loosely based on the blue bus\,/\,red bus paradox).
A population of commuters can choose between taking either of two bus lines (labeled as ``$1$'' and ``$2$'') or their car (strategy ``$3$'') to go to work;
going by car is faster when congestion is light, but taking the bus becomes the better choice as car traffic increases.
We represent this via the affine payoff model $\payv(\strat) = \mat\strat + \offset$ with
\begin{equation}
\label{eq:commute}
\offset
	= -\parens*{
	\begin{array}{c}
		5\\
		5\\
		2
	\end{array}}
\quad
\text{and}
\quad
\mat
	= -\parens*{
	\begin{array}{ccc}
		2 	&4	&6
		\\
		3	&0	&6
		\\
		1	&4	&8
	\end{array}}
	\,.
\end{equation}
A direct calculation (which we omit) shows that this game admits three \aclp{NE}, two strict, $(0,1,0)$ and $(0,0,1)$, and one mixed, $(0,1/2,1/2)$.
In addition, strategy $1$ (the first bus line) is dominated by strategy $3$ (commuting by car) with a margin of $\paydiff=1$, and there are no other dominated strategies.

To demonstrate the impact of similarities, we depict in \cref{fig:portraits} the solution orbits of the \acl{RD} \eqref{eq:RD} and the \acl{NRD} \eqref{eq:NRD} for this game (with the two bus lines regarded as ``similar'' by the commuters with an intra-class sampling probability of $\rate\atlvl{1} = 3/4$).
In tune with \cref{thm:NRD}, the stability characteristics of the game's \aclp{NE} do not change:
the two strict equilibria (which are evolutionarily stable) remain asymptotically stable under both \eqref{eq:RD} and \eqref{eq:NRD}, while the game's mixed equilibrium is unstable.
From a quantiative viewpoint however, the basins of attraction of the game's stable equilibria are quite different in the nested and the non-nested case:
specifically, commuting by car attracts a significantly larger set of initial conditions when the two bus lines are regarded as similar by the population.
This is also reflected in the rate of extinction of dominated strategies:
as we show in \cref{fig:domrate}, the rate of extinction of the first bus line is significantly slower when bus users are more likely to compare strategies with other bus users, precisely because neither bus line dominates the other.
This is captured by the coefficient $\sum_{\lolvl=0}^{\lvl} \rate\atlvl{\lolvl}$ of \eqref{eq:domrate}:
in the case of \eqref{eq:RD} it is equal to $1$ (because similarities are not taken into account), whereas in the case of \eqref{eq:NRD} it is equal to $\rate\atlvl{0}$ and thus, potentially much smaller.%

\subsection{Proof outline}
\label{sec:proof}

We close this section with an outline of the proof of \cref{thm:NRD,prop:domrate} (full proofs are provided in \cref{app:NRD}).
In contrast to standard arguments relying on payoff monotonicity, the enabling machinery for the rationality properties of \eqref{eq:NRD} stated in \cref{thm:NRD} requires the construction of a specific ``energy function'' generalizing the well-known \acf{KL} divergence
\begin{equation}
\label{eq:KL}
\breg(\base,\strat)
	= \sum_{\pure\in\pures} \base_{\pure} \log\frac{\base_{\pure}}{\strat_{\pure}}
	\,.
\end{equation}
With a fair degree of hindsight, we will instead consider the \emph{nested \acl{KL} divergence}
\begin{equation}
\label{eq:NKL}
\tag{NKL}
\breg_{\nLvls}(\base,\strat)
	= \sum_{\lvl=1}^{\nLvls} \diff\atlvl{\lvl}
		\sum_{\class\in\classes\atlvl{\lvl}}
		\base_{\class} \log\frac{\base_{\class}}{\strat_{\class}}.
\end{equation}
where the weights $\diff\atlvl{\lvl}$, $\lvl=1,\dotsc,\nLvls$, are given by the somewhat mysterious expression
\begin{equation}
\label{eq:rate2diff}
\diff\atlvl{\lvl}
	= \begin{cases}
		\rate\atlvl{\lvl} / (\sumrate\atlvl{\lvl-1} \sumrate\atlvl{\lvl})
			&\quad
			\text{if $\lvl=1,\dotsc,\nLvls-1$},
		\\[\smallskipamount]
		1
			&\quad
			\text{if $\lvl=\nLvls$},
	\end{cases}
\end{equation}
with $\sumrate\atlvl{\lvl} = \rate\atlvl{0} + \dotsb + \rate\atlvl{\lvl}$ denoting the total intra-class revision rate at level $\lvl$ and below (recall here our normalization assumption $\sumrate\atlvl{\nLvls-1} = \rate\atlvl{0} + \dotsb + \rate\atlvl{\nLvls-1} = 1$).

Intuitively, $\breg_{\nLvls}$ can be viewed as an affine combination of the ordinary \ac{KL} divergence between $\base$ and $\strat$ at different degrees of similarity:
in particular, the $\lvl$-th level terms in \eqref{eq:NKL} add up to the standard \ac{KL} divergence between the class state variables $(\base_{\class\atlvl{\lvl}})_{\class\atlvl{\lvl}\in\classes\atlvl{\lvl}}$ and $(\strat_{\class\atlvl{\lvl}})_{\class\atlvl{\lvl}\in\classes\atlvl{\lvl}}$ in $\simplex(\classes\atlvl{\lvl})$.
As for the \textendash\ admittedly opaque \textendash\ choice of weights $\diff\atlvl{\lvlalt}$ in the definition of $\breg_{\nLvls}$, the specific assignment \eqref{eq:rate2diff} has been judiciously chosen to provide the following fundamental link between the nested divergence \eqref{eq:NKL} and the nested dynamics \eqref{eq:NRD}:

\begin{proposition}
\label{prop:KL}
Fix a state $\base\in\strats$ and let $\orbit{\time}$ be an interior solution of \eqref{eq:NRD}.
Then:
\begin{equation}
\label{eq:dKL}
\frac{d}{dt} \breg_{\nLvls}(\base,\orbit{\time})
	= \braket{\payv(\orbit{\time})}{\orbit{\time} - \base}
	= \sum_{\pure\in\pures} \payv_{\pure}(\orbit{\time}) \, \parens{\orbit[\pure]{\time} - \base_{\pure}}
\end{equation}
\end{proposition}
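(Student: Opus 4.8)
The plan is to differentiate $\breg_{\nLvls}(\base,\orbit{\time})$ term-by-term and reduce everything to a single linear identity in the class payoffs whose validity is forced by the specific choice of weights \eqref{eq:rate2diff}. Since $\base$ is held fixed and $\orbit{\time}$ is interior (so every $\strat_{\class}>0$ along the orbit), each summand in \eqref{eq:NKL} differentiates as $\frac{d}{dt}\bracks{\base_{\class}\log(\base_{\class}/\strat_{\class})} = -\base_{\class}\,\dot\strat_{\class}/\strat_{\class}$, and inserting the per-capita class growth rate from the class dynamics \eqref{eq:NRD-class} gives
\begin{equation}
\frac{d}{dt}\breg_{\nLvls}(\base,\orbit{\time})
	= -\sum_{\lvl=1}^{\nLvls}\diff\atlvl{\lvl}\sum_{\class\in\classes\atlvl{\lvl}}\base_{\class}\sum_{\lolvl=0}^{\lvl-1}\rate\atlvl{\lolvl}\bracks{\classpay_{\class}(\strat) - \classpay_{\classof[\lolvl]{\class}}(\strat)}.
\end{equation}

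Next I would introduce the auxiliary quantities $T_{\lvl} \defeq \sum_{\class\in\classes\atlvl{\lvl}}\base_{\class}\classpay_{\class}(\strat)$ for $\lvl=0,\dots,\nLvls$, which package all the class payoffs appearing above. The crucial bookkeeping step is the ancestor-grouping identity: for any $\lolvl\leq\lvl$, summing over the level-$\lvl$ classes and collecting those that share a common level-$\lolvl$ ancestor $\classalt$ (whose $\base$-masses sum to $\base_{\classalt}$, since the level-$\lvl$ classes contained in $\classalt$ partition it) yields $\sum_{\class\in\classes\atlvl{\lvl}}\base_{\class}\classpay_{\classof[\lolvl]{\class}}(\strat) = T_{\lolvl}$. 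Combined with $\sum_{\lolvl=0}^{\lvl-1}\rate\atlvl{\lolvl} = \sumrate\atlvl{\lvl-1}$, this collapses the inner double sum and turns the derivative into $-\sum_{\lvl=1}^{\nLvls}\diff\atlvl{\lvl}\bracks{\sumrate\atlvl{\lvl-1}T_{\lvl} - \sum_{\lolvl=0}^{\lvl-1}\rate\atlvl{\lolvl}T_{\lolvl}}$. On the other side, since each partition refines the whole population, the level-$0$ term gives $T_{0}=\pay(\strat)=\sum_{\pure}\strat_{\pure}\payv_{\pure}(\strat)$, while at the finest level the classes are singletons with $\classpay_{\{\pure\}}(\strat)=\payv_{\pure}(\strat)$, so $T_{\nLvls}=\sum_{\pure}\base_{\pure}\payv_{\pure}(\strat)$; hence the target right-hand side $\braket{\payv(\strat)}{\strat-\base}$ is exactly $T_{0}-T_{\nLvls}$.

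It therefore remains to prove the purely algebraic identity $\sum_{\lvl=1}^{\nLvls}\diff\atlvl{\lvl}\bracks{\sumrate\atlvl{\lvl-1}T_{\lvl}-\sum_{\lolvl=0}^{\lvl-1}\rate\atlvl{\lolvl}T_{\lolvl}} = T_{\nLvls}-T_{0}$. Because the $T_{m}$ are free (they depend on the arbitrary underlying game), I would match the coefficient of each $T_{m}$ separately: writing $D_{m}\defeq\sum_{\lvl=m+1}^{\nLvls}\diff\atlvl{\lvl}$, the coefficient of $T_{m}$ on the left equals $-\rate\atlvl{0}D_{0}$ for $m=0$, equals $\diff\atlvl{m}\sumrate\atlvl{m-1}-\rate\atlvl{m}D_{m}$ for $1\leq m\leq\nLvls-1$, and equals $\diff\atlvl{\nLvls}\sumrate\atlvl{\nLvls-1}$ for $m=\nLvls$.

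The main obstacle --- and the place where the opaque weights \eqref{eq:rate2diff} earn their keep --- is the tail-sum lemma $D_{m}=1/\sumrate\atlvl{m}$, which I would establish by downward induction on $m$: the base case $D_{\nLvls-1}=\diff\atlvl{\nLvls}=1=1/\sumrate\atlvl{\nLvls-1}$ uses the normalization $\sumrate\atlvl{\nLvls-1}=1$, and the inductive step $D_{m-1}=\diff\atlvl{m}+D_{m}=\frac{\rate\atlvl{m}}{\sumrate\atlvl{m-1}\sumrate\atlvl{m}}+\frac{1}{\sumrate\atlvl{m}}=\frac{1}{\sumrate\atlvl{m-1}}$ relies on $\sumrate\atlvl{m}=\sumrate\atlvl{m-1}+\rate\atlvl{m}$. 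Granting this lemma, the three coefficient computations reduce (using $\sumrate\atlvl{0}=\rate\atlvl{0}$ and $\diff\atlvl{\nLvls}=\sumrate\atlvl{\nLvls-1}=1$) to $-1$, $0$, and $+1$ respectively, which matches $T_{\nLvls}-T_{0}$ coefficient-wise and closes the argument.
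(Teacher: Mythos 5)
Your proposal is correct and follows essentially the same route as the paper's own proof: differentiate \eqref{eq:NKL}, substitute the class dynamics \eqref{eq:NRD-class}, regroup via the ancestor-grouping identity (your $T_{\lolvl}$ is precisely the paper's $\braket{\payv(\strat)}{\base}_{\lolvl}$), and close the telescoping with the tail-sum lemma $\diff\atlvl{\lolvl+1}+\dotsb+\diff\atlvl{\nLvls}=1/\sumrate\atlvl{\lolvl}$ proved by the same downward induction. The only cosmetic difference is that you organize the final step as an explicit coefficient-matching in the free variables $T_{0},\dotsc,T_{\nLvls}$, which is a slightly cleaner bookkeeping of the same cancellation.
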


In words, \cref{prop:KL} states that, under \eqref{eq:NRD}, the rate of change of the nested \ac{KL} divergence relative to a base point $\base\in\strats$ is the difference between the population average payoff $\pay(\strat) = \braket{\payv(\strat)}{\strat}$ and the average payoff $\braket{\payv(\strat)}{\base}$ of a group of mutants with actions distributed according to $\base$
(for a range of similar results in the context of game dynamics, see \citealp{MS16}, \citealp{MZ19}, and references therein).
The key property highlighted by this interpretation is that two similarity-\emph{dependent} objects \textendash\ the nested dynamics \eqref{eq:NRD} and the nested \ac{KL} divergence \eqref{eq:NKL} \textendash\ give rise to a similarity-\emph{agnostic} quantity that only depends on the game and the population states involved, and is otherwise oblivious of the similarity hierarchy $\struct$.

The main ingredient of the proof of \cref{prop:KL} is that, under the specific weight assignment \eqref{eq:rate2diff}, the time derivative of the nested divergence $\breg_{\nLvls}$ becomes a telescoping sum, with only the finest ($\lvl=\nLvls$) and coarsest ($\lvl=0$) levels surviving in the \acl{RHS} of \eqref{eq:dKL}.
This requires some delicate algebraic manipulations, which we detail below.

\begin{proof}
[Proof of \cref{prop:KL}]
By the definition of the nested \ac{KL} divergence, we have
\begin{equation}
\label{eq:dKL-1}
\frac{d}{dt} \breg_{\nLvls}(\base,\orbit{\time})
	= -\sum_{\lvl=1}^{\nLvls}
		\diff\atlvl{\lvl}\,
		\sum_{\class\in\classes\atlvl{\lvl}} \base_{\class} \frac{\dot \strat_{\class}}{\strat_{\class}}
	\,.
\end{equation}
Thus, by substituting the class dynamics \eqref{eq:NRD-class} into \eqref{eq:dKL-1} and rearranging, we get
\begin{align}
\label{eq:dKL-2}
\frac{d}{dt} \breg_{\nLvls}(\base,\orbit{\time})
	&= - \sum_{\lvl=1}^{\nLvls}
		\diff\atlvl{\lvl}
		\sum_{\class\in\classes\atlvl{\lvl}}
			\base_{\class}
			\sum_{\lolvl=0}^{\lvl-1}
				\rate\atlvl{\lolvl}\,
				\bracks{\classpay_{\class}(\strat) - \classpay_{\classof[\lolvl]{\class}}(\strat)}
	\notag\\
	&= \sum_{\lvl=1}^{\nLvls}
		\diff\atlvl{\lvl}
		\sum_{\lolvl=0}^{\lvl-1}
			\rate\atlvl{\lolvl}
			\sum_{\class\in\classes\atlvl{\lvl}}
				\base_{\class}
				\bracks{\classpay_{\class}(\strat) - \classpay_{\classof[\lolvl]{\class}}(\strat)}
	\,.
\end{align}
To proceed, note that, for all $\lolvl \leq \lvl$, we have
\begin{align}
\sum_{\class\in\classes\atlvl{\lvl}}
	\base_{\class} \classpay_{\classof[\lolvl]{\class}}(\strat)
	&= \sum_{\class\atlvl{\lolvl} \in \classes\atlvl{\lolvl}}
		\sum_{\class \desceq[\lvl] \class\atlvl{\lolvl}}
			\base_{\class} \classpay_{\classof[\lolvl]{\class}}(\strat)
	\notag\\
	&= \sum_{\class\atlvl{\lolvl} \in \classes\atlvl{\lolvl}}
		\classpay_{\class\atlvl{\lolvl}}(\strat)
		\sum_{\class \desceq[\lvl] \class\atlvl{\lolvl}} \base_{\class}
	= \sum_{\class\atlvl{\lolvl} \in \classes\atlvl{\lolvl}}
		\classpay_{\class\atlvl{\lolvl}}(\strat)\,
		\base_{\class\atlvl{\lolvl}}
	= \braket{\payv(\strat)}{\base}_{\lolvl}
\end{align}
where the notation $\braket{\payv(\strat)}{\base}_{\lolvl}$ stands for
\begin{equation}
\label{eq:classprod}
\braket{\payv(\strat)}{\base}_{\lolvl}
	\defeq \sum_{\class\atlvl{\lolvl}\in\classes\atlvl{\lolvl}} \base_{\class\atlvl{\lolvl}} \classpay_{\class\atlvl{\lolvl}}(\strat)
	\quad
	\text{for all $\base,\strat\in\strats$ and all $\lolvl=0,\dotsc,\nLvls$},
\end{equation}
and
we used the facts that
$\union\setdef{\class}{\class\desceq[\lvl]\class\atlvl{\lolvl}} = \class\atlvl{\lolvl}$
and
$\intersect\setdef{\class}{\class\desceq[\lvl]\class\atlvl{\lolvl}} = \varnothing$ whenever $\lolvl \leq \lvl$.
In this way, \eqref{eq:dKL-2} can be unpacked one attribute at a time to yield
\begin{align}
\label{eq:dKL-3}
\eqref{eq:dKL-2}
	&= \sum_{\lvl=1}^{\nLvls}
		\diff\atlvl{\lvl}
		\sum_{\lolvl=0}^{\lvl-1}
			\rate\atlvl{\lolvl}\,
			\bracks{\braket{\payv(\strat)}{\base}_{\lvl} - \braket{\payv(\strat)}{\base}_{\lolvl}}
	\notag\\
	&= \diff\atlvl{1} \rate\atlvl{0} \braket{\payv(\strat)}{\base}_{0}
		- \diff\atlvl{1} \rate\atlvl{0} \braket{\payv(\strat)}{\base}_{1}
	\notag\\
	&+ \diff\atlvl{2} \bracks{\rate\atlvl{0} \braket{\payv(\strat)}{\base}_{0} + \rate\atlvl{1} \braket{\payv(\strat)}{\base}_{1}}
		- \diff\atlvl{2} \parens{\rate\atlvl{0} + \rate\atlvl{1}}
		\, \braket{\payv(\strat)}{\base}_{2}
	\notag\\
	&\;\:\vdots
	\notag\\
	&+ \diff\atlvl{\nLvls}
		\bracks{\rate\atlvl{0} \braket{\payv(\strat)}{\base}_{0} + \dotsb + \rate\atlvl{\nLvls-1}
		\braket{\payv(\strat)}{\base}_{\nLvls-1}}
	- \diff\atlvl{\nLvls}
		\parens{\rate\atlvl{0} + \dotsb + \rate\atlvl{\nLvls-1}}
		\, \braket{\payv(\strat)}{\base}_{\nLvls}
\end{align}
and hence, regrouping all same-level payoff terms, we obtain:
\begin{align}
\label{eq:dKL-4}
\frac{d}{dt} \breg_{\nLvls}(\base,\orbit{\time})
	&= \rate\atlvl{0}
		\parens{\diff\atlvl{1} + \dotsb + \diff\atlvl{\nLvls}}
		\cdot \braket{\payv(\strat)}{\base}_{0}
	\notag\\
	&\;\:\vdots
	\notag\\
	&+ \bracks{\rate\atlvl{\lolvl} \parens{\diff\atlvl{\lolvl+1} + \dotsb + \diff\atlvl{\nLvls}}
		- \diff\atlvl{\lolvl} \parens{\rate\atlvl{0} + \dotsb + \rate\atlvl{\lolvl-1}}}
		\cdot \braket{\payv(\strat)}{\base}_{\lolvl}
	\notag\\
	&\;\:\vdots
	\notag\\
	&- \diff\atlvl{\nLvls}
		\parens{\rate\atlvl{0} + \dotsb + \rate\atlvl{\nLvls-1}}
		\cdot \braket{\payv(\strat)}{\base}_{\nLvls}
	\notag\\
	&= \sum_{\lolvl=0}^{\nLvls}
		\bracks*{\rate\atlvl{\lolvl} \sum_{\lvl=\lolvl+1}^{\nLvls} \diff\atlvl{\lvl}
			- \diff\atlvl{\lolvl} \sum_{\lvl=0}^{\lolvl-1} \rate\atlvl{\lvl}}
		\, \braket{\payv(\strat)}{\base}_{\lolvl}
\end{align}
where, in the last line, we used the empty sum convention $\sum_{j\in\varnothing} \alpha_{j} = 0$.

To proceed, we claim that
\begin{equation}
\label{eq:sumweight}
\diff\atlvl{\lolvl+1} + \dotsb + \diff\atlvl{\nLvls}
	= \frac{1}{\rate\atlvl{0} + \dotsb + \rate\atlvl{\lolvl}}
	= \frac{1}{\sumrate\atlvl{\lolvl}}
	\quad
	\text{for all $\lolvl = 0,\dotsc,\nLvls-1$}
	\,.
\end{equation}
Indeed, for $\lolvl=\nLvls-1$, we recover the definition \eqref{eq:rate2diff} of $\diff\atlvl{\nLvls}$.
Otherwise, assume inductively that \eqref{eq:sumweight} holds for some $\lolvl < \nLvls-1$;
then, by \eqref{eq:rate2diff}, we get
\begin{equation}
\label{eq:sumweight-induction}
\diff\atlvl{\lolvl} + \diff\atlvl{\lolvl+1} + \dotsb + \diff\atlvl{\nLvls}
	= \frac{\rate\atlvl{\lolvl}}{\sumrate\atlvl{\lolvl}\sumrate\atlvl{\lolvl-1}}
		+ \frac{1}{\sumrate\atlvl{\lolvl}}
	= \frac{\rate\atlvl{\lolvl} + \sumrate\atlvl{\lolvl-1}}{\sumrate\atlvl{\lolvl}\sumrate\atlvl{\lolvl-1}}
	= \frac{\sumrate\atlvl{\lolvl}}{\sumrate\atlvl{\lolvl}\sumrate\atlvl{\lolvl-1}}
	= \frac{1}{\sumrate\atlvl{\lolvl-1}},
\end{equation}
which completes the inductive step and proves our claim.
Hence, combining \eqref{eq:sumweight} with \eqref{eq:rate2diff}, we get
\begin{equation}
\label{eq:weightrate}
\rate\atlvl{\lolvl} \parens{\diff\atlvl{\lolvl+1} + \dotsb + \diff\atlvl{\nLvls}}
	= \frac{\rate\atlvl{\lolvl}}{\sumrate\atlvl{\lolvl}}
	= \diff\atlvl{\lolvl} \sumrate\atlvl{\lolvl-1}
	= \diff\atlvl{\lolvl} \parens{\rate\atlvl{0} + \dotsb + \rate\atlvl{\lolvl-1}}
\end{equation}
for all $\lvlalt=1,\dotsc,\nLvls-1$.
Thus, substituting \eqref{eq:sumweight} and \eqref{eq:weightrate} back into \eqref{eq:dKL-4}, we finally obtain
\begin{equation}
\label{eq:dKL-5}
\frac{d}{dt} \breg_{\nLvls}(\base,\orbit{\time})
	= \braket{\payv(\strat)}{\base}_{0} - \braket{\payv(\strat)}{\base}_{\nLvls}.
\end{equation}
However, since the only $0$-th level similarity class is $\pures$ itself (and $\base_{\pures} = 1$), we readily get
\begin{equation}
\braket{\payv(\strat)}{\base}_{0}
	= \base_{\pures} \; \classpay_{\pures}(\strat)
	= \sum_{\pure\in\pures} \strat_{\pure} \payv_{\pure}(\strat)
	= \braket{\payv(\strat)}{\strat}
\end{equation}
and our assertion follows.
\end{proof}

\cref{prop:KL} plays a crucial role in establishing the long-run rationality properties of \eqref{eq:NRD} recorded in \cref{thm:NRD}.
As an illustration, we sketch below the proof of the theorem's first item concerning the elimination of strictly dominated strategies.
For this, suppose that action $\pure\in\pures$ is strictly dominated by action $\purealt\in\pures$.
Then, if we let $\bvec_\pure\in \strats$ denote the ``pure'' population state corresponding to $\pure$ (and likewise for $\purealt$), the difference
\begin{equation}
\label{eq:TheBregDif}
\breg_{\nLvls}(\bvec_{\pure},\orbit{\time}) - \breg_{\nLvls}(\bvec_{\purealt},\orbit{\time})
	= \sum_{\lvl=1}^{\nLvls} \diff\atlvl{\lvl}
		\log\frac
			{\orbit[{\classof[\lvl]{\purealt}}]{\time}}
			{\orbit[{\classof[\lvl]{\pure}}]{\time}},
\end{equation}
is a weighted sum of log ratios of the masses of agents using actions in classes containing $\pure$ and $\purealt$.
It then follows from \cref{prop:KL} that
\begin{align}
\frac{d}{dt} \left(\vphantom{I^I}\breg_{\nLvls}(\bvec_{\pure},\orbit{\time}) - \breg_{\nLvls}(\bvec_{\purealt},\orbit{\time})\right)
	&= \braket{\payv(\orbit{\time})}{\orbit{\time} - \bvec_{\pure}} - \braket{\payv(\orbit{\time})}{\orbit{\time} - \bvec_{\purealt}}\notag\\
	&= \payv_{\purealt}(\orbit{\time}) - \payv_{\pure}(\orbit{\time}).
	\label{eq:ThePayoffDiff}
\end{align}
Since $\purealt$ strictly dominates $\pure$, the payoff difference \eqref{eq:ThePayoffDiff} is positive and bounded away from zero.
Because of this, the difference \eqref{eq:TheBregDif} must grow without bound, which in turn implies that the population share $\strat_\pure$ of the dominated strategy $\pure$ vanishes, and a more detailed algebraic manipulation allows us to get the rate estimate described in \cref{prop:domrate}.

For a detailed proof of the above claims and the remaining parts of \cref{thm:NRD,prop:domrate}, we refer the reader to \cref{app:NRD}.

\section{Evolution through reinforcement and \acl{NLC}}
\label{sec:learning}

In this section, we discuss a continuous-time model for learning in games, which reveals a deep relation between the \acl{NRD} and the \acl{NLC} rule of \cite{BA73} and \cite{McF78}.
To establish a baseline, we begin by describing the non-nested case in \cref{sec:LC} below;
subsequently, we present the details of this relation in full in \cref{sec:NLC,sec:NEW} right after.

\subsection{Logit choice, \acl{EW}, and the replicator dynamics}
\label{sec:LC}

Our starting point is the following stimulus-response model of evolution in the spirit of \citet{ER98}:
\begin{itemize}
\item
At each instance $\time\geq0$, the agents observe the cumulative payoff $\score_{\pure}(\time)$ of each pure strategy $\pure\in\pures$ up to time $\time$, \viz
\begin{equation}
\label{eq:score}
\score_{\pure}(\time)
	= \int_{0}^{\time} \payv_{\pure}(\strat(\timealt)) \dd\timealt
\end{equation}
\item
Subsequently, players select actions with probabilities that are exponentially proportional to the aggregate propensity scores $\score_{\pure}$, reinforcing in this way those actions that seem to be performing better.
As a result, in our nonatomic setting, the relative share of $\pure$-strategists at any given time are determined by the exponential choice rule
\begin{equation}
\label{eq:score2strat}
\strat_{\pure}(\time)
	= \frac{\exp(\score_{\pure}(\time)/\temp)}{\sum_{\purealt\in\pures} \exp(\score_{\purealt}(\time)/\temp)}
\end{equation}
where $\temp>0$ is a ``temperature parameter'' that indicates the players' sensitivity to changes in the observed payoffs.
When $\temp\to\infty$, players are relatively indifferent and tend to employ all strategies in a near-uniform mix;
by contrast,
when $\temp\to0$, players are extremely sensitive to payoff changes and essentially ``best-respond'' to the vector of aggregate scores $\score(\time)$.
\end{itemize}

The main difference between this model and the revisionist viewpoint of the previous sections is that, instead of specifying the inflows and outflows to and from each strategy, \cref{eq:score2strat} directly prescribes the distribution of the population as a function of the cumulative propensity scores \eqref{eq:score}.
Nevertheless, despite these very different starting points, these two processes yield the same evolutionary dynamics:
indeed, by expressing \cref{eq:score,eq:score2strat} in differential form, we obtain the \acdef{EW} dynamics
\begin{equation}
\label{eq:EW}
\tag{EW}
\dot\score
	= \payv(\strat)
	\qquad
\strat
	= \logit(\score)
\end{equation}
where
\begin{equation}
\label{eq:LC}
\tag{LC}
\logit_{\pure}(\score)
	=\frac{\exp(\score_{\pure}/\temp)}{\sum_{\purealt \in \pures} \exp(\score_{\purealt}/\temp)}
\end{equation}
denotes the logit choice map underlying \eqref{eq:score2strat}.
Then, by a straightforward differentiation, we readily get the system
\begin{equation}
\label{eq:EW2RD}
\dot\strat_{\pure}
	= \temp^{-1} \, \strat_{\pure} \bracks{\payv_{\pure}(\strat) - \pay(\strat)}
\end{equation}
which, up to the multiplicative factor $\temp^{-1}$, is simply the replicator equation \eqref{eq:RD} induced by the revision protocol \eqref{eq:PPI}.

\begin{remark*}
The \acl{EW} scheme \eqref{eq:EW} is one of the staples of the online learning literature and its origins can be traced to the work of \citet{Vov90}, \citet{LW94} and \citet{ACBFS95,ACBF02} on multi-armed bandits.
The specific derivation of the replicator dynamics that we presented here goes back at least as far as \citet{Rus99} but versions of this model have also been studied in different contexts by \citet{BS97}, \citet{Pos97}, \citet{Hop02}, \citet{Beg05}, \citet{LC05}, \citet{HSV09}, \citet{MM10}, \citet{CGM15}, and many others;
for a partial survey, see \citet{MHC24} and references therein.
\end{remark*}

A key byproduct of this equivalence between \eqref{eq:RD} and \eqref{eq:EW} is that the latter immediately inherits all the rationality properties of the former, including \textendash\ but not limited to \textendash\ \cref{thm:NRD}. 
This suite of properties has contributed significantly to the popularity of \eqref{eq:EW} in the literature on learning in games, so a natural question that arises is whether a similar link can be established between the nested dynamics \eqref{eq:NRD} and a suitable model for learning in the presence of similarities.
We address this question in the following sections.

\subsection{\Acl{NLC}}
\label{sec:NLC}

Going back to the derivation of the dynamics \eqref{eq:EW}, a key observation is that the action selection rule \eqref{eq:LC} puts all alternatives on an equal footing and is oblivious to any exogenous similarities they might possess.
Albeit reasonable in certain contexts, this feature of \eqref{eq:LC} can become problematic when the inclusion of irrelevant alternatives ends up conflating the action choices of the population and lead to undesirable behaviors.

For an illustration, consider the following adaptation of the classical ``red bus\,/\,blue bus'' paradox of \citet{McF74a}.
To wit, suppose that a population of commuters can take a car or bus to work and, on average,
commuting by car takes half as long as commuting by bus.
Accordingly, if the induced propensities \eqref{eq:score} for employing each alternative are, say, $\score_{\car} = -1/2$ and $\score_{\bus} = -1$, the ratio of commuters taking a car over a bus under \eqref{eq:LC} would be $\strat_{\car} / \strat_{\bus} = \exp(-1/2) / \exp(-1) \approx 1.6$ in favor of taking a car, in tune with the fact that commuting by bus takes much longer.
The paradox now comes up if the company operating the bus network paints half of the buses blue and the other half red.
This change has absolutely no effect on the travel time of a bus, but since the commuters' new set of alternatives is $\pures = \{\car,\redbus,\bluebus\}$, the ratio of commuters taking a car under \eqref{eq:LC} would drop to $\strat_{\car} / [\strat_{\redbus} + \strat_{\bluebus}] = \exp(-1/2) / [\exp(-1) + \exp(-1)] \approx 0.8$, \ie half of what it was before the addition of an irrelevant feature (the color of the bus), and no longer consistent with the associated travel times.

To avoid this artificial inflation of similar alternatives, \citet{BA73} and \citet{McF74,McF78} introduced the so-called \ac{NLC} rule, which does not treat all alternatives together at the same time;
instead, it singles out an alternative by progressively filtering out similarities within the set of all possible alternatives, one attribute at a time.
In more detail, given a similarity structure $\struct$ and a profile of propensity scores
$\score = (\score_{1},\dotsc,\score_{\nPures}) \in \R^{\pures}$ on $\pures$,
the \acl{NLC} model unfolds as follows:
\begin{enumerate}
\item
First, the propensity score of a similarity class $\class\atlvl{\lolvl}\in\classes\atlvl{\lolvl}$, $\lolvl=0,\dotsc,\nLvls-1$, is defined inductively as
\begin{equation}
\label{eq:score-class}
\score_{\class\atlvl{\lvl-1}}
	= \temp\atlvl{\lvl}
		\log \!\!\sum_{\class\atlvl{\lvl} \childof \class\atlvl{\lvl-1}}\!\!
			\exp\parens*{\frac{\score_{\class\atlvl{\lvl}}}{\temp\atlvl{\lvl}}}
\end{equation}
where
$\temp\atlvl{\lvl} > 0$, $\lvl=1,\dotsc,\nLvls$, measures the players' \emph{uncertainty level} relative to the $\lvl$-th tier $\classes\atlvl{\lvl}$ of $\struct$.
Thus, starting with the individual alternatives of $\pures$ \textendash\ that is, the \emph{leaves} $\{\pure\} \in \classes\atlvl{\nLvls}$ of $\struct$ \textendash\ propensity scores are backpropagated along $\struct$, one attribute at a time, until reaching the root $\classes\atlvl{0} = \{\pures\}$ of $\struct$.
The reason for defining the class scores $\score_{\class\atlvl{\lolvl}}$ in this precise way is discussed in detail below;
for now, we only note that \eqref{eq:score-class} defines the score of a class as the weighted softmax of the scores of its children, meaning in particular that a class with higher-scoring options will score higher overall.

\item
An alternative $\pure\in\pures$ is drawn by specifying a chain of progressively finer similarity classes $\source \equiv \class\atlvl{0} \parentof \class\atlvl{1} \parentof \dotsb \parentof \class\atlvl{\nLvls} \equiv \{\pure\}$, each selected via the family of conditional probabilities
\begin{equation}
\label{eq:NLC-cond}
\choice[\class\atlvl{\lvl} \vert \class\atlvl{\lvl-1}](\score)
	= \frac
		{\exp(\score_{\class\atlvl{\lvl}} / \temp\atlvl{\lvl})}
		{\exp(\score_{\class\atlvl{\lvl-1}} / \temp\atlvl{\lvl})}
	= \frac
		{\exp(\score_{\class\atlvl{\lvl}} / \temp\atlvl{\lvl})}
		{\sum_{\alt\class\atlvl{\lvl} \sibling \class\atlvl{\lvl}}\exp(\score_{\alt\class\atlvl{\lvl}} / \temp\atlvl{\lvl})}
\end{equation}
for all $\lvl=1,\dotsc,\nLvls$.
In particular, \eqref{eq:NLC-cond} first prescribes choice probabilities $\strat_{\class\atlvl{1}}$ for all classes $\class\atlvl{1}\in\classes\atlvl{1}$ (\ie the coarsest ones);
subsequently, once a class $\class\atlvl{1} \in \classes\atlvl{1}$ has been selected, \eqref{eq:NLC-cond} prescribes the conditional choice probabilities $\strat_{\class\atlvl{2} \given \class\atlvl{1}}$ for all children $\class\atlvl{2}$ of $\class\atlvl{1}$ and draws a class from $\classes\atlvl{2}$ based on $\strat_{\class\atlvl{2} \given \class\atlvl{1}}$.
The process then continues downward along $\struct$ until reaching the finest partition $\classes\atlvl{\nLvls}$ and selecting an alternative $\{\pure\} \equiv \class\atlvl{\nLvls} \in \classes\atlvl{\nLvls}$.

\end{enumerate}

In this way, unrolling the conditional probabilities \eqref{eq:NLC-cond} over the lineage of any given action $\pure\in\pures$, we obtain the \acli{NLC} model
\begin{equation}
\label{eq:NLC-elem}
\tag{NLC}
\choice[\pure](\score)
	= \prod_{\lvl=1}^{\nLvls}\;
	\frac{\exp\parens[\big]{\score_{\class\atlvl{\lvl}}/\temp\atlvl{\lvl}}}
		{\exp\parens[\big]{\score_{\class\atlvl{\lvl-1}}/\temp\atlvl{\lvl}}}
	\quad
	\text{for all $\pure\in\pures$}
\end{equation}
with each $\score_{\class\atlvl{\lvl}}$, $\lvl=1,\dotsc,\nLvls$, given by \eqref{eq:score-class}.
This action selection rule \textendash\ originally due to \citet{BA73} and \citet{McF78} \textendash\ will be the mainstay of our analysis, so some remarks are in order.

The first thing of note is the precise expression \eqref{eq:score-class} of the class propensities $\score_{\class}$.
The reason for this definition is the conditional probability rule \eqref{eq:NLC-cond}, which posits that each attribute selection step of \eqref{eq:NLC-elem} is an instance of the non-nested choice rule \eqref{eq:LC}:
in particular, treating the parent class $\class\atlvl{\lvl-1}$ as a set of alternatives on its own right, the expression \eqref{eq:score-class} is essentially imposed by the requirement $\sum_{\class\atlvl{\lvl} \childof \class\atlvl{\lvl-1}} \choice[\class\atlvl{\lvl} \given \class\atlvl{\lvl-1}](\score) = 1$.
As for the uncertainty parameters $\temp\atlvl{1},\dotsc,\temp\atlvl{\nLvls} > 0$, these are implicitly assumed to model the level of the agents' uncertainty when selecting a particular attribute:
for large values of $\temp\atlvl{\lvl}$, agents tend to be uniformly distributed among the constituent classes of $\classes\atlvl{\lvl}$;
by contrast, when $\temp\atlvl{\lvl}\to0$, nearly the entire population is employing the similarity class $\class\atlvl{\lvl}\in\classes\atlvl{\lvl}$ with the highest aggregate propensity $\score_{\class\atlvl{\lvl}}$.
Since coarser attributes inherently come with higher levels of uncertainty, we will assume throughout that $\temp\atlvl{1} \geq \dotsm \geq \temp\atlvl{\nLvls} > 0$;
this point will play an important role in the sequel.

Now, other than selecting a primitive alternative $\pure\in\pures$, unrolling the chain of probabilities \eqref{eq:NLC-cond} over the lineage $\source \equiv \class\atlvl{0} \parentof \class\atlvl{1} \parentof \dotsb \parentof \class\atlvl{\lvl} \equiv \class$ of a target class $\class\in\classes\atlvl{\lvl}$ yields
\begin{equation}
\label{eq:NLC-class}
\choice[\class](\score)
	= \frac{\exp(\score_{\class}/\temp\atlvl{\lvl})}{\exp(\score_{\class\atlvl{\lvl-1}}/\temp\atlvl{\lvl})}
	\times \dotsm
	\times \frac{\exp(\score_{\class\atlvl{1}}/\temp\atlvl{1})}{\exp(\score_{\source}/\temp\atlvl{1})}
	= \prod\nolimits_{\lvlalt=1}^{\lvl}
		\frac
			{\exp(\score_{\class\atlvl{\lvlalt}}/\temp\atlvl{\lvlalt})}
			{\exp(\score_{\class\atlvl{\lvlalt-1}}/\temp\atlvl{\lvlalt})}
\end{equation}
for the relative share of individuals playing an element of $\class$ under the propensity score profile $\score\in\R^{\pures}$.
Clearly, the element-wise expression \eqref{eq:NLC-elem} is a special case of the more general model \eqref{eq:NLC-class}, and both are a direct consequence of \eqref{eq:NLC-cond}.
In fact, as we show below, the two definitions are equivalent:

\begin{lemma}
\label{lem:NLC-class}
Fix a propensity score profile $\score\in\R^{\nElems}$.
Then, under \eqref{eq:NLC-elem}, we have:
\begin{enumerate}
[label={\arabic*.},ref=\arabic*]
\item
For all $\class\in\struct$, the class selection probabilities $\choice[\class](\score) = \sum_{\pure\in\class} \choice[\pure](\score)$ satisfy \eqref{eq:NLC-class}.
\item
For every child/parent pair $\child \childof \parent\in\struct$, the conditional class selection probabilities $\choice[\child\given\parent](\score) = \choice[\child](\score) / \choice[\parent](\score)$ satisfy \eqref{eq:NLC-cond}.
\end{enumerate}
\end{lemma}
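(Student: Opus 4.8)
The plan is to prove the first claim — the product formula \eqref{eq:NLC-class} for an arbitrary class — and then obtain the second claim almost for free by division. The single substantive ingredient is the normalization identity that the definition \eqref{eq:score-class} is engineered to produce: exponentiating \eqref{eq:score-class} gives $\exp(\score_{\parent}/\temp\atlvl{\lvl}) = \sum_{\child \childof \parent} \exp(\score_{\child}/\temp\atlvl{\lvl})$ for every $\parent \in \classes\atlvl{\lvl-1}$, so that the conditional probabilities \eqref{eq:NLC-cond} of the children of $\parent$ sum to $1$. Everything else is bookkeeping of lineages along the tree $\struct$.

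For the first claim I would induct downward on the level $\lvl$ of the target class $\class$. The base case $\lvl = \nLvls$ is the singleton $\class = \{\pure\}$, where \eqref{eq:NLC-class} coincides term-by-term with the defining product \eqref{eq:NLC-elem}. For the inductive step, fix $\parent \in \classes\atlvl{\lvl-1}$; since its children at level $\lvl$ partition $\parent$, I can write $\choice[\parent](\score) = \sum_{\pure \in \parent} \choice[\pure](\score) = \sum_{\child \childof \parent} \choice[\child](\score)$. The inductive hypothesis expresses each $\choice[\child](\score)$ as the product \eqref{eq:NLC-class} along the lineage of $\child$; because $\child \childof \parent$, the first $\lvl-1$ factors of this product are exactly those appearing on the right-hand side of \eqref{eq:NLC-class} for $\parent$, and hence are common to all children. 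Pulling this common factor out of the sum leaves $\sum_{\child \childof \parent} \exp(\score_{\child}/\temp\atlvl{\lvl})/\exp(\score_{\parent}/\temp\atlvl{\lvl})$, which equals $1$ by the normalization above. The surviving factor is precisely \eqref{eq:NLC-class} for $\parent$, completing the induction.

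The second claim then follows immediately: for a child/parent pair $\child \childof \parent$ with $\child \in \classes\atlvl{\lvl}$, forming the ratio $\choice[\child\given\parent](\score) = \choice[\child](\score)/\choice[\parent](\score)$ from the two instances of \eqref{eq:NLC-class} telescopes away the shared ancestor chain — the lineage of $\child$ truncated at level $\lvl-1$ is the lineage of $\parent$ — and leaves the single surviving factor $\exp(\score_{\child}/\temp\atlvl{\lvl})/\exp(\score_{\parent}/\temp\atlvl{\lvl})$, which is exactly \eqref{eq:NLC-cond}.

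The main obstacle is purely notational. In \eqref{eq:NLC-class} the symbol $\class\atlvl{\lvlalt}$ silently denotes the $\lvlalt$-th ancestor of whichever class is currently under discussion, and its referent shifts as one passes between a child $\child$, its parent $\parent$, and the leaves being summed. The care required is to make explicit that the ancestors of all children of a fixed $\parent$ agree, up through level $\lvl-1$, with the lineage of $\parent$; this is what legitimizes factoring the common terms out of the sum over children. Once this lineage bookkeeping is pinned down, the inductive step reduces to a one-line application of the softmax normalization \eqref{eq:score-class}, and no analytic estimates are needed anywhere.
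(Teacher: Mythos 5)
Your proposal is correct and follows essentially the same route as the paper: a reverse induction on the level of the class, with the inductive step factoring the common ancestor product out of the sum over children and invoking the softmax normalization built into \eqref{eq:score-class}, and the second claim obtained by taking the quotient. No gaps.
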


\begin{proof}
Our proof is by reverse induction on the level $\lvl = \attof{\class}$ of $\class$.
For our first claim, the case $\attof{\class} = \nLvls$ is simply the definition of $\choice[\class](\score)$ so there is nothing to show.
Then, assuming inductively that our assertion holds for some $\lvl < \nLvls$, a simple summation gives
\begin{align}
\choice[\class\atlvl{\lvl-1}](\score)
	&= \smashoperator[r]{\sum_{\class\atlvl{\lvl} \childof \class\atlvl{\lvl-1}}}
	\;\;
	\choice[\class\atlvl{\lvl}](\score)
	= \smashoperator[r]{\sum_{\class\atlvl{\lvl} \childof \class\atlvl{\lvl-1}}}
	\;\;\;
	\prod_{\lvlalt=1}^{\lvl}
		\frac
			{\exp(\score_{\class\atlvl{\lvlalt}}/\temp\atlvl{\lvlalt})}
			{\exp(\score_{\class\atlvl{\lvlalt-1}}/\temp\atlvl{\lvlalt})}
	\notag\\
	&= \prod_{\lvlalt=1}^{\lvl-1}
		\frac
			{\exp(\score_{\class\atlvl{\lvlalt}}/\temp\atlvl{\lvlalt})}
			{\exp(\score_{\class\atlvl{\lvlalt-1}}/\temp\atlvl{\lvlalt})}
	\times\;\;
		\smashoperator{\sum_{\class\atlvl{\lvl} \childof \class\atlvl{\lvl-1}}}
		\;\;\;
		\frac{\exp\parens[\big]{\score_{\class\atlvl{\lvl}}/\temp\atlvl{\lvl}}}
			{\exp\parens[\big]{\score_{\class\atlvl{\lvl-1}}/\temp\atlvl{\lvl}}}
		= \prod_{\lvlalt=1}^{\lvl-1}
		\frac
			{\exp(\score_{\class\atlvl{\lvlalt}}/\temp\atlvl{\lvlalt})}
			{\exp(\score_{\class\atlvl{\lvlalt-1}}/\temp\atlvl{\lvlalt})}
\end{align}
where the last equality is an immediate consequence of the definition \eqref{eq:score-class} of the class scores $\score_{\class\atlvl{\lvl}}$.
This completes the induction and proves our first claim;
for our second claim, simply take the quotient $\choice[\child](\score) / \choice[\parent](\score)$ and apply the first.
\end{proof}


By \cref{lem:NLC-class}, \eqref{eq:NLC-elem} and \cref{eq:NLC-cond,eq:NLC-class} are all equivalent in terms of action distributions, so we will refer to either one as the definition of \acl{NLC}.

\subsection{\Acl{NEW} and the \acl{NRD}}
\label{sec:NEW}

Going back to our learning considerations, combining \eqref{eq:NLC-elem} with the propensity scoring mechanism \eqref{eq:score} yields the \acli{NEW}\acused{NEW} scheme
\begin{equation}
\label{eq:NEW}
\tag{NEW}
\dot\score
	= \payv(\strat)
	\qquad
\strat
	= \choice(\score)
\end{equation}
with $\choice\from\R^{\pures}\to\strats$ defined viasatisfies \eqref{eq:NLC-elem}.
In words, agents following \eqref{eq:NEW} continue to tally the payoffs of their individual actions as they accrue over time;
however, instead of disregarding intrinsic similarities between actions, they now
\begin{enumerate*}
[\itshape a\upshape)]
\item
iteratively assign a propensity score via \eqref{eq:score-class} to each class of strategies that share one or more similarity attributes;
and
\item
they employ the \acl{NLC} rule \eqref{eq:NLC-cond} to select an action based on this iterative assignment of propensities.
\end{enumerate*}

In the rest of this section, we provide a broad generalization of the link between \eqref{eq:RD} and \eqref{eq:EW} that we described in \cref{sec:LC}.
Specifically, we show that, despite their completely different origins, the dynamics \eqref{eq:NRD} and \eqref{eq:NEW} are equivalent:

\begin{theorem}
\label{thm:NEW}
If $\strat(\time) = \choice(\score(\time))$ is a solution orbit of \eqref{eq:NEW}, it satisfies \eqref{eq:NRD} with revision rates
\begin{equation}
\label{eq:temp2rate}
\rate_{\lolvl}
	=
	\begin{cases}
	1/\temp\atlvl{1}
		&\quad
		\text{if $\lolvl=0$},
		\\[\smallskipamount]
	1/\temp\atlvl{\lolvl+1} - 1/\temp\atlvl{\lolvl}
		&\quad
		\text{if $\lolvl=1,\dotsc,\nLvls-1$}.
	\end{cases}
\end{equation}
Conversely, if $\strat(\time)$ is an interior solution orbit of \eqref{eq:NRD}, it satisfies \eqref{eq:NEW} with uncertainty parameters
\begin{equation}
\label{eq:rate2temp}
\temp\atlvl{\lvl}
	= \frac{1}{\sumrate\atlvl{\lvl-1}}
	\equiv \frac{1}{\sum_{\lolvl=0}^{\lvl-1} \rate\atlvl{\lolvl}}
	\quad
	\text{for all $\lvl=1,\dotsc,\nLvls$}.
\end{equation}
\end{theorem}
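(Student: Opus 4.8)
The plan is to differentiate the nested logit map \eqref{eq:NLC-elem} along a trajectory of \eqref{eq:NEW} and to show that the per-capita growth rate it produces is already \eqref{eq:NRD} after a single summation-by-parts. Writing $\classof[\lvl]{\pure}$, $\lvl=0,\dots,\nLvls$, for the lineage of $\pure$ (so $\classof[0]{\pure} = \pures$ and $\classof[\nLvls]{\pure} = \{\pure\}$), taking logarithms in \eqref{eq:NLC-elem} gives $\log\strat_{\pure} = \sum_{\lvl=1}^{\nLvls}\temp\atlvl{\lvl}^{-1}\parens{\score_{\classof[\lvl]{\pure}} - \score_{\classof[\lvl-1]{\pure}}}$, and since the lineage is fixed, differentiating in time yields
\begin{equation}
\frac{\dot\strat_{\pure}}{\strat_{\pure}}
	= \sum_{\lvl=1}^{\nLvls} \frac{1}{\temp\atlvl{\lvl}}
		\bracks*{\dot\score_{\classof[\lvl]{\pure}} - \dot\score_{\classof[\lvl-1]{\pure}}}.
\end{equation}

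The crucial intermediate step \textendash\ the one I expect to carry the argument \textendash\ is the backpropagation identity $\dot\score_{\class} = \classpay_{\class}(\strat)$ for every class $\class\in\struct$: the time derivative of the softmax-aggregated class score \eqref{eq:score-class} is exactly the mean class payoff \eqref{eq:pay-class}. I would prove this by reverse induction on the level of $\class$. The leaves $\{\pure\}$ satisfy it by the score update $\dot\score_{\pure} = \payv_{\pure}(\strat)$; and for a parent class $\parent$, differentiating the recursion \eqref{eq:score-class} produces $\dot\score_{\parent} = \sum_{\child\childof\parent}\strat_{\child\given\parent}\,\dot\score_{\child}$, precisely because the logarithmic derivative of the weighted softmax reproduces the conditional choice weights \eqref{eq:NLC-cond}. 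Substituting the inductive hypothesis $\dot\score_{\child} = \classpay_{\child}(\strat)$ and using $\strat_{\child\given\parent}\classpay_{\child}(\strat) = \strat_{\parent}^{-1}\sum_{\pure\in\child}\strat_{\pure}\payv_{\pure}(\strat)$ collapses the sum over the children (which partition $\parent$) into $\classpay_{\parent}(\strat)$, closing the induction.

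With the identity in hand the growth rate reads $\dot\strat_{\pure}/\strat_{\pure} = \sum_{\lvl=1}^{\nLvls}\temp\atlvl{\lvl}^{-1}\bracks{\classpay_{\classof[\lvl]{\pure}}(\strat) - \classpay_{\classof[\lvl-1]{\pure}}(\strat)}$. Abbreviating $a_{\lvl} = \classpay_{\classof[\lvl]{\pure}}(\strat)$ (so $a_{\nLvls} = \payv_{\pure}(\strat)$ and $a_{0} = \pay(\strat)$) and applying Abel summation to $\sum_{\lvl}\temp\atlvl{\lvl}^{-1}(a_{\lvl}-a_{\lvl-1})$ regroups it as $\sum_{\lolvl=0}^{\nLvls-1}\rate\atlvl{\lolvl}\parens{a_{\nLvls}-a_{\lolvl}}$, where matching the coefficient of each $a_{\lolvl}$ forces $\rate\atlvl{0} = 1/\temp\atlvl{1}$ and $\rate\atlvl{\lolvl} = 1/\temp\atlvl{\lolvl+1} - 1/\temp\atlvl{\lolvl}$, which is \eqref{eq:temp2rate}; this is \eqref{eq:NRD} with those rates. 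A telescoping check then gives $\sumrate\atlvl{\lvl-1} = 1/\temp\atlvl{\lvl}$, which simultaneously inverts \eqref{eq:temp2rate} into \eqref{eq:rate2temp}, certifies that the standing assumption $\temp\atlvl{1}\geq\dots\geq\temp\atlvl{\nLvls}$ is equivalent to $\rate\atlvl{\lolvl}\geq 0$, and shows the residual scale $1/\temp\atlvl{\nLvls}$ is absorbed exactly as the factor $\temp^{-1}$ was in the non-nested correspondence \eqref{eq:EW2RD} (with the normalization \eqref{eq:normal} corresponding to $\temp\atlvl{\nLvls}=1$).

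For the converse I would avoid repeating the computation. Given an interior solution $\strat(\time)$ of \eqref{eq:NRD}, choose $\score(\tstart)$ with $\choice(\score(\tstart)) = \strat(\tstart)$ \textendash\ possible because $\choice$ maps $\R^{\pures}$ onto $\relint\strats$, each conditional softmax in \eqref{eq:NLC-cond} being onto the relative interior of its sibling simplex \textendash\ solve the autonomous system $\dot\score = \payv(\choice(\score))$ from that datum, and set $\tilde\strat = \choice(\score)$. By the forward direction $\tilde\strat$ solves \eqref{eq:NRD} with rates \eqref{eq:temp2rate} and $\tilde\strat(\tstart) = \strat(\tstart)$, so uniqueness for the \eqref{eq:NRD} vector field \textendash\ locally Lipschitz on $\relint\strats$ \textendash\ forces $\tilde\strat = \strat$, whence $\strat = \choice(\score)$ solves \eqref{eq:NEW}. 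The one genuine obstacle is thus the backpropagation identity: all else is bookkeeping, but that identity is where the particular softmax form \eqref{eq:score-class} of the class scores pays off, converting the recursive nesting of propensities into the recursive nesting of class means that drives \eqref{eq:NRD}.
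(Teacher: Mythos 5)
Your proposal is correct and follows essentially the same route as the paper: the key step in both is the backpropagation identity $\dot\score_{\class} = \classpay_{\class}(\strat)$ proved by induction along the hierarchy (the paper's \cref{prop:NEW-class}), followed by the same Abel-type regrouping of the telescoping lineage sum that produces the rates \eqref{eq:temp2rate} (you do it on $\log\strat_{\pure}$, the paper on the regrouped product of exponentials, which is the same computation). Your explicit converse via surjectivity of $\choice$ onto $\relint\strats$ and uniqueness for the \eqref{eq:NRD} flow is a welcome bit of extra care that the paper leaves implicit.
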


Importantly, as an immediate consequence of \cref{thm:NEW}, we have:

\begin{corollary}
\label{cor:NEW}
Let $\game\equiv\gamefull$ be a population game, and let $\struct$ be a similarity structure on $\pures$.
Then the conclusions of \cref{thm:NRD} apply as stated to the nested dynamics \eqref{eq:NEW}.
\end{corollary}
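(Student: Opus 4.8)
The plan is to prove both directions by differentiating the \acl{NLC} map along the score flow $\dot\score=\payv(\strat)$ of \eqref{eq:NEW} and matching the result termwise against \eqref{eq:NRD}; the converse then follows once we observe that the parameter maps \eqref{eq:temp2rate} and \eqref{eq:rate2temp} are mutually inverse. The entire argument hinges on a single auxiliary identity: \emph{along any solution of} \eqref{eq:NEW}, \emph{the instantaneous rate of change of every class score equals the mean payoff of that class}, namely $\dot\score_{\class}=\classpay_{\class}(\strat)$ for all $\class\in\struct$, with $\classpay_{\class}$ as in \eqref{eq:pay-class}. First I would establish this by reverse induction on the level $\lvl=\attof{\class}$, working down from the leaves to the root.

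For the inductive identity, differentiating the softmax aggregation \eqref{eq:score-class} gives
\[
\dot\score_{\class\atlvl{\lvl-1}}
	= \sum_{\class\atlvl{\lvl}\childof\class\atlvl{\lvl-1}}
		\choice[\class\atlvl{\lvl}\vert\class\atlvl{\lvl-1}](\score)\,\dot\score_{\class\atlvl{\lvl}},
\]
so the derivative of a parent score is the conditional-choice-weighted average of the derivatives of its children's scores. By \cref{lem:NLC-class} these weights coincide with the conditional population shares $\strat_{\class\atlvl{\lvl}}/\strat_{\class\atlvl{\lvl-1}}$, so $\dot\score_{\class\atlvl{\lvl-1}}$ is exactly the $\strat$-average of $\dot\score_{\class\atlvl{\lvl}}$ over the children of $\class\atlvl{\lvl-1}$. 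At the leaves $\class\atlvl{\nLvls}=\{\pure\}$ we have $\dot\score_{\pure}=\payv_{\pure}(\strat)=\classpay_{\{\pure\}}(\strat)$; feeding the inductive hypothesis $\dot\score_{\class\atlvl{\lvl}}=\classpay_{\class\atlvl{\lvl}}(\strat)$ into the display and using that the children of $\class\atlvl{\lvl-1}$ partition it reproduces the defining average \eqref{eq:pay-class} of $\classpay_{\class\atlvl{\lvl-1}}(\strat)$, which closes the induction.

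With the identity in hand, taking logarithms in the product formula \eqref{eq:NLC-elem} and differentiating along the flow yields
\[
\frac{\dot\strat_{\pure}}{\strat_{\pure}}
	= \sum_{\lvl=1}^{\nLvls} \frac{1}{\temp\atlvl{\lvl}}
		\bracks*{\classpay_{\classof[\lvl]{\pure}}(\strat) - \classpay_{\classof[\lvl-1]{\pure}}(\strat)}.
\]
Writing $\nu_{\lvl}\defeq\payv_{\pure}(\strat)-\classpay_{\classof[\lvl]{\pure}}(\strat)$, so that $\nu_{\nLvls}=0$ and the bracketed difference equals $\nu_{\lvl-1}-\nu_{\lvl}$, an Abel summation collapses $\sum_{\lvl}\temp\atlvl{\lvl}^{-1}(\nu_{\lvl-1}-\nu_{\lvl})$ into $\temp\atlvl{1}^{-1}\nu_{0}+\sum_{\lolvl=1}^{\nLvls-1}(\temp\atlvl{\lolvl+1}^{-1}-\temp\atlvl{\lolvl}^{-1})\nu_{\lolvl}$. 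Reading off the coefficient of each $\nu_{\lolvl}=\payv_{\pure}(\strat)-\classpay_{\classof[\lolvl]{\pure}}(\strat)$ recovers \eqref{eq:NRD} verbatim, with revision rates given precisely by \eqref{eq:temp2rate}; since this is an algebraic identity in the quantities $\classpay_{\classof[\lvl]{\pure}}(\strat)$, no independence assumptions are needed. Summing \eqref{eq:temp2rate} gives the inverse relation $\temp\atlvl{\lvl}^{-1}=\sumrate\atlvl{\lvl-1}$, which is exactly \eqref{eq:rate2temp}; under this correspondence the normalization $\sumrate\atlvl{\nLvls-1}=1$ from \eqref{eq:normal} matches $\temp\atlvl{\nLvls}=1$, and the monotonicity $\temp\atlvl{1}\geq\dotsb\geq\temp\atlvl{\nLvls}$ matches the nonnegativity of the rates $\rate\atlvl{\lolvl}$, so the two parameterizations carry the admissible range onto itself.

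For the converse, I would not recompute anything but instead run \eqref{eq:NEW} as a self-contained ODE $\dot\score=\payv(\choice(\score))$ in the score variable, with temperatures fixed by \eqref{eq:rate2temp}. Given an interior solution $\strat$ of \eqref{eq:NRD} with rates $\rate\atlvl{\lolvl}$, the nested softmax \eqref{eq:NLC-cond} realizes any interior choice distribution, so I can pick an initial score $\score(\tstart)$ with $\choice(\score(\tstart))=\strat(\tstart)$; letting $\score$ evolve under the NEW ODE and setting $\hat\strat=\choice(\score)$, the forward direction already proved shows that $\hat\strat$ solves \eqref{eq:NRD} with the \emph{same} rates $\rate\atlvl{\lolvl}$ (because \eqref{eq:rate2temp} inverts \eqref{eq:temp2rate}) and with $\hat\strat(\tstart)=\strat(\tstart)$. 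Uniqueness of solutions to the \eqref{eq:NRD} ODE on the interior then forces $\hat\strat=\strat$, so $\strat=\choice(\score)$ for a NEW trajectory $\score$, as claimed. I expect the main obstacle to be the auxiliary identity $\dot\score_{\class}=\classpay_{\class}(\strat)$ rather than the subsequent bookkeeping: it is the one place where the two nesting constructions genuinely interlock, requiring the softmax aggregation \eqref{eq:score-class} and the probabilistic reading of the class scores from \cref{lem:NLC-class} to conspire so that differentiating a log-sum-exp produces an honest conditional-expectation average. Once that is secured, the passage to \eqref{eq:NRD} is a purely algebraic rearrangement.
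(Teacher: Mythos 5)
Your proposal is correct and follows essentially the same route as the paper: the corollary is immediate from the equivalence in \cref{thm:NEW}, whose proof in the paper rests on exactly your auxiliary identity $\dot\score_{\class}=\classpay_{\class}(\strat)$ (this is \cref{prop:NEW-class}, established by the same level-by-level induction on the softmax aggregation). Your logarithmic differentiation plus Abel summation is just a repackaging of the paper's regrouping of the product \eqref{eq:NLC-elem} before differentiating, and your ODE-uniqueness argument for the converse is the standard completion.
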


From a modeling viewpoint,
it is worth noting that \cref{thm:NEW} reveals a somewhat unexpected relation between the revision rates $\rate\atlvl{\lolvl}$ of the nested protocol \eqref{eq:NPPI} and the uncertainty parameters $\temp\atlvl{\lolvl}$ of the nested choice model \eqref{eq:NLC-elem}.
Mutatis mutandis, the uncertainty inherent in a given attribute partition $\classes\atlvl{\lvl}$ of $\pures$ is \emph{inversely proportional} to the \emph{total revision rate} of agents looking to switch outside the classes defined by $\classes\atlvl{\lvl}$.
Since these switches are more likely to inform the agents making them about the payoffs they could encounter by disregarding the similarities of level $\lvl$ or lower, it is not unreasonable to expect that, in a discrete choice context, any corresponding ``uncertainty'' should be likewise lower when these revision rates are higher.
The conversion recipe \eqref{eq:rate2temp} can be seen as a formal version of this high-level, heuristic analogy.

The key element in the proof of \cref{thm:NEW} is \cref{prop:NEW-class} below, which establishes the following property of independent interest:
under \eqref{eq:NEW}, the rate of change of the (recursively defined) propensity score $\score_{\class}$ of any similarity class $\class\in\struct$ is given by the mean class payoff $\classpay_{\class}(\strat)$ of the class in question.%
\footnote{The non-nested analogue of this observation is that, under \eqref{eq:RD}/\eqref{eq:EW}, the changes in each $\score_{\pure}$, $\pure\in\pures$, under \eqref{eq:EW} are given by the corresponding payoffs $\payv_{\pure}(\strat)$ at the current mixed strategy $\strat = \logit(\score)$.}
More formally, we have:

\begin{proposition}
\label{prop:NEW-class}
Under \eqref{eq:NEW}, we have
\begin{equation}
\label{eq:NEW-class}
\dot\score_{\class}
	= \classpay_{\class}(\strat)
	\quad
	\text{for all $\class\in\struct$}.
\end{equation}
\end{proposition}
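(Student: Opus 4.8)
The plan is to prove the identity by reverse (downward) induction on the level $\lvl = \attof{\class}$ of the class $\class$, starting from the leaves of $\struct$ and working upward toward the root $\pures$. For the base case $\lvl = \nLvls$, a class is a singleton $\class = \{\pure\}$, so its recursively defined score is just the individual propensity $\score_{\pure}$, whose time derivative is $\payv_{\pure}(\strat)$ by \eqref{eq:NEW} read componentwise; since $\classpay_{\{\pure\}}(\strat) = \payv_{\pure}(\strat)$ by \eqref{eq:pay-class}, the claim holds trivially at the finest level.

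For the inductive step I would fix a parent class $\parent = \class\atlvl{\lvl-1}$ and assume $\dot\score_{\child} = \classpay_{\child}(\strat)$ for every child $\child \childof \parent$. Differentiating the recursive softmax definition \eqref{eq:score-class} of $\score_{\parent}$ in time and cancelling the leading $\temp\atlvl{\lvl}$, I obtain
\[
\dot\score_{\parent}
	= \frac{\sum_{\child \childof \parent} \dot\score_{\child}\, \exp(\score_{\child}/\temp\atlvl{\lvl})}
		{\sum_{\child \childof \parent} \exp(\score_{\child}/\temp\atlvl{\lvl})}
	= \sum_{\child \childof \parent} \choice[\child \given \parent](\score) \, \dot\score_{\child}\,,
\]
where the second equality is the crux: by the very definition of $\score_{\parent}$, the denominator equals $\exp(\score_{\parent}/\temp\atlvl{\lvl})$, so the softmax weights appearing in the derivative are precisely the conditional choice probabilities \eqref{eq:NLC-cond}, that is, the conditional population shares $\strat_{\child \vert \parent}$. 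In other words, the specific log-sum-exp form of \eqref{eq:score-class} is exactly what turns the score derivative into a $\strat$-weighted average over children.

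Substituting the inductive hypothesis then gives $\dot\score_{\parent} = \sum_{\child \childof \parent} \strat_{\child \vert \parent}\, \classpay_{\child}(\strat)$, and it remains to recognize this convex combination as $\classpay_{\parent}(\strat)$ itself. This is a tower-property computation: writing $\strat_{\child \vert \parent} = \strat_{\child}/\strat_{\parent}$ and expanding each $\classpay_{\child}$ via \eqref{eq:pay-class}, the factors $\strat_{\child}$ cancel and one is left with $\strat_{\parent}^{-1}\sum_{\child\childof\parent}\sum_{\pure\in\child}\strat_{\pure}\payv_{\pure}(\strat)$; since the children of $\parent$ partition $\parent$ by \cref{def:struct}, the double sum collapses to $\sum_{\pure\in\parent}\strat_{\pure}\payv_{\pure}(\strat)$, which is exactly $\strat_{\parent}\classpay_{\parent}(\strat)$. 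This closes the induction.

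I expect the single genuinely delicate point to be the identification of the softmax weights with the conditional shares $\strat_{\child \vert \parent}$: everything hinges on the compatibility between the recursive score assignment \eqref{eq:score-class} and the conditional choice rule \eqref{eq:NLC-cond}, which was engineered precisely so that each backpropagation step is itself an instance of ordinary logit choice. The remaining algebra — the chain rule for log-sum-exp and the partition-based telescoping — is routine.
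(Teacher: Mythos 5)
Your proposal is correct and follows essentially the same route as the paper's proof: induction on the level of the class, with the base case given directly by the definition of \eqref{eq:NEW}, and the inductive step obtained by differentiating the log-sum-exp recursion \eqref{eq:score-class}, identifying the resulting softmax weights with the conditional shares $\strat_{\child \vert \parent}$ via \eqref{eq:NLC-cond}, and collapsing the weighted average of the children's mean payoffs to $\classpay_{\parent}(\strat)$ by the tower property. The point you single out as delicate \textendash\ the compatibility between the recursive score assignment and the conditional choice rule \textendash\ is precisely the step the paper's argument leans on as well.
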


\begin{proof}[Proof of \cref{prop:NEW-class}]
Our proof is by induction on the level $\lvl$ of the class $\class\atlvl{\lvl}$.
To begin with, for $\lvl=\nLvls$, \eqref{eq:NEW-class} is simply the definition of \eqref{eq:NEW}, so there is nothing to show.
Moving forward, to establish the inductive step, assume that $\dot\score_{\class\atlvl{\lvl}} = \classpay_{\class\atlvl{\lvl}}$ for some $\lvl < \nLvls$ and for all $\class\atlvl{\lvl}\in\classes\atlvl{\lvl}$.
Then, for all $\class\atlvl{\lvl-1} \in \classes\atlvl{\lvl-1}$, we have
\begin{align}
\dot\score_{\class\atlvl{\lvl-1}}
	= \frac{d}{dt} \temp\atlvl{\lvl}
		\log \sum_{\class\atlvl{\lvl}\childof\class\atlvl{\lvl-1}}
			\exp\parens*{\frac{\score_{\class\atlvl{\lvl}}}{\temp\atlvl{\lvl}}}
	&= \temp\atlvl{\lvl} \frac
		{\sum_{\class\atlvl{\lvl}\childof\class\atlvl{\lvl-1}} \dot\score_{\class\atlvl{\lvl}}/\temp\atlvl{\lvl} \cdot \exp\parens{\score_{\class\atlvl{\lvl}}/\temp\atlvl{\lvl}}}
		{\sum_{\class\atlvl{\lvl}'\childof\class\atlvl{\lvl-1}} \exp\parens{\score_{\class\atlvl{\lvl}'}/\temp\atlvl{\lvl}}}
	\explain{by \eqref{eq:score-class}}
	\\
	&=\;\;\;
	\sum_{\class\atlvl{\lvl}\childof\class\atlvl{\lvl-1}}
	\dot\score_{\class\atlvl{\lvl}}
		\frac{\strat_{\class\atlvl{\lvl}}}{\strat_{\class\atlvl{\lvl-1}}}
	\explain{by \eqref{eq:NLC-cond}}
	\\
	&=\;\;\;
	\smashoperator{\sum_{\class\atlvl{\lvl}\childof\class\atlvl{\lvl-1}}}
	\;\;\;
	\strat_{\class\atlvl{\lvl} \vert \class\atlvl{\lvl-1}}
		\classpay_{\class\atlvl{\lvl}}(\strat)
	\explain{by induction}\\
	&=\;\;\;
	\smashoperator{\sum_{\class\atlvl{\lvl}\childof\class\atlvl{\lvl-1}}}
	\;\;\;\;
	\sum_{\pure\in\class\atlvl{\lvl}}\strat_{\pure \vert \class\atlvl{\lvl-1}}
		\payv_\pure(\strat)
	= \classpay_{\class\atlvl{\lvl-1}}(\strat),
\end{align}
where the last line following from the definition \eqref{eq:pay-class} of $\payv_{\class}$.
This completes the induction and our proof.
\end{proof}

With this structural proposition in hand, the proof of \cref{thm:NEW} proceeds as follows:

\begin{proof}[Proof of \cref{thm:NEW}]
Fix an alternative $\pure\in\pures$ with lineage $\{\pure\} \equiv \class\atlvl{\nLvls} \childof \dotsb \childof \class\atlvl{1} \childof \class\atlvl{0} \equiv \pures$.
Then, by the definition of the \acl{NLC} rule, we have
\begin{align}
\strat_{\pure}
	&= \prod_{\lvl=1}^{\nLvls}\;
	\frac{\exp\parens[\big]{\score_{\class\atlvl{\lvl}}/\temp\atlvl{\lvl}}}
		{\exp\parens[\big]{\score_{\class\atlvl{\lvl-1}}/\temp\atlvl{\lvl}}}
	\explain{by \eqref{eq:NLC-elem}}
	\\
	&= \exp\parens*{-\frac{\score_{\pures}}{\temp\atlvl{1}}}
	\times \prod_{\lolvl=1}^{\nLvls-1}
		\exp\parens*{\parens*{\frac{1}{\temp\atlvl\lolvl} - \frac{1}{\temp\atlvl{\lolvl+1}}} \, \score_{\class\atlvl{\lolvl}}}
	\times \exp\parens*{\frac{\score_{\pure}}{\temp\atlvl{\nLvls}}}
	\notag
	\\
	&\label{eq:NLC-regrouped}
	= \exp\parens*{\score_{\pure}/\temp_{0}}
	\times \prod_{\lolvl=0}^{\nLvls-1}
		\exp\parens*{-\rate\atlvl{\lolvl} \, \score_{\class\atlvl{\lolvl}}},
\end{align}
with $\rate\atlvl{\lolvl}$ defined as in \eqref{eq:temp2rate}.
Now, if we use \eqref{eq:NLC-regrouped} to view $\strat_{\pure}$ as a function of the associated class scores $\score_{\class\atlvl{\lvl}}$ (viewed momentarily as independent variables), a straightforward differentiation yields
\begin{equation}
\frac{\pd \strat_{\pure}}{\pd \score_{\class\atlvl{\lolvl}}}
	=
	\begin{cases}
		- \rate\atlvl{\lolvl} \strat_{\pure}
			&\quad
			\text{if $\lolvl\in\{0,\dotsc,\nLvls-1\}$},
			\\
		\strat_{\pure}/\temp\atlvl{\nLvls}
			&\quad
			\text{if $\lvl=\nLvls$}.
	\end{cases}
\end{equation}
Thus, by \cref{prop:NEW-class}, we obtain
\begin{align}
\dot\strat_{\pure}
	= \sum_{\lvl=0}^{\nLvls}
		\frac{\pd \strat_{\pure}}{\pd \score_{\class\atlvl{\lolvl}}}
		\dot\score_{\class\atlvl{\lolvl}}
	&= \strat_{\pure}
		\bracks*{
			\frac{\payv_{\pure}(\strat)}{\temp\atlvl{\nLvls}}
			- \sum_{\lolvl=0}^{\nLvls-1} \rate\atlvl{\lolvl} \classpay_{\class\atlvl{\lolvl}}(\strat)}
	\notag\\
	&= \strat_{\pure}
		\bracks*{
			\sum_{\lolvl=0}^{\nLvls-1} \rate\atlvl{\lolvl} \cdot \payv_{\pure}(\strat)
			- \sum_{\lolvl=0}^{\nLvls-1} \rate\atlvl{\lolvl} \classpay_{\class\atlvl{\lolvl}}(\strat)}
	\explain{by \eqref{eq:rate2temp}}
	\\
	&= \strat_{\pure} \sum_{\lolvl=0}^{\nLvls-1}
		\rate\atlvl{\lolvl}\bracks*{\payv_{\pure}(\strat) - \classpay_{\class\atlvl{\lolvl}}(\strat)}.
\end{align}
This shows that $\strat_{\pure}(\time)$ follows the nested dynamics \eqref{eq:NRD}, as was to be shown.
\end{proof}

\section{Links with regularized learning}
\label{sec:FTRL}

We conclude our paper with a discussion of an important link between \eqref{eq:NRD} and a more general class of game dynamics driven by reinforcement and regularization.
To provide the necessarily context, \citet{McF78} presents \eqref{eq:NLC-elem} as the choice probabilities of an \ac{ARUM} whose payoff disturbances follow a generalized-extreme-value joint distribution, and shows that choice probabilities under \eqref{eq:NLC-elem} can be obtained as the logarithmic derivatives of the (recursively-defined) generator of this distribution.%
\footnote{For its part, the logit rule \eqref{eq:LC} 
requires extreme-value distributed payoff disturbances that are independent across strategies;
for more details, see \cite{AdPT92}.}
Expanding on previous work by \citet{Ver96} and \citet{MMRZ22}, we provide here a representation of the choice probabilities $\choice[\pure](\score)$ of \eqref{eq:NLC-elem} and the smoothed maximal scores $\score_{\class\atlvl{\lvl}}$ of \eqref{eq:score-class} as solutions to a nonrecursive optimization problem, and we connect them to a model of regularized learning in games known as \acdef{FTRL}.

To set the stage for all this, recall that the logit choice rule \eqref{eq:LC} can be rewritten as
\begin{align}
\label{eq:LC-sol}
\logit(\score)
	= \nabla\score_{\pures}
	= \argmax_{\strat\in\strats} \braces*{\braket{\score}{\strat} - \hreg(\strat)}
\shortintertext{where}
\label{eq:LC-val}
\score_{\pures}
	= \log \sum_{\pure\in\pures} \exp\parens*{\score_{\pure}}
	= \max_{\strat\in\strats} \braces*{\braket{\score}{\strat} - \hreg(\strat)}.
\end{align}
In the above,
\begin{equation}
\label{eq:entropy}
\hreg(\strat)
	= \temp\insum_{\pure\in\pures} \strat_{\pure} \log \strat_{\pure}
\end{equation}
denotes the negative entropy of the population state $\strat\in\strats$, and we are using the standard continuity convention $0\log0=0$.
Thus, given a propensity score vector $\score\in\R^{\pures}$, \eqref{eq:LC-sol} states that logit choice maximizes the average score $\braket{\score}{\strat}$ minus an entropic penalty term;
the corresponding maximum value is then provided by the softmax function \eqref{eq:LC-val}.

From an evolutionary perspective, this implies that the dynamics \eqref{eq:RD}/\eqref{eq:EW} can be seen as a special case of the \acli{RL} dynamics
\begin{equation}
\label{eq:RL}
\tag{RL}
\dot\score
	= \payv(\strat)
	\qquad
\strat
	= \argmax\nolimits_{\stratalt\in\strats} \{\braket{\score}{\stratalt} - \hreg(\stratalt) \}
\end{equation}
where $\hreg\from\strats\to\R$ is a generic penalty term, referred to as the method's ``regularizer'' \citep{HS09,SS11,MS16}.
In words, \eqref{eq:RL} admits a straightforward interpretation:
agents simply adjust their strategies over time by best-responding to their actions' cumulative payoffs minus a regularization penalty intended to incentivize exploration.
This regularization idea is essentially the one underlying the stochastic \textendash\ or ``smooth'', depending on the context \textendash\ fictitious play process of \citet{FK93}, which has been studied in a wide variety of contexts by \citet{FL95,FL99}, \citet{HS09}, \citet{BM17}, \citet{GVM21}, and many others;
for a partial survey, see \citet{HS09} and references therein.%
\footnote{In the online learning literature, \eqref{eq:RL} is often referred to as \acdef{FTRL}, \cf \citet{SSS06} and \citet{SS11}.}

In the remaining of this section, we extend this interpretation to the nested dynamics \eqref{eq:NRD}/\eqref{eq:NEW}.
This extension is quite intricate and hinges on a delicately tuned, nested variant of the entropic regularizer \eqref{eq:entropy} defined as
\begin{equation}
\label{eq:NER}
\hreg(\strat)
	= \sum_{\lolvl=0}^{\nLvls}
		\diff\atlvl{\lolvl} \sum_{\class\atlvl{\lolvl} \in \classes\atlvl{\lolvl}}
		\strat_{\class\atlvl{\lolvl}} \log\strat_{\class\atlvl{\lolvl}}
\end{equation}
where the weight coefficients $\diff_{\lolvl}$, $\lolvl=0,\dotsc,\nLvls$, are given by
\begin{equation}
\label{eq:temp2diff}
\diff\atlvl{\lolvl}
	= \begin{cases}
		\temp\atlvl{\lolvl} - \temp\atlvl{\lolvl+1}
			&\quad
			\text{for $\lolvl=0,\dotsc,\nLvls-1$},
		\\
		\temp\atlvl{\nLvls}
			&\quad
			\text{for $\lolvl=\nLvls$}.
	\end{cases}
\end{equation}
A version of this ``nested entropy'' was first derived by \citet{MerSan18} and \citet{MMRZ22} in the context of Riemannian game dynamics and multi-armed bandits respectively.
In the present setting, an interpretation of this definition is as follows:
at the finest level ($\lolvl = \nLvls$), the term $\sum_{\pure\in\pures} \strat_{\pure} \log\strat_{\pure}$ measures the (negative) entropy of the population state $\strat$, with no similarities taken into account.
Then, at each progressively coarser level $\lolvl = \nLvls-1,\dotsc,0$, the various summands of the form $\sum_{\class\atlvl{\lolvl} \in \classes\atlvl{\lolvl}} \strat_{\class\atlvl{\lolvl}} \log\strat_{\class\atlvl{\lolvl}}$ measure the entropy of the probability distribution induced by $\strat$ on $\simplex(\classes\atlvl{\lolvl})$, that is, restricted to $\lolvl$-tier similarity classes.
These individual entropy measures do not enter the definition of $\hreg_{\class}$ with uniform weight, but are suitably adjusted by the uncertainty level $\temp\atlvl{\lvl}$ asssociated to each similarity partition $\classes\atlvl{\lvl}$ of $\struct$;
the specific choice of the weight coefficients involves a significant degree of hindsight, but the fact that it completes the triangle with \eqref{eq:rate2diff} and \eqref{eq:temp2rate} is, of course, not a coincidence.


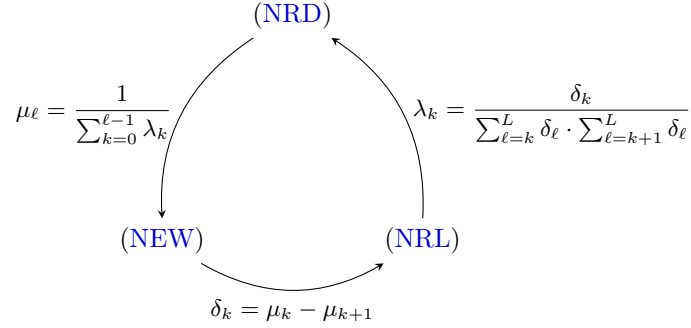
\begin{figure}[tbp]

\begin{tikzpicture}
[scale=1,
class/.style={inner sep=2pt},
desc/.style={rounded corners=2pt,fill=white,inner sep=2pt,align=left},
edgestyle/.style={->},
label/.style={circle,fill=white,inner sep=0pt},
nest/.style={densely dashed,rounded corners=2pt,draw=black},
connect/.style={draw=black,-},
edgestyle/.style={-},
>=stealth]

\def\side{2}
\def\legendpos{-5}
\def\costhirty{0.8660256}
\def\cosfortyfive{0.7071068}

\node (NRD) at (0,\side) {\eqref{eq:NRD}};
\node (NEW) at (-\costhirty*\side,-\side/2) {\eqref{eq:NEW}};
\node (NRL) at (\costhirty*\side,-\side/2) {\eqref{eq:NRL}};



\draw (NRD) [->,bend right] to node [midway,left] {\small$\hphantom{\sum_{\lvl=\lolvl+1}^{\nLvls} \diff\atlvl{\lvl}}\dis\temp\atlvl{\lvl} = \frac{1}{\sum_{\lolvl=0}^{\lvl-1}\rate\atlvl{\lolvl}}$} (NEW);

\draw (NEW) [->,bend right] to node [midway,below] {\small$\diff\atlvl{\lolvl} = \temp\atlvl{\lolvl} - \temp\atlvl{\lolvl+1}$} (NRL);

\draw (NRL) [->,bend right] to node [midway,right] {\small$\dis\rate\atlvl{\lolvl} = \frac{\diff\atlvl{\lolvl}}{\sum_{\lvl=\lolvl}^{\nLvls} \diff\atlvl{\lvl} \cdot \sum_{\lvl=\lolvl+1}^{\nLvls} \diff\atlvl{\lvl}}$} (NRD);

\end{tikzpicture}
\caption{The triple equivalence between \eqref{eq:NRD}, \eqref{eq:NEW}, and \eqref{eq:NRL}.}
\label{fig:triple}
\end{figure}


With this definition in hand, we have the following equivalence:

\begin{theorem}
\label{thm:NLC}
For all $\score\in\R^{\pures}$, the probability distribution \eqref{eq:NLC-elem} can be written as
\begin{equation}
\label{eq:NLC-reg}
\choice(\score)
	= \argmax_{\strat\in\strats} \braces*{\braket{\score}{\strat} - \hreg(\strat)}
\end{equation}
with $\hreg(\strat)$ given by \eqref{eq:NER}.
Consequently $\orbit{\time} = \choice(\score(\time))$ is a solution orbit of \eqref{eq:NEW} if and only if it is a solution orbit of the \acl{NRL} dynamics
\begin{equation}
\label{eq:NRL}
\tag{NRL}
\dot\score
	= \payv(\strat)
	\qquad
\strat
	= \argmax_{\stratalt\in\strats}
		\braces*{
			\braket{\score}{\stratalt}
			- \sum_{\lolvl=0}^{\nLvls} \diff\atlvl{\lolvl}
				\sum_{\class\atlvl{\lolvl} \in \classes\atlvl{\lolvl}}
					\stratalt_{\class\atlvl{\lolvl}} \log\stratalt_{\class\atlvl{\lolvl}}
		}
\end{equation}
with $\diff\atlvl{\lolvl}$ given by \eqref{eq:temp2diff}.
\end{theorem}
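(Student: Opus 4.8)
The plan is to prove the variational identity \eqref{eq:NLC-reg} first and derive the dynamical equivalence as an immediate corollary. Indeed, both \eqref{eq:NEW} and \eqref{eq:NRL} share the score dynamics $\dot\score = \payv(\strat)$ and differ only in how $\strat$ is recovered from $\score$; so once \eqref{eq:NLC-reg} shows that $\choice(\score)$ and $\argmax_{\strat\in\strats}\{\braket{\score}{\strat} - \hreg(\strat)\}$ are the \emph{same} function of $\score$, the two vector fields coincide and hence so do their solution orbits. Everything therefore reduces to the variational representation.

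To establish \eqref{eq:NLC-reg}, I would first argue that the maximizer is unique and interior, so that it is pinned down by a single stationarity condition. The regularizer $\hreg$ of \eqref{eq:NER} is a weighted sum of terms $\strat\mapsto\sum_{\class\atlvl{\lolvl}}\strat_{\class\atlvl{\lolvl}}\log\strat_{\class\atlvl{\lolvl}}$, each convex as a composition of $t\mapsto t\log t$ with the linear map $\strat\mapsto\strat_{\class\atlvl{\lolvl}}$. The monotonicity assumption $\temp\atlvl{1}\geq\dotsb\geq\temp\atlvl{\nLvls}>0$ makes the weights nonnegative, since $\diff\atlvl{\lolvl} = \temp\atlvl{\lolvl}-\temp\atlvl{\lolvl+1}\geq 0$ for $\lolvl<\nLvls$, while the finest-level weight $\diff\atlvl{\nLvls}=\temp\atlvl{\nLvls}>0$ multiplies the strictly convex term $\sum_{\pure}\strat_{\pure}\log\strat_{\pure}$; hence $\hreg$ is strictly convex and $\braket{\score}{\strat}-\hreg(\strat)$ has a unique maximizer over the simplex. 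Because $\choice(\score)$ assigns positive probability to every alternative, the candidate maximizer lies in the relative interior of $\strats$, where the nonnegativity constraints are inactive; it thus suffices to exhibit a multiplier $\nu$ with $\score_{\pure} - \partial\hreg/\partial\strat_{\pure} = \nu$ for all $\pure\in\pures$.

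A direct differentiation gives $\partial\hreg/\partial\strat_{\pure} = \sum_{\lolvl=0}^{\nLvls}\diff\atlvl{\lolvl}\,(\log\strat_{\classof[\lolvl]{\pure}}+1)$, since for each level $\lolvl$ exactly one class of $\classes\atlvl{\lolvl}$ contains $\pure$. The crux is then the identity $\sum_{\lolvl=0}^{\nLvls}\diff\atlvl{\lolvl}\log\strat_{\classof[\lolvl]{\pure}} = \score_{\pure}-\score_{\pures}$ when $\strat=\choice(\score)$. To obtain it, I would substitute the product form of \cref{lem:NLC-class}, which along the lineage of $\pure$ reads $\log\strat_{\classof[\lolvl]{\pure}} = \sum_{\lvl=1}^{\lolvl}(\score_{\classof[\lvl]{\pure}}-\score_{\classof[\lvl-1]{\pure}})/\temp\atlvl{\lvl}$, then interchange the order of summation and invoke the telescoping identity $\sum_{\lolvl=\lvl}^{\nLvls}\diff\atlvl{\lolvl}=\temp\atlvl{\lvl}$ — exactly where the weight choice \eqref{eq:temp2diff} earns its keep. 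Each temperature then cancels, leaving a telescoping sum of score differences that collapses to $\score_{\pure}-\score_{\pures}$. Consequently $\score_{\pure}-\partial\hreg/\partial\strat_{\pure} = \score_{\pures}-\sum_{\lolvl=0}^{\nLvls}\diff\atlvl{\lolvl}$, which is independent of $\pure$; taking this common value as the multiplier $\nu$ verifies the stationarity condition, and strict concavity promotes this interior stationary point to the unique global maximizer. This establishes \eqref{eq:NLC-reg}, and with it the theorem.

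I expect the main obstacle to be precisely the telescoping computation of the third paragraph: it is the only place where the somewhat opaque weights \eqref{eq:temp2diff} are genuinely needed, and making the two nested summations collapse cleanly hinges on the identity $\sum_{\lolvl\geq\lvl}\diff\atlvl{\lolvl}=\temp\atlvl{\lvl}$ together with the recursive product formula of \cref{lem:NLC-class}. Everything else — strict convexity, interiority, and the reduction of the $\argmax$ to a single Lagrange condition — is routine once the monotonicity of the temperatures has been used.
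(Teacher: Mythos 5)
Your proof is correct, and it reaches \eqref{eq:NLC-reg} by a genuinely more direct route than the paper's. The paper first builds an auxiliary apparatus of conditional and restricted entropies, proves a decomposition of the nested entropy into conditional entropies (\cref{prop:nest2cond}), and then establishes a class-by-class generalization (\cref{prop:NLC}): it writes down the KKT system for the restricted problem over $\simplex(\class)$, \emph{solves} it by unrolling the resulting chain of power products one attribute at a time, identifies the optimal value with the recursively defined class score $\score_{\class}$, and closes with the Legendre identity; \cref{thm:NLC} is then the special case $\class=\pures$. You instead take the known candidate $\choice(\score)$, compute $\pd\hreg/\pd\strat_{\pure}=\sum_{\lolvl}\diff\atlvl{\lolvl}\,(1+\log\strat_{\classof[\lolvl]{\pure}})$, and \emph{verify} stationarity via the telescoping identity $\sum_{\lolvl\geq\lvl}\diff\atlvl{\lolvl}=\temp\atlvl{\lvl}$ applied to the product formula of \cref{lem:NLC-class} \textendash\ which is exactly the algebraic fact that also powers the paper's computation, just deployed in the opposite direction (verification of a candidate rather than derivation of the solution). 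Your supporting steps are all sound: the objective is concave, your candidate is interior and satisfies the first-order conditions with a $\pure$-independent multiplier, strict convexity of $\hreg$ on the simplex (from $\temp\atlvl{1}\geq\dotsb\geq\temp\atlvl{\nLvls}>0$, with the finest-level term supplying strictness) upgrades it to the unique maximizer, and the dynamical equivalence is indeed immediate once the two maps $\score\mapsto\strat$ coincide. What the paper's longer route buys is the extra structural content of \cref{prop:NLC} \textendash\ the identification of the recursive log-sum-exp scores $\score_{\class}$ as restricted convex conjugates of the nested entropy, and the derivative representation $\choice[\pure\vert\class](\score)=\pd\score_{\class}/\pd\score_{\pure}$ \textendash\ which your verification argument does not produce, but which the theorem as stated does not require.
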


Thus, combining \cref{thm:NEW,thm:NLC}, we obtain a triple equivalence between the revisionist, reinforcement, and regularized learning viewpoints, as represented schematically in \cref{fig:triple} (the proof of \cref{thm:NLC} is presented in \cref{app:NLC}).
Importantly, the reagularized dynamics \eqref{eq:RL} \textendash\ of which \eqref{eq:NRL} constitute a special case \textendash\ have been studied extensively in the online learning literature, so this triple equivalence can be used to infer a number of additional properties for \eqref{eq:NRD} which would otherwise be particularly difficult to piece together \textendash\ from recurrence in zero-sum \citep{MPP18} and harmonic games \citep{LMP24} to evolutionary permanence and impermanence in the spirit of \citet{HS88}.
We defer such examinations to future work.

\appendix
\setcounter{remark}{0}
\numberwithin{equation}{section}	
\numberwithin{lemma}{section}	
\numberwithin{proposition}{section}	
\numberwithin{theorem}{section}	
\numberwithin{corollary}{section}	

\section{Proof of \cref{thm:NRD}}
\label{app:NRD}

Our aim in this appendix is to prove \cref{thm:NRD,prop:domrate} on the rationality and convergence properties of the dynamics \eqref{eq:NRD}/\eqref{eq:NEW}.
We begin with a short calculation which is very useful in several parts of our proof:

\begin{lemma}
\label{lem:paydiff}
Let $\orbit{\time}$ be a solution orbit of \eqref{eq:NRD}, and set
\begin{equation}
\label{eq:KL-diff}
\lyap_{\pure\purealt}(\time)
	= \breg_{\nLvls}(\bvec_{\pure},\orbit{\time}) - \breg_{\nLvls}(\bvec_{\purealt},\orbit{\time})
	= \sum_{\lvl=1}^{\nLvls} \diff\atlvl{\lvl} \log\frac{\orbit[{\classof[\lvl]{\purealt}}]{\time}}{\orbit[{\classof[\lvl]{\pure}}]{\time}},
\end{equation}
for all $\pure,\purealt\in\pures$.
Then $\dot\lyap_{\pure\purealt}(\time) = \payv_{\purealt}(\orbit{\time}) - \payv_{\pure}(\orbit{\time})$.
\end{lemma}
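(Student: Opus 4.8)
The plan is to read off both equalities in \eqref{eq:KL-diff} directly from the definition of the nested divergence \eqref{eq:NKL}, and then to obtain the derivative formula as an immediate corollary of \cref{prop:KL}. Since $\lyap_{\pure\purealt}$ is assembled from the functionals $\breg_{\nLvls}(\bvec_{\pure},\cdot)$ and $\breg_{\nLvls}(\bvec_{\purealt},\cdot)$ evaluated along a single orbit, almost all of the substantive work has already been done in \cref{prop:KL}, and what remains is bookkeeping at the vertices of $\strats$.

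First I would verify the explicit log-ratio form. Evaluating \eqref{eq:NKL} at the base point $\bvec_{\pure}$ — the vertex of $\strats$ placing unit mass on $\pure$ — the aggregate mass satisfies $(\bvec_{\pure})_{\class} = \oneof{\pure\in\class}$ for every $\class$, so at each level $\lvl$ exactly one class, namely the ancestor $\classof[\lvl]{\pure}$, carries mass $1$ while all others carry mass $0$. With the continuity convention $0\log 0 = 0$, every summand with $\pure\notin\class$ drops out and the nested divergence collapses to $\breg_{\nLvls}(\bvec_{\pure},\orbit{\time}) = -\sum_{\lvl=1}^{\nLvls}\diff\atlvl{\lvl}\log\orbit[{\classof[\lvl]{\pure}}]{\time}$. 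Writing the analogous identity for $\bvec_{\purealt}$ and subtracting yields precisely the weighted sum of log-ratios appearing in \eqref{eq:KL-diff}, which establishes the second equality in the statement.

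It then remains to differentiate. Applying \cref{prop:KL} separately with the fixed base points $\base = \bvec_{\pure}$ and $\base = \bvec_{\purealt}$ gives $\tfrac{d}{dt}\breg_{\nLvls}(\bvec_{\pure},\orbit{\time}) = \braket{\payv(\orbit{\time})}{\orbit{\time} - \bvec_{\pure}}$ and likewise for $\purealt$; subtracting, the common term $\braket{\payv(\orbit{\time})}{\orbit{\time}}$ cancels, and we are left with $\dot\lyap_{\pure\purealt}(\time) = \braket{\payv(\orbit{\time})}{\bvec_{\purealt} - \bvec_{\pure}} = \payv_{\purealt}(\orbit{\time}) - \payv_{\pure}(\orbit{\time})$, where the final step uses $\braket{\payv(\orbit{\time})}{\bvec_{\pure}} = \payv_{\pure}(\orbit{\time})$ for a vertex state. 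There is essentially no obstacle here; the only point requiring mild care is that \cref{prop:KL} is stated for interior orbits, so I would restrict attention to interior solutions $\orbit{\time}$ — which is in any case what makes the class shares $\orbit[{\classof[\lvl]{\pure}}]{\time}$ positive and the logarithms in \eqref{eq:KL-diff} well-defined.
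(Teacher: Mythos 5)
Your proposal is correct and follows essentially the same route as the paper: apply \cref{prop:KL} at the two vertex base points $\bvec_{\pure}$ and $\bvec_{\purealt}$, subtract so that the common term $\braket{\payv(\strat)}{\strat}$ cancels, and read off $\braket{\payv(\strat)}{\bvec_{\purealt}-\bvec_{\pure}} = \payv_{\purealt}(\strat)-\payv_{\pure}(\strat)$. Your explicit verification of the log-ratio form of \eqref{eq:KL-diff} and the remark on restricting to interior orbits are sensible additions that the paper leaves implicit.
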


\begin{proof}
By \cref{prop:KL}, we readily get
\begin{equation*}
\dot\lyap_{\pure\purealt}
	= \braket{\payv(\strat)}{\strat - \bvec_{\pure}} - \braket{\payv(\strat)}{\strat - \bvec_{\purealt}}
	= \braket{\payv(\strat)}{\bvec_{\purealt} - \bvec_{\pure}}
	= \payv_{\purealt}(\strat) - \payv_{\pure}(\strat).\qedhere
\end{equation*}
\end{proof}

We are now in a position to carry out the proof of \cref{thm:NRD}, one property at a time.

\para{Proof of \cref{part:dom}: Elimination of strictly dominated strategies}
Suppose that $\pure\in\pures$ is strictly dominated by $\purealt\in\pures$ and let $\orbit{\time}$ be a solution of \eqref{eq:NRD} with $\strat(\tstart) \in \relint(\strats)$, so $\lyap_{\pure\purealt}(\tstart)<\infty$.
Then, \cref{lem:paydiff} implies that $\dot\lyap_{\pure\purealt}\geq \paydiff$
for some $\paydiff>0$, from which it follows that
\begin{equation}
\label{eq:dom-Lyap-1}
-\sum_{\lvl=1}^{\nLvls} \diff\atlvl{\lvl} \log \orbit[{\classof[\lvl]{\pure}}]{\time}
	\geq \lyap_{\pure\purealt}(\time) \geq \lyap_{\pure\purealt}(\tstart) + \paydiff\time
	\to \infty
	\quad
	\text{as $t\to\infty$}.
\end{equation}
Since $\strat_{\classof[\lvl]{\pure}} \geq \strat_{\pure}$ for all $\lvl=1,\dotsc,\nLvls$, we obtain
\begin{equation}
\sum_{\lvl=1}^{\nLvls} \diff\atlvl{\lvl} \log \orbit[{\classof[\lvl]{\pure}}]{\time}
	\geq \sum_{\lvl=1}^{\nLvls} \diff\atlvl{\lvl} \log\orbit[\pure]{\time}
\end{equation}
so, by \eqref{eq:dom-Lyap-1}, we conclude that $\orbit[\pure]{\time} \to 0$ as $\time\to\infty$, \ie $\pure$ becomes extinct along $\orbit{\time}$.
\hfill
\qed

\para{Proof of \cref{part:stat}: Stationary points}
By definition, $\eq$ is a \acl{RE} of $\game$ if and only if $\payv_{\pure}(\eq) = \payv_{\purealt}(\eq)$ whenever $\pure,\purealt\in\supp(\eq)$.
Thus, stationarity of \aclp{RE} in \eqref{eq:NRD} follows immediately by observing that $\payv_{\pure}(\eq) = \classpay_{\classof[\lvl]{\pure}}(\eq)$ whenever $\eq_{\pure}>0$.

Conversely, assume that $\eq$ is a stationary point of \eqref{eq:NRD} which is not a \acl{RE} of $\game$.
Then, for all $\pure\in\argmax_\purealt\payv_{\purealt}(\eq)$ and all $\lvl=1,\dotsc,\nLvls$, we have $\payv_{\pure}(\eq) \geq \classpay_{\classof[\lvl]{\pure}}(\eq)$, with the inequality being strict for $\lvl < \nLvls$.
Since $\rate_{0}>0$, it follows that the \acl{RHS} of \eqref{eq:NRD} is strictly positive at $\eq$, contradicting the assumption that $\eq$ is stationary.
\hfill
\qed

\para{Proof of \cref{part:limit,part:Lyap}: Limits of interior solutions and Lyapunov stability}
We will prove both claims in tandem by showing the following more general result:
if every neighborhood $\nhd$ of $\eq\in\strats$ admits an interior solution $\orbit{\time}$ that remains in $\nhd$ for all $\time\geq0$, then $\eq$ is a \acl{NE}.

To establish this more general claim, assume to the contrary that $\eq$ is not a \acl{NE}, so $\payv_{\pure}(\eq) < \payv_{\purealt}(\eq)$ for some $\pure\in\supp(\eq)$, $\purealt\in\pures$.
Then there exists a neighborhood $\nhd$ of $\eq$ in $\strats$ such that
$\strat_{\pure} > \eq_{\pure}/2$
and
$\payv_{\purealt}(\strat) - \payv_{\pure}(\strat) \geq \paydiff$ for some $\paydiff>0$ and for all $\strat\in\nhd$.
By assumption, there exists an interior solution orbit $\orbit{\time}$ of \eqref{eq:NRD} that is contained in $\nhd$ for all $\time\geq0$.  
Since $\payv_{\pure}(\orbit{\time}) -\payv_{\purealt}(\orbit{\time}) <- \paydiff$, invoking \cref{lem:paydiff} in the same way as in the proof of Part 1 implies that $\lyap_{\pure\purealt}(\time)\to\infty$ as $t\to\infty$.
However, since $\orbit[\pure]{\time} > \eq_{\pure}/2$ for all $\time\geq0$ by the definition of $\nhd$, we also have that
\begin{equation}
\lyap_{\pure\purealt}(\time)
	\leq \sum_{\lvl=1}^{\nLvls} \log\frac{\strat_{\classof[\lvl]{\purealt}}(\time)}{\eq_{\pure}/2}
	\leq \sum_{\lvl=1}^{\nLvls} \log\frac{2}{\eq_{\pure}}
	< \infty,
\end{equation}
a contradiction (recall here that $\pure\in\supp(\eq)$).
In turn, going back to our original working assumption, we conclude that $\eq$ must be a \acl{NE} of $\game$, as claimed.
\hfill
\qed

\para{Proof of \cref{part:ESS}: Convergence to \aclp{ESS}}
We consider the case in which $\eq$ is a \acs{GESS}; our claim for non-global \acp{ESS} follows by a straightforward localization of our argument to a sufficiently small neighborhood of $\eq$ defined by a level set of $\breg_{\nLvls}(\eq,\cdot)$.

To begin, define the \emph{effective domain} $\good(\eq)$ of $\breg_{\nLvls}(\eq,\cdot)$ as
\begin{equation}
\label{eq:good}
\good(\eq)
	\equiv \setdef{\strat\in\strats}{\breg_{\nLvls}(\eq,\strat) < \infty}
	= \setdef{\strat\in\strats}{\supp(\eq) \subseteq \supp(\strat)}.
\end{equation}
Since every solution of \eqref{eq:NRD} has constant support, $\good(\eq)$ is an invariant set of \eqref{eq:NRD}.
Moreover, for any solution $\orbit{\time}$ contained in $\good(\eq)$, \cref{prop:KL} and the definition of \acs{GESS} imply that
\begin{equation}
\label{eq:dKL-monotone}
\frac{d}{dt} \breg_{\nLvls}(\eq,\orbit{\time})
	= \braket{\payv(\orbit{\time})}{\orbit{\time} - \eq}
	\leq 0,
\end{equation}
with equality if and only if $\orbit{\time} = \eq$.%
\footnote{In other words, $\breg_{\nLvls}(\eq,\orbit{\time})$ is a strict Lyapunov function for \eqref{eq:NRD} on $\good(\eq)$.}

Consequently, to establish that $\eq$ is asymptotically stable with basin $\good(\eq)$, it suffices to show that $\orbit{\time}$ has no $\omega$-limit points in $\strats\setminus\good(\eq)$ (\cf \citealp[Appendix 7.B]{San10}).
To do so, assume to the contrary that $\orbit{\time}$ admits an $\omega$-limit $\olim\neq\eq$, so $\strat(\time_{\run})\to\olim$ for some divergent sequence of times $\time_{\run}\uparrow\infty$.
By the definition of \eqref{eq:NRD}, we further have
\begin{align}
\abs{\dot\strat_{\pure}}
	&\leq \strat_{\pure}
		\sum_{\lolvl=0}^{\nLvls-1}
			\rate\atlvl{\lolvl}
			\abs{\payv_{\pure}(\strat) - \payv_{\classof[\lolvl]{\pure}}(\strat)}
	\notag\\
	&\leq \sum_{\lolvl=0}^{\nLvls-1} \rate\atlvl{\lolvl}
		\cdot \max_{\strat\in\strats} \abs{\payv_{\pure}(\strat) - \payv_{\classof[\lolvl]{\pure}}(\strat)}
	\leq 2 \max_{\strat\in\strats} \supnorm{\payv(\strat)}
\end{align}
where, in the last line, we used the facts that $\strat_{\pure} \leq 1$ and that $\abs{\classpay_{\class}(\strat)} \leq \max_{\pure\in\class} \abs{\payv_{\pure}(\strat)}$.%
\footnote{The notation $\supnorm{\cdot}$ stands here for the sup-norm on $\R^{\pures}$, viz. $\supnorm{\py} = \max_{\pure} \abs{\py_{\pure}}$.}
It thus follows that there exists an open neighborhood $\nhd$ of $\olim$, scalars $\paydiff,\delta > 0$ and an index $\run_{0}\geq1$ such that $\orbit{\time} \in \nhd$ and $\braket{\payv(\orbit{\time})}{\orbit{\time} - \eq} \leq -\paydiff < 0$ for all $\time\in[\time_{\run},\time_{\run}+\delta]$ and all $\run\geq \run_{0}$.
Hence, using \eqref{eq:dKL-monotone}, we get
\begin{equation}\label{eq:and}
\breg_{\nLvls}(\eq,\strat(\time_{\run}+\delta)) - \breg_{\nLvls}(\eq,\strat(\tstart))
	\leq \int_{0}^{\time_{\run}+\delta} \braket{\payv(\strat(s))}{\strat(\timealt) - \eq} \dd\timealt
	\leq -\paydiff \delta (\run - \run_{0}).
\end{equation}
Hence, letting $\run\to\infty$, we get $\liminf_{t\to\infty} \breg_{\nLvls}(\eq,\orbit{\time}) = -\infty$, a contradiction which proves our original assertion.
\hfill
\qed

\para{Proof of \cref{part:pot}: Convergence in potential games}
We will establish the dynamics' convergence in potential games by showing that the game's potential is a strict (increasing) Lyapunov function for \eqref{eq:NRD}.
Indeed, if $\payv = \nabla\pot$ for some potential function $\pot$ on $\strats$, we have:
\begin{flalign}
\dot\pot
	= \sum_{\pure\in\pures} \frac{\pd\pot}{\pd \strat_{\pure}} \,\dot \strat_{\pure}
	&= \sum_{\pure\in\pures} \payv_{\pure}(\strat) \, \strat_{\pure}
		\, \sum_{\lolvl=0}^{\nLvls-1}
			\rate_{\lolvl}
			\, \bracks*{\payv_{\pure}(\strat) - \classpay_{\classof[\lolvl]{\pure}}(\strat)}
	\explain{by \eqref{eq:NRD}}
	\\
	&= \sum_{\lolvl=0}^{\nLvls-1} \rate_{\lolvl}
		\sum_{\pure\in\pures} \strat_{\pure}
			\payv_{\pure}(\strat)
			\, \bracks*{\payv_{\pure}(\strat) - \classpay_{\classof[\lolvl]{\pure}}(\strat)}
	\explain{rearrange product}
	\\
	&= \sum_{\lolvl=0}^{\nLvls-1} \rate_{\lolvl}
		\sum_{\class\atlvl{\lolvl}\in\classes\atlvl{\lolvl}}
		\sum_{\pure\in\class\atlvl{\lolvl}}
			\strat_{\pure}
			\payv_{\pure}(\strat)
			\, \bracks*{\payv_{\pure}(\strat) - \classpay_{\class\atlvl{\lolvl}}(\strat)}
	\explain{collect parent classes}
	\\
	&\label{eq:dpot}
	= \sum_{\lolvl=0}^{\nLvls-1}
		\rate_{\lolvl}
		\sum_{\class_{\lolvl}\in\classes_{\lolvl}}
			\strat_{\class_{\lolvl}}
			\bracks*{\sum_{\pure\in\class_{\lolvl}}
				\strat_{\pure \vert \class_{\lolvl}}
				\payv_{\pure}^{2}(\strat)
			- \bracks*{\sum_{\pure\in\class_{\lolvl}} \strat_{\pure \vert \class_{\lolvl}}
			\payv_{\pure}(\strat)}^{2}}.
\end{flalign}
By Jensen's inequality applied to the vector of conditional probabilities $(\strat_{\pure\vert\class\atlvl{\lolvl}})_{\pure\in\class\atlvl{\lolvl}} \in \simplex(\class\atlvl{\lolvl})$, we conclude that every inner summand in \eqref{eq:dpot} is non-negative and, moreover, it vanishes if and only if $\strat_{\pure\vert\class\atlvl{\lolvl}} = 1$ for some $\pure\in\class\atlvl{\lolvl}$.%
\footnote{Note also that each of these terms can be interpreted as the conditional variance of $\payv_{\pure}(\strat)$ restricted to class $\class\atlvl{\lolvl}$ of $\classes\atlvl{\lolvl}$.}
We thus conclude that $\dot\pot(\orbit{\time})\geq0$ with equality if and only if $\orbit{\time}$ is a \acl{RE} of $\game$.
Our claim then follows from \cref{part:stat} of our theorem combined with standard Lyapunov function arguments \textendash\ cf.~\citet[Appendix 7.B]{San10}.
\hfill
\qed

\para{Proof of \cref{part:mon}: Convergence in strictly monotone games}
Let $\eq$ be a \acl{NE} of $\game$ (that every population game $\game\equiv\gamefull$ admits an equilibrium follows from standard fixed point arguments, \cf \citealp{San10}).
Then, by combining the variational characterization \eqref{eq:Nash-var} of $\sol$ and the monotonicity postulate \eqref{eq:monotone} for $\game$, we readily get
\begin{equation}
\braket{\payv(\strat)}{\strat - \eq}
	\leq \braket{\payv(\eq)}{\strat - \eq}
	\leq 0
\end{equation}
with equality holding if and only $\strat = \eq$ (by the strict monotonicity of $\game$).
In turn, this implies that the (necessarily unique) equilibrium $\eq$ of a strictly monotone game is a \acs{GESS} thereof, so our convergence claim follows from \cref{part:ESS} of the theorem.
\hfill
\qed

We conclude this appendix with the proof of our result on the rate of extinction of dominated strategies in the presence of similarities.

\begin{proof}[Proof of \cref{prop:domrate}]
Let $\lvl=\deg(\pure,\purealt)$ be the degree of similarity between $\pure$ and $\purealt$.
Then, by the definition \eqref{eq:KL-diff} of $\lyap_{\pure\purealt}$, we have
\begin{equation}
\lyap_{\pure\purealt}(\time)
	= \sum_{\lvlalt=1}^{\nLvls} \diff\atlvl{\lvlalt} \log\frac{\orbit[{\classof[\lvlalt]{\purealt}}]{\time}}{\orbit[{\classof[\lvlalt]{\pure}}]{\time}}
	= \sum_{\lvlalt=\lvl+1}^{\nLvls} \diff\atlvl{\lvlalt} \log\frac{\orbit[{\classof[\lvlalt]{\purealt}}]{\time}}{\orbit[{\classof[\lvlalt]{\pure}}]{\time}}
\end{equation}
on account of the fact that $\classof[\lvlalt]{\pure} = \classof[\lvlalt]{\purealt}$ for all $\lvlalt=1,\dotsc,\lvl$.
Since $\strat_{\classof[\lvlalt]{\purealt}} \leq 1$ and $\strat_{\classof[\lvlalt]{\pure}} \geq \strat_{\pure}$ for all $\lvlalt=\lvl+1,\dotsc,\nLvls$, we further get
\begin{equation}
\label{eq:dom-Lyap-2}
-\sum_{\lvlalt=\lvl+1}^{\nLvls} \diff\atlvl{\lvlalt} \log\orbit[\pure]{\time}
	\geq \sum_{\lvlalt=\lvl+1}^{\nLvls} \diff\atlvl{\lvlalt} \log\frac{\orbit[{\classof[\lvlalt]{\purealt}}]{\time}}{\orbit[{\classof[\lvlalt]{\pure}}]{\time}}
	= \lyap_{\pure\purealt}(\time)
	\geq \lyap_{\pure\purealt}(\tstart) + \paydiff\time
\end{equation}
by \eqref{eq:dom-Lyap-1}.
However, by \eqref{eq:sumweight}, we have
\begin{equation}
\sum_{\lvlalt=\lvl+1}^{\nLvls} \diff\atlvl{\lvlalt}
	= \frac{1}{\sumrate\atlvl{\lvl}}
	= \frac{1}{\sum_{\lolvl=0}^{\lvl} \rate_{\lolvl}}
\end{equation}
so our assertion follows by substituting the above in \eqref{eq:dom-Lyap-2} and solving for $\orbit[\pure]{\time}$.
\end{proof}

\section{Proof of \cref{thm:NLC}}
\label{app:NLC}

Our aim in this appendix is to prove the representation of the \acl{NLC} rule \eqref{eq:NLC-elem} as a regularized best response model (\cf \cref{thm:NLC} in \cref{sec:NLC}).%
\footnote{The proof presented here closely follows the conference paper of \cite{MMRZ22}, which was published when the current paper was under review.}
To that end, given a similarity structure $\struct$ on $\pures$ and a sequence of uncertainty parameters $\temp\atlvl{1} \geq \dotsm \geq \temp\atlvl{\nLvls} > 0$ (with $\temp\atlvl{\nLvls+1} = 0$ by convention), it will be convenient to introduce the following (conditional) measures of entropy:

\begin{enumerate}
\addtolength{\itemsep}{\smallskipamount}

\item
The \emph{nested entropy} of $\strat\in\simplex(\pures)$ relative to $\class\in\classes\atlvl{\lvl}$:
\begin{equation}
\label{eq:entropy-nest}
\hreg_{\class}(\strat)
	= \sum_{\lvlalt=\lvl}^{\nLvls} \diff\atlvl{\lvlalt}
		\;\sum_{\mathclap{\class\atlvl{\lvlalt} \desceq[\lvlalt] \class}}\;
			\strat_{\class\atlvl{\lvlalt}} \log\strat_{\class\atlvl{\lvlalt}}
\end{equation}
where $\diff\atlvl{\lvlalt} = \temp\atlvl{\lvlalt} - \temp\atlvl{\lvlalt+1}$ for all $\lvlalt = 1,\dotsc,\nLvls$.

\item
The \emph{restricted entropy} of $\strat\in\simplex(\pures)$ relative to $\class\in\classes\atlvl{\lvl}$:
\begin{equation}
\label{eq:entropy-restr}
\hreg_{\vert\class}(\strat)
	= \hreg_{\class}(\strat)
		+ \chi_{\simplex(\class)}(\strat)
	= \begin{cases}
		\hreg_{\class}(\strat)
			&\quad
			\text{if $\strat\in\simplex(\class)$},
		\\
		\infty
			&\quad
			\text{otherwise},
	\end{cases}
\end{equation}
where $\chi_{\simplex(\class)}$ denotes the characteristic function of $\simplex(\class)$, \viz 
$\chi_{\simplex(\class)}(\strat) = 0$ if $\strat\in\simplex(\class)$ and $\chi_{\simplex(\class)}(\strat) = \infty$ otherwise.
[Obviously, $\hreg_{\vert\class}(\strat) = \hreg_{\class}(\strat)$ whenever $\strat \in \simplex(\class)$.]

\item
The \emph{conditional entropy} of $\strat\in\simplex(\pures)$ relative to a target class $\class\in\classes\atlvl{\lvl}$:
\begin{equation}
\label{eq:entropy-cond}
\hreg(\strat \vert \class)
	= \temp\atlvl{\lvl+1} \sum_{\classalt \childof \class}
		\strat_{\classalt} \log \frac{\strat_{\classalt}}{\strat_{\class}}
	= \temp\atlvl{\lvl+1} \, \strat_{\class}
		\sum_{\classalt \childof \class}
			\strat_{\classalt \vert \class} \log\strat_{\classalt \vert \class}.
\end{equation}

\end{enumerate}

\begin{remark}
As per our standard conventions, we are treating $\class$ interchangeably as a subset of $\pures$ or as an element of $\struct$;
by analogy, to avoid notational inflation, we are also viewing $\simplex(\class)$ as a subset of $\simplex(\pures)$ \textendash\ more precisely, a face thereof.
Finally, in all cases, the functions $\hreg(\strat\vert\class)$, $\hreg_{\class}(\strat)$ and $\hreg_{\vert\class}(\strat)$ are assumed to take the value $+\infty$ for $\strat \in \R^{\pures} \setminus \simplex(\pures)$.
\endenv
\end{remark}

\begin{remark}
For posterity, we also note that the nested and restricted entropy functions ($\hreg_{\class}(\strat)$ and $\hreg_{\vert\class}(\strat)$ respectively) are both convex over $\simplex(\pures)$.
This is a consequence of the fact that each summand $\strat_{\class} \log\strat_{\class}$ in \eqref{eq:entropy-nest} is convex in $\strat$ and that $\diff\atlvl{\lvlalt} = \temp\atlvl{\lvlalt} - \temp\atlvl{\lvlalt+1} \geq 0$ for all $\lvlalt=1,\dotsc,\nLvls$.
Of course, any two distributions $\strat,\stratalt\in\simplex(\pures)$ that assign the same probabilities to elements of $\class$ but not otherwise have $\hreg_{\class}(\strat) = \hreg_{\class}(\stratalt)$, so $\hreg_{\class}$ is \emph{not} strictly convex over $\simplex(\pures)$ if $\class \neq \pures$.
However, since the function $\sum_{\pure\in\class} \strat_{\pure} \log\strat_{\pure}$ is strictly convex over $\simplex(\class)$, it follows that $\hreg_{\class}$ \textendash\ and hence $\hreg_{\vert\class}$ \textendash\ \emph{is} strictly convex over $\simplex(\class)$.
\endenv
\end{remark}

As a preamble to proving \cref{thm:NLC}, we will require the technical result below, which illustrates the precision connection between the nested and conditional entropy functions:
\smallskip

\begin{proposition}
\label{prop:nest2cond}
For all $\class \in \classes\atlvl{\lvl}$, $\lvl=1,\dotsc,\nLvls$, and for all $\strat\in\simplex(\pures)$, we have:
\begin{align}
\label{eq:nest2cond}
\hreg_{\class}(\strat)
	&= \sum_{\classalt\desceq\class} \hreg(\strat \vert \classalt)
	+ \temp\atlvl{\lvl} \, \strat_{\class} \log\strat_{\class}.
\intertext{Consequently, for all $\strat\in\simplex(\class)$, we have:}
\label{eq:restr2cond}
\hreg_{\vert\class}(\strat)
	&= \sum_{\classalt\desceq\class} \hreg(\strat \vert \classalt).
\end{align}
\end{proposition}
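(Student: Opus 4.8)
The plan is to prove \eqref{eq:nest2cond} by downward (reverse) induction on the level $\lvl = \attof{\class}$ of the target class, running from the leaves $\lvl = \nLvls$ up to the root; the identity \eqref{eq:restr2cond} will then drop out as an immediate corollary. Throughout, the two facts doing the real work are the recursive structure of the nested entropy \eqref{eq:entropy-nest} and the elementary splitting of a single conditional-entropy term \eqref{eq:entropy-cond}.

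For the base case $\lvl = \nLvls$, the class $\class = \{\pure\}$ is a leaf, so the only descendant summand in \eqref{eq:entropy-nest} is $\class$ itself and $\hreg_{\class}(\strat) = \diff\atlvl{\nLvls}\,\strat_{\pure}\log\strat_{\pure} = \temp\atlvl{\nLvls}\,\strat_{\pure}\log\strat_{\pure}$ (using $\temp\atlvl{\nLvls+1} = 0$). On the right-hand side, a leaf has no children, so $\hreg(\strat \vert \class) = 0$ and the sum over $\classalt\desceq\class$ contributes nothing beyond the $\temp\atlvl{\nLvls}\,\strat_{\class}\log\strat_{\class}$ term; the two sides agree. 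The engine of the inductive step is the recursion
\[
\hreg_{\class}(\strat)
  = \diff\atlvl{\lvl}\,\strat_{\class}\log\strat_{\class}
  + \sum_{\classalt\childof\class}\hreg_{\classalt}(\strat),
\]
which I would obtain directly from \eqref{eq:entropy-nest} by peeling off the $\lvl$-level term (whose only summand is $\class$ itself) and regrouping each descendant at level $\geq \lvl+1$ according to the unique child of $\class$ that it descends from; the inner double sum over such descendants is precisely $\hreg_{\classalt}(\strat)$ for the relevant child $\classalt \in \classes\atlvl{\lvl+1}$.

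Applying the induction hypothesis to each child $\classalt\childof\class$ (which sits at level $\lvl+1$) turns the recursion into
\[
\hreg_{\class}(\strat)
  = \diff\atlvl{\lvl}\,\strat_{\class}\log\strat_{\class}
  + \sum_{\classalt\desc\class}\hreg(\strat \vert \classalt)
  + \temp\atlvl{\lvl+1}\sum_{\classalt\childof\class}\strat_{\classalt}\log\strat_{\classalt},
\]
where the nested descendant sums have been reindexed into a single sum over all strict descendants $\classalt\desc\class$. It then remains to absorb the last term using the conditional entropy at $\class$: since the children partition $\class$ we have $\sum_{\classalt\childof\class}\strat_{\classalt} = \strat_{\class}$, whence \eqref{eq:entropy-cond} gives
\[
\hreg(\strat \vert \class)
  = \temp\atlvl{\lvl+1}\sum_{\classalt\childof\class}\strat_{\classalt}\log\strat_{\classalt}
  - \temp\atlvl{\lvl+1}\,\strat_{\class}\log\strat_{\class}.
\]
Substituting this to rewrite $\temp\atlvl{\lvl+1}\sum_{\classalt\childof\class}\strat_{\classalt}\log\strat_{\classalt}$ promotes the strict-descendant sum to a sum over all $\classalt\desceq\class$, while the leftover contributions combine through the telescoping relation $\diff\atlvl{\lvl} + \temp\atlvl{\lvl+1} = \temp\atlvl{\lvl}$ into exactly $\temp\atlvl{\lvl}\,\strat_{\class}\log\strat_{\class}$; this yields \eqref{eq:nest2cond} and closes the induction.

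Finally, \eqref{eq:restr2cond} is immediate: if $\strat\in\simplex(\class)$ then $\strat_{\class} = 1$, so the term $\temp\atlvl{\lvl}\,\strat_{\class}\log\strat_{\class}$ vanishes, and since $\hreg_{\vert\class}(\strat) = \hreg_{\class}(\strat)$ on $\simplex(\class)$ by definition, \eqref{eq:nest2cond} reduces to the claimed identity. The only genuinely delicate point — and the step I would write out most carefully — is the reindexing that collects descendants through children, together with the bookkeeping of the $\diff$- and $\temp$-weights; once the recursion and the splitting of $\hreg(\strat \vert \class)$ are in hand, the telescoping $\diff\atlvl{\lvl} + \temp\atlvl{\lvl+1} = \temp\atlvl{\lvl}$ makes everything collapse.
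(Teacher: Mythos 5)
Your proof is correct, and it is organized differently from the paper's. The paper proves \eqref{eq:nest2cond} by a direct computation: it first establishes a level-wise chain-rule identity \textendash\ writing $\strat_{\child} = \strat_{\child\vert\parent}\strat_{\parent}$ and splitting the logarithm to express $\temp\atlvl{\lvlalt}\sum_{\classalt\desceq[\lvlalt]\class}\strat_{\classalt}\log\strat_{\classalt}$ in terms of the conditional entropies at level $\lvlalt-1$ plus the analogous sum one level up \textendash\ and then telescopes the resulting expression over all levels $\lvlalt=\lvl,\dotsc,\nLvls$ using $\diff\atlvl{\lvlalt} = \temp\atlvl{\lvlalt}-\temp\atlvl{\lvlalt+1}$. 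You instead run a genuine structural induction on the tree, built on the recursion
\begin{equation*}
\hreg_{\class}(\strat)
	= \diff\atlvl{\lvl}\,\strat_{\class}\log\strat_{\class}
	+ \sum_{\classalt\childof\class}\hreg_{\classalt}(\strat),
\end{equation*}
which the paper never isolates, and you apply the chain-rule splitting only once per step, at the single node $\class$. The two arguments rest on exactly the same two ingredients (the entropy chain rule across a parent/child step and the telescoping $\diff\atlvl{\lvl}+\temp\atlvl{\lvl+1}=\temp\atlvl{\lvl}$), but your node-by-node recursion localizes the bookkeeping and avoids the paper's simultaneous manipulation of all levels; the paper's version, in exchange, produces the intermediate identity \eqref{eq:ent-nest3} in a form that is reused almost verbatim in Step 3 of the proof of \cref{prop:NLC}. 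Your base case and the passage from \eqref{eq:nest2cond} to \eqref{eq:restr2cond} via $\strat_{\class}=1$ match the paper's. The one point worth writing out explicitly, as you note, is the reindexing $\sum_{\classalt\childof\class}\sum_{\classaltalt\desceq\classalt} = \sum_{\classaltalt\desc\class}$, which is valid because every strict descendant of $\class$ lies in the subtree of exactly one child.
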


\begin{proof}
Let $\lvl = \attof{\class}$, and fix some attribute label $\lvlalt > \lvl$.
We will proceed inductively by collecting all terms in \eqref{eq:nest2cond} associated to the attribute $\classes\atlvl{\lvlalt}$ and then summing everything together.
Indeed, we have:
\begin{subequations}
\label{eq:ent-nest1}
\begin{align}
\temp\atlvl{\lvlalt}
	\sum_{\classalt \desceq[\lvlalt] \class}
		\strat_{\classalt} \log\strat_{\classalt}
	&= \temp\atlvl{\lvlalt}
		\sum_{\class\atlvl{\lvlalt-1} \desceq[\lvlalt-1] \class}
			\bracks*{
				\sum_{\child \childof \class\atlvl{\lvlalt-1}}
					\strat_{\child} \log\strat_{\child}
				}
	\explain{collect attributes}
	\\
	&= \temp\atlvl{\lvlalt}
		\sum_{\class\atlvl{\lvlalt-1} \desceq[\lvlalt-1] \class}
			\bracks*{
				\sum_{\child \childof \class\atlvl{\lvlalt-1}}
					\strat_{\child \vert \class\atlvl{\lvlalt-1}} \strat_{\class\atlvl{\lvlalt-1}}
					\log(\strat_{\child \vert \class\atlvl{\lvlalt-1}} \strat_{\class\atlvl{\lvlalt-1}})
				}
	\notag\\
	&= \temp\atlvl{\lvlalt}
		\sum_{\class\atlvl{\lvlalt-1} \desceq[\lvlalt-1] \class} 
			\bracks*{
				\sum_{\child \childof \class\atlvl{\lvlalt-1}}
					\strat_{\child \vert \class\atlvl{\lvlalt-1}} \strat_{\class\atlvl{\lvlalt-1}}
					\log\strat_{\child \vert \class\atlvl{\lvlalt-1}}
				}
	\label{eq:ent-nest1a}\\
	&+ \temp\atlvl{\lvlalt}
		\sum_{\class\atlvl{\lvlalt-1} \desceq[\lvlalt-1] \class}
			\bracks*{
				\sum_{\child \childof \class\atlvl{\lvlalt-1}}
					\strat_{\child \vert \class\atlvl{\lvlalt-1}} \strat_{\class\atlvl{\lvlalt-1}}
					\log\strat_{\class\atlvl{\lvlalt-1}}
				}
	\label{eq:ent-nest1b}
\end{align}
\end{subequations}
with the tacit understanding that any empty sum that appears above is taken equal to zero.

Now, by the definition of the nested entropy, we readily obtain that
\begin{subequations}
\label{eq:ent-nest2}
\begin{equation}
\label{eq:ent-nest2a}
\eqref{eq:ent-nest1a}
	= \sum_{\class\atlvl{\lvlalt-1} \desceq[\lvlalt-1] \class}
			\hreg(\strat \vert \class\atlvl{\lvlalt-1})
\end{equation}
whereas, by noting that $\sum_{\child \childof \class\atlvl{\lvlalt-1}} \strat_{\child \vert \class\atlvl{\lvlalt-1}} = 1$ (by the definition of conditional class choice probabilities), \cref{eq:ent-nest1b} becomes
\begin{equation}
\label{eq:ent-nest2b}
\eqref{eq:ent-nest1b}
	= \temp\atlvl{\lvlalt}
		\sum_{\class\atlvl{\lvlalt-1} \desceq[\lvlalt-1] \class}
			\strat_{\class\atlvl{\lvlalt-1}} \log\strat_{\class\atlvl{\lvlalt-1}}.
\end{equation}
\end{subequations}
Hence, combining \cref{eq:ent-nest1,eq:ent-nest2a,eq:ent-nest2b}, we get:
\begin{equation}
\label{eq:ent-nest3}
\temp\atlvl{\lvlalt}
	\sum_{\classalt \desceq[\lvlalt] \class}
		\strat_{\classalt} \log\strat_{\classalt}
	= \sum_{\class\atlvl{\lvlalt-1} \desceq[\lvlalt-1] \class}
			\hreg(\strat \vert \class\atlvl{\lvlalt-1})
	+ \temp\atlvl{\lvlalt}
		\sum_{\class\atlvl{\lvlalt-1} \desceq[\lvlalt-1] \class}
			\strat_{\class\atlvl{\lvlalt-1}} \log\strat_{\class\atlvl{\lvlalt-1}}.
\end{equation}

The above expression is the basic building block of our inductive construction.
Indeed, summing \eqref{eq:ent-nest3} over all $\lvlalt = \nLvls,\dotsc,\lvl = \attof{\class}$, we obtain:
\begin{align}
\hreg_{\class}(\strat)
	&= \sum_{\lvlalt=\lvl}^{\nLvls}
		(\temp\atlvl{\lvlalt} - \temp\atlvl{\lvlalt+1})
		\;\sum_{\mathclap{\classalt \desceq[\lvlalt] \class}}\;
			\strat_{\classalt} \log\strat_{\classalt}
	\explain{by definition}
	\\
	&= \sum_{\lvlalt=\nLvls}^{\lvl+1}
		\bracks*{
			\temp\atlvl{\lvlalt} \sum_{\mathclap{\classalt \desceq[\lvlalt] \class}}
				\strat_{\classalt} \log\strat_{\classalt}
			- \temp\atlvl{\lvlalt+1} \sum_{\mathclap{\classalt \desceq[\lvlalt] \class}}
				\strat_{\classalt} \log\strat_{\classalt}
			}
	+ (\temp\atlvl{\lvl} - \temp\atlvl{\lvl+1}) \, \strat_{\class} \log\strat_{\class}
	\explain{isolate $\class$}
	\\
	&= \sum_{\lvlalt=\nLvls}^{\lvl+1}
		\bracks*{
			\sum_{\class\atlvl{\lvlalt-1} \desceq[\lvlalt-1] \class}
				\hreg(\strat \vert \class\atlvl{\lvlalt-1})
			+ \temp\atlvl{\lvlalt}
				\sum_{\class\atlvl{\lvlalt-1} \desceq[\lvlalt-1] \class}
					\strat_{\class\atlvl{\lvlalt-1}} \log\strat_{\class\atlvl{\lvlalt-1}}
			- \temp\atlvl{\lvlalt+1} \sum_{\mathclap{\classalt \desceq[\lvlalt] \class}}
				\strat_{\classalt} \log\strat_{\classalt}
			}
	\notag\\
	&\qquad
		+ (\temp\atlvl{\lvl} - \temp\atlvl{\lvl+1}) \, \strat_{\class} \log\strat_{\class}
	\explain{by \eqref{eq:ent-nest3}}
	\\
	&= \sum_{\lvlalt=\lvl}^{\nLvls-1}
		\sum_{\classalt \desceq[\lvlalt] \class} \hreg(\strat \vert \classalt)
	+ \temp\atlvl{\lvl} \,
			\strat_{\class} \log\strat_{\class}
	- \temp\atlvl{\nLvls+1}
		\;\sum_{\mathclap{\classalt \desceq[\nLvls] \class}}\;
			\strat_{\classalt} \log\strat_{\classalt}
\label{eq:ent-nest4}
\end{align}
with the last equality following by telescoping the terms involving $\temp\atlvl{\lvlalt}$.
Now, given that $\temp\atlvl{\nLvls+1} = 0$ by convention, the third sum above is zero.
Finally, since the conditional entropy of $\strat$ relative to any childless class is zero by definition, the first sum in \eqref{eq:ent-nest4} can be rewritten as $\sum_{\lvlalt=\lvl}^{\nLvls-1} \sum_{\classalt \desceq[\lvlalt] \class} \hreg(\strat \vert \classalt) = \sum_{\classalt \desceq \class} \hreg(\strat \vert \classalt)$, and our claim follows.

Finally, \eqref{eq:restr2cond} is a consequence of the fact that $\strat_{\class} = 1$ whenever $\strat \in \simplex(\class)$ \textendash\ \ie whenever $\supp(\strat) \subseteq \class$.
\end{proof}

%

We are now in a position to state and prove a class-based version of \cref{thm:NLC}, which is of independent interest, and which contains \cref{thm:NLC} as a special case:

\begin{proposition}
\label{prop:NLC}
The following holds for all $\class \in \struct$ and all $\score\in\R^{\pures}$:
\begin{enumerate}
[label={\arabic*.},ref=\arabic*]
\item
\label[part]{part:score}
The recursively defined propensity score $\score_{\class}$ of $\class$ as given by \eqref{eq:score-class} can be expressed as
\begin{equation}
\label{eq:ent-max}
\score_{\class}
	= \max_{\strat \in \simplex(\class)}
		\braces{\braket{\score}{\strat} - \hreg_{\class}(\strat)}
\end{equation}

\item
\label[part]{part:choice}
The conditional probability of choosing $\pure\in\pures$ given that $\class$ has already been selected under \eqref{eq:NLC-cond} is given by
\begin{equation}
\label{eq:ent-diff}
\choice[\pure \vert \class](\score)
	= \frac{\pd\score_{\class}}{\pd\score_{\pure}}
\end{equation}
and the conditional probability vector $\choice[\vert\class](\score) = (\choice[\pure\vert\class](\score))_{\pure\in\pures}$ solves the problem \eqref{eq:ent-max}, \viz
\begin{equation}
\label{eq:ent-argmax}
\choice[\vert\class](\score)
	= \argmax_{\strat \in \simplex(\class)}
		\braces{\braket{\score}{\strat} - \hreg_{\class}(\strat)}
\end{equation}
\end{enumerate}
\end{proposition}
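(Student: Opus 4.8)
The plan is to prove both claims simultaneously by \emph{reverse induction on the level} $\lvl = \attof{\class}$, descending from the leaves $\classes\atlvl{\nLvls}$ up toward the root. The engine of the whole argument is the classical Gibbs variational principle: for any finite family of reals $\{a_{k}\}$ and any $\temp>0$,
\[
\temp \log \sum_{k} \exp(a_{k}/\temp)
	= \max_{\pi} \Big\{ \textstyle\sum_{k} \pi_{k} a_{k} - \temp \sum_{k} \pi_{k}\log\pi_{k} \Big\},
\]
where $\pi$ ranges over probability vectors and the unique maximizer is the softmax $\pi_{k}\propto\exp(a_{k}/\temp)$. I would record this identity first, since it is precisely the one-level template that the recursion \eqref{eq:score-class} and the conditional rule \eqref{eq:NLC-cond} instantiate. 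For the base case $\lvl=\nLvls$, a leaf $\class=\{\pure\}$ forces $\simplex(\class)$ to be a single point, so $\hreg_{\class}$ vanishes there and the maximization in \eqref{eq:ent-max} returns $\score_{\pure}=\score_{\class}$, while the derivative in \eqref{eq:ent-diff} equals $1$, matching the trivial conditional probability.

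For the inductive step, fix a class $\parent$ at level $\lvl-1$ and assume both claims for all children $\child\childof\parent$ (which sit at level $\lvl$). The crucial structural input is a recursive decomposition of the nested entropy over $\simplex(\parent)$: writing any $\strat\in\simplex(\parent)$ through its conditional-over-children weights $\strat_{\child\vert\parent}$ and the within-child conditional distributions $\strat_{\cdot\vert\child}\in\simplex(\child)$, the identity to establish is
\[
\hreg_{\parent}(\strat)
	= \temp\atlvl{\lvl} \sum_{\child\childof\parent} \strat_{\child\vert\parent}\log\strat_{\child\vert\parent}
	+ \sum_{\child\childof\parent} \strat_{\child\vert\parent}\, \hreg_{\child}(\strat_{\cdot\vert\child}).
\]
This is where \cref{prop:nest2cond} does the heavy lifting: expanding $\hreg_{\parent}=\hreg_{\vert\parent}$ via \eqref{eq:restr2cond} as $\sum_{\classalt\desceq\parent}\hreg(\strat\vert\classalt)$, the single term $\hreg(\strat\vert\parent)$ supplies the top-level entropy $\temp\atlvl{\lvl}\sum_{\child}\strat_{\child\vert\parent}\log\strat_{\child\vert\parent}$, while the scale-invariance of conditional probabilities, $\strat_{\classalt}=\strat_{\child\vert\parent}\,\strat_{\classalt\vert\child}$, turns the remaining descendant terms into $\sum_{\child}\strat_{\child\vert\parent}\sum_{\classalt\desceq\child}\hreg(\strat_{\cdot\vert\child}\vert\classalt)=\sum_{\child}\strat_{\child\vert\parent}\hreg_{\child}(\strat_{\cdot\vert\child})$. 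Equivalently, one can derive the same identity straight from \eqref{eq:entropy-nest}, where the telescoping $\sum_{\lvlalt=\lvl}^{\nLvls}\diff\atlvl{\lvlalt}=\temp\atlvl{\lvl}$ is exactly what forces the coefficient of the ``choose-a-child'' entropy to be $\temp\atlvl{\lvl}$.

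With this decomposition in hand the rest is assembly. Since $\braket{\score}{\strat}=\sum_{\child}\strat_{\child\vert\parent}\braket{\score}{\strat_{\cdot\vert\child}}$ on $\simplex(\parent)$, the objective splits as $\sum_{\child}\strat_{\child\vert\parent}\big[\braket{\score}{\strat_{\cdot\vert\child}}-\hreg_{\child}(\strat_{\cdot\vert\child})\big] - \temp\atlvl{\lvl}\sum_{\child}\strat_{\child\vert\parent}\log\strat_{\child\vert\parent}$. Maximizing the inner distributions first — they decouple across children — and invoking the inductive form of \eqref{eq:ent-max} replaces each bracket by $\score_{\child}$; what survives is exactly the Gibbs template at temperature $\temp\atlvl{\lvl}$ over the children's scores, whose value is $\temp\atlvl{\lvl}\log\sum_{\child}\exp(\score_{\child}/\temp\atlvl{\lvl})=\score_{\parent}$ by \eqref{eq:score-class}. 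This proves \eqref{eq:ent-max} for $\parent$. For \eqref{eq:ent-argmax} and \eqref{eq:ent-diff}, the outer maximizer is the softmax $\strat_{\child\vert\parent}\propto\exp(\score_{\child}/\temp\atlvl{\lvl})$, i.e.\ precisely the conditional rule \eqref{eq:NLC-cond}; chaining it with the inductively optimal inner distributions yields $\choice[\cdot\vert\parent]$, and the derivative identity $\pd\score_{\parent}/\pd\score_{\pure}=\choice[\pure\vert\parent]$ follows either from the envelope theorem — the maximizer is unique since $\hreg_{\parent}$ is strictly convex on $\simplex(\parent)$ — or directly by differentiating \eqref{eq:score-class} and applying the inductive derivative identity.

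The step I expect to be the main obstacle is the entropy decomposition above: one must track absolute versus conditional masses carefully, so that the top-level ``choose-a-child'' entropy emerges with coefficient \emph{exactly} $\temp\atlvl{\lvl}$ (and the deeper entropies survive weighted by $\strat_{\child\vert\parent}$). This bookkeeping — cleanly packaged by \cref{prop:nest2cond}, or else by the telescoping of the weights $\diff\atlvl{\lvlalt}$ — is what makes the variational recursion align term-for-term with the score recursion \eqref{eq:score-class}, and everything else is a routine consequence of Gibbs duality and the induction hypothesis.
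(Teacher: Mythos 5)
Your argument is correct, but it takes a genuinely different route from the paper's. The paper proves \cref{prop:NLC} by writing down the first\-/order (KKT) conditions of the constrained problem \eqref{eq:opt-class} over all of $\simplex(\class)$ at once, exponentiating them into a product formula $\prod_{\lvlalt}\strat_{\class\atlvl{\lvlalt}}^{\diff\atlvl{\lvlalt}} \propto \exp(\score_{\pure})$ along the lineage of each $\pure\in\class$, and then unrolling that chain one attribute at a time (summing over siblings and using the recursion \eqref{eq:score-class}) to identify the maximizer with the conditional probabilities \eqref{eq:NLC-cond}; the optimal value is then computed by substituting back into the conditional\-/entropy decomposition of \cref{prop:nest2cond}, and the derivative identity \eqref{eq:ent-diff} comes from Legendre duality for the convex conjugate $\hconj_{\vert\class}$. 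You instead run a dynamic\-/programming recursion: the chain\-/rule decomposition $\hreg_{\parent}(\strat) = \temp\atlvl{\lvl}\sum_{\child\childof\parent}\strat_{\child\vert\parent}\log\strat_{\child\vert\parent} + \sum_{\child\childof\parent}\strat_{\child\vert\parent}\hreg_{\child}(\strat_{\cdot\vert\child})$ (which checks out, using $\strat_{\classalt}=\strat_{\child}\strat_{\classalt\vert\child}$, $\sum_{\classalt\desceq[\lvlalt]\child}\strat_{\classalt\vert\child}=1$, and the telescoping $\sum_{\lvlalt=\lvl}^{\nLvls}\diff\atlvl{\lvlalt}=\temp\atlvl{\lvl}$) lets you optimize the inner per\-/child distributions first and then apply the one\-/level Gibbs variational principle to the outer weights, so that \eqref{eq:ent-max} for a parent follows from \eqref{eq:ent-max} for its children plus \eqref{eq:score-class}. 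What each approach buys: the paper's Lagrangian computation produces the explicit closed form \eqref{eq:strat-anc} of the maximizer in one global pass and localizes all the combinatorial work in \cref{prop:nest2cond}, while your recursion is more modular and makes transparent \emph{why} the log\-/sum\-/exp recursion \eqref{eq:score-class} and the nested entropy \eqref{eq:NER} are dual to each other level by level; the envelope\-/theorem justification of \eqref{eq:ent-diff} is an acceptable substitute for the paper's appeal to the Legendre identity, provided you note (as the paper does in a remark) that $\hreg_{\class}$ is strictly convex on $\simplex(\class)$ so the maximizer is unique.
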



\begin{proof}
We begin by noting that the optimization problem \eqref{eq:ent-max} can be written more explicitly as
\begin{equation}
\label{eq:opt-class}
\tag{$\Opt_{\class}$}
\begin{aligned}
\textup{maximize}
	&\quad
	\braket{\score}{\strat} - \hreg_{\class}(\strat),
	\\
\textup{subject to}
	&\txs
	\quad
	\strat \in \simplex(\pures)
	\text{ and }
	\supp(\strat) \subseteq \class.
\end{aligned}
\end{equation}
We will proceed to show that the (unique) solution of \eqref{eq:opt-class} is given by the vector of conditional probabilities $(\choice[\pure \vert \class](\score))_{\pure\in\pures}$.
The expression \eqref{eq:ent-max} for the maximal value of \eqref{eq:opt-class} will then be derived from \cref{prop:nest2cond}, and the differential representation \eqref{eq:ent-diff} will follow from Legendre's identity.
We make all this precise in a series of individual steps below.

\para{Step 1: Optimality conditions for \eqref{eq:opt-class}}

Fix an alternative $\pure\in\class$ with lineage $\{\pure\} \equiv \class\atlvl{\nLvls} \childof \dotsb \childof \class\atlvl{\lvl} \equiv \class$.
Then the definition of the nested entropy gives
\begin{align}
\label{eq:nest-grad}
\frac{\pd\hreg_{\class}}{\pd\strat_{\pure}}
	&= \sum_{\lvlalt=\lvl}^{\nLvls} \diff\atlvl{\lvlalt}
		\;\sum_{\mathclap{\classalt \desceq[\lvlalt] \class}}\;
			\frac{\pd}{\pd\strat_{\pure}}
				\parens{\strat_{\classalt} \log\strat_{\classalt}}
	= \sum_{\lvlalt=\lvl}^{\nLvls} \diff\atlvl{\lvlalt}
		\;\sum_{\mathclap{\classalt \desceq[\lvlalt] \class}}\;
			(1 + \log\strat_{\classalt})
			\frac{\pd\strat_{\classalt}}{\pd\strat_{\pure}}
	\notag\\
	&= \sum_{\lvlalt=\lvl}^{\nLvls} \diff\atlvl{\lvlalt}
		\;\sum_{\mathclap{\classalt \desceq[\lvlalt] \class}}\;
			(1 + \log\strat_{\classalt})
			\oneof{\pure \in \classalt}
	\notag\\
	&= \sum_{\lvlalt=\lvl}^{\nLvls} \diff\atlvl{\lvlalt}
				(1 + \log\strat_{\class\atlvl{\lvlalt}})
	\notag\\
	&= \temp\atlvl{\lvl}
		+ \sum_{\lvlalt=\lvl}^{\nLvls} \diff\atlvl{\lvlalt} \log\strat_{\class\atlvl{\lvlalt}}
\end{align}
This implies that $\pd_{\pure}\hreg_{\class}(\strat) \to -\infty$ whenever $\strat_{\pure}\to0$, so any solution $\strat$ of \eqref{eq:opt-class} must have $\strat_{\pure} > 0$ for all $\pure\in\class$.
In view of this, the first-order optimality conditions for \eqref{eq:opt-class} become
\begin{equation}
\label{eq:KKT}
\score_{\pure}
	- \frac{\pd\hreg_{\class}}{\pd\strat_{\pure}}
	= \score_{\pure}
		- \temp\atlvl{\lvl}
		- \sum_{\lvlalt=\lvl}^{\nLvls} \diff\atlvl{\lvlalt} \log\strat_{\class\atlvl{\lvlalt}}
	= \coef
	\quad
	\text{for all $\pure\in\class$},
\end{equation}
where $\coef$ is the Lagrange multiplier for the equality constraint $\sum_{\pure\in\pures} \strat_{\pure} = 1$.%
\footnote{Since $\strat_{\pure} > 0$ for all $\pure\in\class$, 
the multipliers for the corresponding inequality constraints all vanish by complementary slackness.}
Thus, after rearranging terms and exponentiating, we get
\begin{equation}
\label{eq:chain}
\strat_{\class\atlvl{\nLvls}}^{\diff\atlvl{\nLvls}}
	\cdot \strat_{\class\atlvl{\nLvls-1}}^{\diff\atlvl{\nLvls-1}}
	\dotsm
	\strat_{\class\atlvl{\lvl}}^{\diff\atlvl{\lvl}}
	= \frac{\exp(\score_{\pure})}{\pf},
\end{equation}
for some proportionality constant $\pf \equiv \pf(\score) > 0$.

\para{Step 2: Solving \eqref{eq:opt-class}}

The next step of our proof will focus on unrolling the chain \eqref{eq:chain}, one attribute at a time.
To start, recall that $\diff\atlvl{\nLvls} = \temp\atlvl{\nLvls}$, so \eqref{eq:chain} becomes
\begin{equation}
\label{eq:chain1}
\strat_{\class\atlvl{\nLvls}}
	\cdot \strat_{\class\atlvl{\nLvls-1}}^{\diff\atlvl{\nLvls-1} / \temp\atlvl{\nLvls}}
	\dotsm \strat_{\class\atlvl{\lvl}}^{\diff\atlvl{\lvl} / \temp\atlvl{\nLvls}}
	= \frac
		{\exp(\score_{\class\atlvl{\nLvls}}/\temp\atlvl{\nLvls})}
		{\pf^{1/\temp\atlvl{\nLvls}}},
\end{equation}
where we used the fact that $\class\atlvl{\nLvls} = \pure$ by definition.
Now, since $\class\atlvl{\nLvls-1} \desceq \class\atlvl{\lvl} = \class$, it follows that all children of $\class\atlvl{\nLvls-1}$ are also desendants of $\class$, so \eqref{eq:chain1} applies to all siblings of $\class\atlvl{\nLvls}$ as well.
Hence, summing \eqref{eq:chain1} over $\class\atlvl{\nLvls} \childof \class\atlvl{\nLvls-1}$, we get
\begin{equation}
\label{eq:chain2}
\strat_{\class\atlvl{\nLvls-1}}
	\cdot \strat_{\class\atlvl{\nLvls-1}}^{\diff\atlvl{\nLvls-1} / \temp\atlvl{\nLvls}}
	\dotsm \strat_{\class\atlvl{\lvl}}^{\diff\atlvl{\lvl} / \temp\atlvl{\nLvls}}
	= \frac
		{\exp(\score_{\class\atlvl{\nLvls-1}} / \temp\atlvl{\nLvls})}
		{\pf^{1/\temp\atlvl{\nLvls}}},
\end{equation}
where we used
the definition \eqref{eq:strat-class} of $\strat_{\class\atlvl{\nLvls-1}} = \sum_{\class\atlvl{\nLvls} \childof \class\atlvl{\nLvls-1}} \strat_{\class\atlvl{\nLvls}}$
and the recursive definition \eqref{eq:score-class} for $\score_{\class\atlvl{\nLvls-1}}$, \ie the fact that
$\exp(\score_{\class\atlvl{\nLvls-1}} / \temp\atlvl{\nLvls}) = \sum_{\class\atlvl{\nLvls} \childof \class\atlvl{\nLvls-1}} \exp(\score_{\class\atlvl{\nLvls}} / \temp\atlvl{\nLvls})$.
Therefore, noting that
\begin{equation}
1 + \frac{\diff\atlvl{\nLvls-1}}{\temp\atlvl{\nLvls}}
	= 1 + \frac{\temp\atlvl{\nLvls-1} - \temp\atlvl{\nLvls}}{\temp\atlvl{\nLvls}}
	= \frac{\temp\atlvl{\nLvls-1}}{\temp\atlvl{\nLvls}}
\end{equation}
the product \eqref{eq:chain2} becomes
\begin{equation}
\strat_{\class\atlvl{\nLvls-1}}^{\temp\atlvl{\nLvls-1}}
	\cdot \strat_{\class\atlvl{\nLvls-2}}^{\diff\atlvl{\nLvls-2}}
	\dotsm \strat_{\class\atlvl{\lvl}}^{\diff\atlvl{\lvl}}
	= \frac
		{\exp(\score_{\class\atlvl{\nLvls-1}})}
		{\pf}
\end{equation}
or, equivalently
\begin{equation}
\label{eq:chain3}
\strat_{\class\atlvl{\nLvls-1}}
	\cdot \strat_{\class\atlvl{\nLvls-2}}^{\diff\atlvl{\nLvls-2} / \temp\atlvl{\nLvls-1}}
	\dotsm \strat_{\class\atlvl{\lvl}}^{\diff\atlvl{\lvl} / \temp\atlvl{\nLvls-1}}
	= \frac
		{\exp(\score_{\class\atlvl{\nLvls-1}} / \temp\atlvl{\nLvls-1})}
		{\pf^{1/\temp\atlvl{\nLvls-1}}}.
\end{equation}
This last equation has the same form as \eqref{eq:chain2} applied to the chain $\class\atlvl{\lvl} \parentof \class\atlvl{\lvl+1} \parentof \dotsm \parentof \class\atlvl{\nLvls-1}$ instead of $\class\atlvl{\lvl} \parentof \class\atlvl{\lvl+1} \parentof \dotsm \parentof \class\atlvl{\nLvls}$.
Thus, proceeding inductively, we conclude that
\begin{equation}
\label{eq:chain4}
\strat_{\class\atlvl{\lvlalt}}^{\temp\atlvl{\lvlalt}} \prod_{j=\lvlalt-1}^{\lvl} \strat_{\class\atlvl{j}}^{\diff\atlvl{j}}
	= \frac{\exp(\score_{\class\atlvl{\lvlalt}})}{\pf}
	\quad
	\text{for all $\lvlalt=\nLvls,\dotsc,\lvl$}
\end{equation}
with the empty product $\prod_{j\in\varnothing} \strat_{\class\atlvl{j}}^{\diff\atlvl{j}}$ taken equal to $1$ by standard convention.

Now, substituting $\lvlalt \gets \lvlalt+1$ in \eqref{eq:chain4}, we readily get
\begin{equation}
\label{eq:chain5}
\strat_{\class\atlvl{\lvlalt+1}}^{\temp\atlvl{\lvlalt+1}}
	\cdot \strat_{\class\atlvl{\lvlalt}}^{\diff\atlvl{\lvlalt}} \prod_{j=\lvlalt-1}^{\lvl} \strat_{\class\atlvl{j}}^{\diff\atlvl{j}}
	= \frac{\exp(\score_{\class\atlvl{\lvlalt+1}})}{\pf}
	\quad
	\text{for all $\lvlalt=\nLvls-1,\dotsc,\lvl$}.
\end{equation}
Consequently, recalling that $\diff\atlvl{\lvlalt} = \temp\atlvl{\lvlalt} - \temp\atlvl{\lvlalt+1}$ and dividing \eqref{eq:chain4} by \eqref{eq:chain5}, we get
\begin{equation}
\label{eq:chain6}
\frac
	{\strat_{\class\atlvl{\lvlalt+1}}^{\temp\atlvl{\lvlalt+1}}}
	{\strat_{\class\atlvl{\lvlalt}}^{\temp\atlvl{\lvlalt+1}}}
	= \frac
		{\exp(\score_{\class\atlvl{\lvlalt+1}})}
		{\exp(\score_{\class\atlvl{\lvlalt}})},
\end{equation}
and hence
\begin{equation}
\label{eq:chain-cond}
\frac
	{\strat_{\class\atlvl{\lvlalt+1}}}
	{\strat_{\class\atlvl{\lvlalt}}}
	= \frac
		{\exp(\score_{\class\atlvl{\lvlalt+1}} / \temp\atlvl{\lvlalt+1})}
		{\exp(\score_{\class\atlvl{\lvlalt}} / \temp\atlvl{\lvlalt+1})}
	= \choice[\class\atlvl{\lvlalt+1} \vert \class\atlvl{\lvlalt}](\score)
\end{equation}
by the definition of the conditional logit choice model \eqref{eq:NLC-cond}.
Thus, by unrolling the chain
\begin{align}
\label{eq:chain7}
\strat_{\pure \vert \class}
	= \frac{\strat_{\pure}}{\strat_{\class}}
	&= \frac{\strat_{\class\atlvl{\nLvls}}}{\strat_{\class\atlvl{\nLvls-1}}}
		\cdot \frac{\strat_{\class\atlvl{\nLvls-1}}}{\strat_{\class\atlvl{\nLvls-2}}}
		\dotsm \frac{\strat_{\class\atlvl{\lvl+1}}}{\strat_{\class\atlvl{\lvl}}}
	\notag\\
	&= \choice[\class\atlvl{\nLvls} \vert \class\atlvl{\nLvls-1}](\score)
		\times \choice[\class\atlvl{\nLvls-1} \vert \class\atlvl{\nLvls-2}](\score)
		\times \dotsm
		\times \choice[\class\atlvl{\lvl+1} \vert \class\atlvl{\lvl}](\score)
\end{align}
we obtain the nested expression
\begin{equation}
\strat_{\pure}
	= \strat_{\class} \prod_{\lvlalt=\lvl}^{\nLvls-1} \choice[\class\atlvl{\lvlalt+1} \vert \class\atlvl{\lvlalt}](\score)
	\quad
	\text{for all $\pure\in\class$}.
\end{equation}
Hence, with $\strat_{\class} = 1$ (by the fact that $\supp(\strat) = \class$), we finally conclude that
\begin{equation}
\label{eq:strat-anc}
\strat_{\pure}
	= \prod_{\lvlalt=\lvl}^{\nLvls-1} \choice[\class\atlvl{\lvlalt+1} \vert \class\atlvl{\lvlalt}](\score)
	= \choice[\pure \vert \class](\score)
	\quad
	\text{for all $\pure\in\class$}.
\end{equation}

\para{Step 3: The maximal value of \eqref{eq:opt-class}}

To obtain the value of the maximization problem \eqref{eq:opt-class}, we will proceed to substitute \eqref{eq:strat-anc} in the expression \eqref{eq:nest2cond} provided by \cref{prop:nest2cond} for $\hreg_{\class}(\strat)$.
To that end, for all $\lvlalt = \lvl,\dotsc,\nLvls-1$ and all $\class\atlvl{\lvlalt} \desceq[\lvlalt] \class$, the definition \eqref{eq:entropy-cond} of the conditional entropy gives:
\begin{align}
\hreg(\strat \vert \class\atlvl{\lvlalt})
	&= \temp\atlvl{\lvlalt+1} \, \strat_{\class\atlvl{\lvlalt}}
		\sum_{\class\atlvl{\lvlalt+1} \childof \class\atlvl{\lvlalt}}
			\strat_{\class\atlvl{\lvlalt+1} \vert \class\atlvl{\lvlalt}}
			\log \strat_{\class\atlvl{\lvlalt+1} \vert \class\atlvl{\lvlalt}}
	\explain{by definition}
	\\
	&= \temp\atlvl{\lvlalt+1} \, \strat_{\class\atlvl{\lvlalt}}
		\sum_{\class\atlvl{\lvlalt+1} \childof \class\atlvl{\lvlalt}}
			\strat_{\class\atlvl{\lvlalt+1} \vert \class\atlvl{\lvlalt}}
			\log \frac
				{\exp(\score_{\class\atlvl{\lvlalt+1}} / \temp\atlvl{\lvlalt+1})}
				{\exp(\score_{\class\atlvl{\lvlalt}} / \temp\atlvl{\lvlalt+1})}
	\explain{by \eqref{eq:chain-cond}}
	\\
	&= \strat_{\class\atlvl{\lvlalt}}
		\sum_{\class\atlvl{\lvlalt+1} \childof \class\atlvl{\lvlalt}}
			\strat_{\class\atlvl{\lvlalt+1} \vert \class\atlvl{\lvlalt}}
			\score_{\class\atlvl{\lvlalt+1}}
		- \strat_{\class\atlvl{\lvlalt}} \score_{\class\atlvl{\lvlalt}}
	\explain{since $\sum_{\class\atlvl{\lvlalt+1} \childof \class\atlvl{\lvlalt}} \strat_{\class\atlvl{\lvlalt+1} \vert \class\atlvl{\lvlalt}} = 1$}\\
	&= \sum_{\class\atlvl{\lvlalt+1} \childof \class\atlvl{\lvlalt}}
			\strat_{\class\atlvl{\lvlalt+1}}
			\score_{\class\atlvl{\lvlalt+1}}
		- \strat_{\class\atlvl{\lvlalt}} \score_{\class\atlvl{\lvlalt}}
\end{align}
and hence
\begin{align}
\label{eq:ent-cond}
\sum_{\class\atlvl{\lvlalt} \desceq[\lvlalt] \class}
	\hreg(\strat \vert \class\atlvl{\lvlalt})
	&= \sum_{\class\atlvl{\lvlalt} \desceq[\lvlalt] \class}
		\bracks*{
			\sum_{\class\atlvl{\lvlalt+1} \childof \class\atlvl{\lvlalt}}
				\strat_{\class\atlvl{\lvlalt+1}}
				\score_{\class\atlvl{\lvlalt+1}}
			- \strat_{\class\atlvl{\lvlalt}} \score_{\class\atlvl{\lvlalt}}
			}
	\notag\\
	&= \sum_{\class\atlvl{\lvlalt+1} \desceq[\lvlalt+1] \class}
			\strat_{\class\atlvl{\lvlalt+1}}
			\score_{\class\atlvl{\lvlalt+1}}
		- \sum_{\class\atlvl{\lvlalt} \desceq[\lvlalt] \class}
			\strat_{\class\atlvl{\lvlalt}}
			\score_{\class\atlvl{\lvlalt}}.
\end{align}
Thus, telescoping this last releation over $\lvlalt = \lvl,\dotsc,\nLvls$ and invoking \cref{prop:nest2cond}, we obtain:
\begin{align}
\hreg_{\class}(\strat)
	&= \sum_{\classalt \desceq \class} \hreg(\strat \vert \classalt)
		+ \temp\atlvl{\lvlalt} \, \strat_{\class} \log\strat_{\class}
	\explain{by \cref{prop:nest2cond}}
	\\
	&= \sum_{\lvlalt = \lvl}^{\nLvls-1} \sum_{\class\atlvl{\lvlalt} \desceq[\lvlalt] \class}
		\hreg(\strat \vert \class\atlvl{\lvlalt})
	\explain{collect parent classes}
	\\
	&= \sum_{\lvlalt = \lvl}^{\nLvls-1}
		\bracks*{
			\sum_{\class\atlvl{\lvlalt+1} \desceq[\lvlalt+1] \class}
				\strat_{\class\atlvl{\lvlalt+1}}
				\score_{\class\atlvl{\lvlalt+1}}
			- \sum_{\class\atlvl{\lvlalt} \desceq[\lvlalt] \class}
				\strat_{\class\atlvl{\lvlalt}}
				\score_{\class\atlvl{\lvlalt}}
				}
	\explain{by \eqref{eq:ent-cond}}
	\\
	&= \braket{\score}{\strat} - \strat_{\class} \score_{\class}
\end{align}
where, in the second line, we used the fact that the conditional entropy $\hreg(\strat \vert \class\atlvl{\nLvls})$ relative to any childless class $\class\atlvl{\nLvls} \in \classes\atlvl{\nLvls}$ is zero by definition.
Accordingly, substituting back to \eqref{eq:opt-class} we conclude that
\begin{equation}
\val\eqref{eq:opt-class}
	= \braket{\score}{\strat} - \hreg_{\class}(\strat)
	= \strat_{\class} \score_{\class}
	= \score_{\class},
\end{equation}
as claimed.

\para{Step 4: Differential representation of conditional probabilities}

To prove the second part of the proposition, recall that the restricted entropy function $\hreg_{\vert\class}$ is convex, and let
\begin{equation}
\label{eq:hconj}
\hconj_{\vert\class}(\score)
	= \max_{\strat \in \simplex(\pures)}
		\braces{\braket{\score}{\strat} - \hreg_{\vert\class}(\strat)}
\end{equation}
denote its convex conjugate.%
\footnote{Note here that $\hconj_{\vert\class}(\score)$ is bounded from above by the convex conjugate $\hconj_{\class}(\score)$ of $\hreg_{\class}(\strat)$ because the latter does not include the constraint $\supp(\strat) \subseteq \class$.}
By standard results in convex analysis \citep[Theorem 23.5]{Roc70}, $\hconj_{\vert\class}$ is differentiable in $\score$ and we have the Legendre identity:
\begin{equation}
\label{eq:Legendre}
\strat
	= \nabla\hconj_{\vert\class}(\score)
	\iff
\score
	\in \subd\hreg_{\vert\class}(\strat)
	\iff
\strat
	\in \argmax_{\stratalt\in\simplex(\pures)}
		\braces{\braket{\score}{\stratalt} - \hreg_{\vert\class}(\stratalt)}
\end{equation}
Now, by \eqref{eq:strat-anc}, we have $\strat_{\pure} = \choice[\pure \vert \class](\score)$ whenever $\strat$ solves \eqref{eq:opt-class} and hence, by Fermat's rule, whenever $\score - \subd\hreg_{\vert\class}(\strat) \ni 0$.
Our claim then follows by noting that $\hconj_{\vert\class}(\score) = \score_{\class}$ and combining the first and third legs of the equivalence \eqref{eq:Legendre}.
\end{proof}

The last remaining step is now trivial.

\begin{proof}[Proof of \cref{thm:NLC}]
Simply invoke \cref{prop:NLC} with $\class \gets \pures$.
\end{proof}

\section*{Acknowledgments}
\begingroup
\small
%
%
This research was supported in part by 
the French National Research Agency (ANR) in the framework of
the PEPR IA FOUNDRY project (ANR-23-PEIA-0003),
the ``Investissements d'avenir'' program (ANR-15-IDEX-02),
the LabEx PERSYVAL (ANR-11-LABX-0025-01),
and
MIAI@Grenoble Alpes (ANR-19-P3IA-0003).
PM is also a member of the Archimedes/Athena RC, and was partially supported by project MIS 5154714 of the National Recovery and Resilience Plan Greece 2.0 funded by the European Union under the NextGenerationEU Program.WHS was supported by U.S. NSF Grant SES\textendash1458992 and U.S. ARO Grant MSN201957.
\endgroup

\bibliographystyle{icml}
\bibliography{Bibliography,IEEEabrv}

\end{document}